\theoremstyle{Theorem}\newtheorem{proposition}{Proposition}[section]
\theoremstyle{Remark}\newtheorem{remark}{Remark}[section]
\theoremstyle{Theorem}\newtheorem{theorem}{Theorem}[section]
\DeclarePairedDelimiter\ave{\langle}{\rangle}
\newcommand{\R}{\mathbb{R}}
\newcommand{\hv}{\hat{v}}
\newcommand{\tv}{\tilde{v}}
\newcommand{\hf}{\hat{f}}
\newcommand{\tf}{\tilde{f}}
\newcommand{\barf}{\rho}
\newcommand{\hphi}{\hat{\varphi}}
\newcommand{\tphi}{\tilde{\varphi}}
\newcommand{\hV}{\hat{V}}
\newcommand{\tV}{\tilde{V}}
\newcommand{\hmu}{\hat{\mu}}
\newcommand{\tmu}{\tilde{\mu}}
\newcommand{\intS}{\int_{\mathbb{S}^{d-1}}}
\newcommand{\intR}{\int_{\R_+}}
\newcommand{\intV}{\int_{\mathcal{V}}}
\newcommand \commentout[1] {}
\newenvironment{sistem}%
{\left\lbrace\begin{array}{@{}l@{}}}%
{\end{array}\right.}
\begin{document}
\title{A novel linear transport model with distinct scattering mechanisms for direction and speed}
\author{Martina Conte\thanks{\texttt{Corresponding author: martina.conte@polito.it}} \and Nadia Loy\thanks{\texttt{nadia.loy@polito.it}}}
\date{\small Department of Mathematical Sciences ``G. L. Lagrange'' \\ Politecnico di Torino, Italy}

\maketitle

\begin{abstract}
We introduce a novel linear transport equation that models the evolution of a one-particle distribution subject to free transport and two distinct scattering mechanisms: one affecting the particle's speed and the other its direction. These scattering processes occur at different time scales and with different intensities, leading to a kinetic equation where the total scattering operator is the sum of two separate operators. Each of them depends not only on the kernel characterizing the corresponding scattering mechanism, but also explicitly on the marginal distribution of either the speed or the direction. Therefore, unlike classical settings, the gain terms in our operators are not tied to a fixed equilibrium distribution but evolve in time through the marginals. As a result, typical analytical tools from kinetic theory, such as equilibrium characterization, entropy methods, spectral analysis in Hilbert spaces, and Fredholm theory, are not applicable in a standard fashion. In this work, we rigorously analyze the properties of this new class of scattering operators, including the structure of their non-standard pseudo-inverses and their asymptotic behavior. We also derive macroscopic (hydrodynamic) limits under different regimes of scattering frequencies, revealing new effective equations and highlighting the interplay between speed and directional relaxation.


\medskip

\noindent{\bf Keywords:} Boltzmann-type equations, Markov jump processes, transition probabilities, transport equation, scattering mechanisms  
\medskip

\noindent{\bf Mathematics Subject Classification:} 35Q20, 35Q70, 35Q84
\end{abstract}




\section{Introduction}
The linear  transport equation, or linear Boltzmann equation, describes the evolution of a one-particle distribution which moves by free drift and undergoes scattering due to an external medium. 
Given the one particle distribution $f=f(t,x,v)$ defined for each time $t\ge 0$ on the phase space $(x,v)$, being $x\in \R^d$ the position and $v \in \mathcal{V}\subset \R^d$ the microscopic velocity, a typical form is
\begin{equation}\label{kin.eq:M}
    \partial_t f +v\cdot \nabla_x f =\mu (\rho M(v)-f)
\end{equation}
being $\rho$ the mass density. While the operator $v\cdot \nabla_x f$ implements the free drift, the stochastic process underlying the scattering operator in the right-hand side in~\eqref{kin.eq:M}, often called \textit{velocity-jump process}, is a piecewise Markov process ruled by a Poissonian renewal process of intensity (frequency) $\mu$~\cite{gardiner,Risken1996}. The new velocity $v$ is chosen according to the \textit{transition probability} (or \textit{kernel}) $M(v)$ which describes the fixed background that imposes the new reorientation velocities to the moving particle. Being a transition probability, $M:\R^d \to \R_+$ satisfies a normalization condition $\intV M(v) \, {\rm d}v =1$. Here we are assuming the simplification in which the scattering process is a memoryless Markovian process, i.e., $M$ does not depend on the previous velocity. Equation~\eqref{kin.eq:M} is sometimes referred to as a BGK-type transport equation~\cite{Cercignani,BC}.  

The linear Boltzmann equation is originally a simplification of the full Boltzmann equation, where interactions between particles are linearized, making it applicable to situations with low particle densities and weak interactions~\cite{Cercignani}. This equation has a large variety of long standing applications ranging from  gas dynamics and plasma physics, to radiation transfer and dust
particles~\cite{Cercignani,lods2004JSP,dautray1990}, but also more recent applications such as traffic flow~\cite{puppo2017KRM,visconti2017MMS}, bacteria~\cite{calvez2015KRM,bouin2015ARMA} and cell migration~\cite{stroock1974ZWVG,alt1980JMB,hillen2000SIAP,hillen2006JMB}. 

Mathematically, this equation has been widely studied, both from the point of view of existence and uniqueness of the solution~\cite{pettersson1983existence}, as well as decay to equilibrium~\cite{pettersson1992convergence,pettersson1993weak,Bisi2014EntropyDE,canizo2020KRM,desvillettes2006}. Macroscopic (or hydrodynamic) limits in diffusive, hyperbolic, and fractional regimes have also been derived (see e.g.~\cite{hillen2000SIAP,hillen2006JMB,degond2000IUMJ,mellet2008ARMA}). 

One of the main mathematical issues is the interplay between the free transport operator $v\cdot \nabla_x$ and the scattering process embodied by the following \textit{scattering operator} defined on the space of $L^2$ functions on $\mathcal{V}$:
\begin{equation}\label{operator.proto}
L: L^2(\mathcal{V}) \longmapsto L^2(\mathcal{V}), \qquad Lf=\rho M(v) -f,
\end{equation}
which conserves the mass as $M$ is normalized to one. The transition probability $M$ defines the \textit{equilibrium} of~\eqref{kin.eq:M} making the operator vanish, as its kernel is $\ker(L) = {\rm span} \lbrace M\rbrace$. The operator~\eqref{operator.proto} is typically studied on the following $L^2$ Hilbert space
\begin{equation*}\label{Hilbert.proto}
L_M^2(\mathcal{V}):= \left\{ \varphi \in L^2(\mathcal{V}) : || \varphi||^2_{L^2_M}:= \intV \varphi^2(v) \dfrac{1}{M(v)} \, {\rm d} v <  \infty \right\},
\end{equation*}
where the Fredholm alternative allows to state that the $L_M^2(\mathcal{V})$ space can be orthogonally partitioned between the equilibria carrying all the mass and the functions with zero mass, i.e.
\begin{equation}\label{L2M}
 L_M^2(\mathcal{V}) = {\rm span} (M) \oplus   {\rm span} (M)^{\bot}
\end{equation}
being
\begin{equation}\label{spanM}
    {\rm span} (M)^{\bot} = \left\{ \varphi \in L^2(\mathcal{V}) : \intV \varphi \, {\rm d} v = 0 \right\}
\end{equation}
the space of functions with vanishing mass.
As a consequence,  for $f \in {\rm span} (M)^{\bot}$,  $Lf$ has no gain term as $\rho=0$; therefore~\eqref{operator.proto} can be easily inverted on $ {\rm span} (M)^{\bot}$.

As a matter of fact, equation~\eqref{kin.eq:M} describes a scattering process in which the whole microscopic velocity $v$ changes, i.e., both direction and speed change at times ruled by the same Poisson process of intensity $\mu$ and according to a transition probability which essentially regulates the reorientation mechanism. However, prototype migration modes concerning cell motility have shown the importance of having two different backgrounds causing the scattering of the direction  $\hv \in \mathbb{S}^{d-1}$, and of the speed $\tv \in \R_+$.  In fact, cells and bacteria move by run-and-tumble, i.e., they run over straight lines and then reorientate changing their direction of motion $\hv$, that may be chosen according to an external bias.  This has been effectively modeled using velocity-jump processes by means of kinetic equations such as~\eqref{kin.eq:M}, as presented in ~\cite{stroock1974ZWVG,hillen2006JMB,calvez2015KRM,Filbet_Perthame}. In this context, $M$ describes the reorientation probability, often affected by the presence of a directional, or \textit{tactic}, cue, such as a chemoattractant or the distribution of fibers in a directed environment. However, cells may undergo not only \textit{taxis}, i.e. `directed orientation reactions', but also \textit{kineses}, which is defined as an `undirected locomotory reaction  in which the speed of movement (...) depends on the intensity of
stimulation'~\cite{Doucet1990}. Indeed, the speed may change because of external factors, which may be cues of chemical but also mechanical origin, such as a viscous environment or the presence of other cells. Moreover, many tactic and kinetic cues may be present at the same time in the cellular environment. A prototype example is represented by cell migration on the \textit{extracellular-matrix} (ECM), which is of the utmost importance as it regulates most of the pathological and physiological processes, from wound healing to cancer spread~\cite{Taufalele2019PLOSONE,provenzano2008collagen}. The ECM is the non-cellular component of the human organism which is composed mainly by a fibrous collagen. 
Cells moving on the ECM reorientate because of the presence of fibers on which they crawl, and their speed may be modulated by the matrix density, stiffness, viscosity and porosity~\cite{charras2014nat}.
In this case, an appropriate modeling choice would be  $\mathcal{V}=\R_+\times\mathbb{S}^{d-1}$ so that $M$ depends on the couple $(\tv,\hv)$ instead of the velocity vector $v=\tv\hv$. Specifically, many individual microscopic mechanisms in cell migration, among which migration on ECM, have been reproduced by considering a scattering kernel of the form
\begin{equation}\label{def:M}
    M(\tv,\hv)=\psi(\tv|\hv) q(\hv)\,.
\end{equation}
In~\eqref{def:M}, $q(\hv)$ describes a directed fixed background related to the change of the orientation $\hv$ (e.g., in the case of ECM, the fibers distribution, as introduced in~\cite{hillen2006JMB}, which can be estimated from experimental data). Instead, $\psi(\tv|\hv)$  describes the distribution of the new speeds $\tv$ on a fixed migration direction~\cite{loy2019JMB,conte2023SIAP}, which may be modulated, for example, by the ECM density and that can be also inferred from experiments.   

Furthermore, experimental observations show that not only there may be superposing tactic and kinetic cues in the same environment, but also taxis and kineses in response to different factors may happen at \textit{distinct times} and with \textit{different intensities}. In fact, the physico-mechanical cues of the ECM may all feature a different variability with respect to the fibrous structure which regulates the direction of motion~\cite{Taufalele2019PLOSONE,Condor,Lang,Yamada2019IJEP}. Then, not only different cues affect the dynamics of the direction and of the speed, but also different time scales are involved in the two processes. For instance, for breast cancer the ECM is considered a prognostic factor as it plays a promoting role in local invasion, both through tactic~\cite{provenzano2006collagen} and kinetic~\cite{provenzano2008collagen} cues. In the context of brain tumors, for which the key spread process is migration along the fibrous white matter measurable with MRI data~\cite{conte2020glioma}, novel therapeutic strategies can be seen as acting on the motility of cells by reducing their speed along favorable directions and promoting a more isotropic spread \cite{brooks2021white}.

Motivated by this, we propose a new kinetic model implementing distinct stochastic processes for the speed and for the direction. These two superposing microscopic dynamics give rise to a transport equation featuring an operator given by is the sum of two separate scattering operators involving the marginal distributions of $f$ along with the scattering kernels for the two mechanisms. Specifically, our novel transport equation will not be in the form~\eqref{kin.eq:M}, in which the gain term is essentially defined by the equilibrium. Instead, it will be a time-dependent (through the marginals) gain term which is not directly defined by a fixed equilibrium distribution. This technical issue will pose many challenges which extend beyond the determination of the explicit equilibrium of the operators, their sum, and the \textit{marginal operators} appearing in the equations for the marginals. In fact, the marginal operators will not map the same functional spaces, so that even though the marginal equilibria allow to define scalar products and Hilbert spaces and the subsequent Fredholm alternative, not all gain terms vanish on the orthogonal space to the one generated by the equilibrium. This will require to define non-standard pseudo-inverse operators. Furthermore, our approach allows to investigate macroscopic limits in different regimes of time scales for the two processes. Some of these issues are marginally tackled in~\cite{lorenzi2025}, where the migrating particle has a microscopic velocity and a phenotype, and the transport equation for the one particle distribution features two distinct operators describing the dynamics of the velocity vector and of the phenotype. In the present work, the truly non-standard difficulty is due to the evolution of the speed and of the direction caused by two different processes.

In section \ref{sec:micro_des} we shall propose a formal derivation of the novel kinetic equation~\eqref{eq:kin_weak} from discrete-time stochastic processes. We study the strong equation~\eqref{eq:kin_strong} with semigroup theory and use the results in order to give an insight on the continuous stochastic trajectories of the particles. Then, in section \ref{sec:operators}, we shall focus on the properties of the scattering operator, on its kernel, on the Fredholm-alternative and on their inversion. Some preliminary results on the entropy decay in the homogeneous case will be presented. In section \ref{sec:macro_lim}, we shall investigate macroscopic limits of the kinetic equation, by considering different scalings defined by different orders of frequencies of the two processes. Finally, the results will be illustrated by means of numerical tests in section \ref{sec:num_tests}.  


\section{Microscopic stochastic description}\label{sec:micro_des}
In this section we present a formal derivation of a kinetic equation implementing a microscopic dynamics in which the speed and direction change at times that are ruled by two independent stochastic processes. The microscopic dynamics is described in terms of discrete-time stochastic processes, as it is customary in the literature of multi-agent systems~\cite{pareschi2013BOOK}. The analysis of the kinetic equation by means of semigroup theory allows to give an insight on the continuous-time stochastic individual trajectories.
\subsection{Formal derivation from discrete-time stochastic processes}
 Let us consider a particle identified by its position $x\in \mathbb{R}^d$, speed $\tv \in \R_+$, and direction of motion $\hv \in \mathbb{S}^{d-1}$, being $d$ the space dimension.  The notation $v:= \tv \hv \in \mathbb{R}^d$
defines $v$ to be the velocity vector which clearly has magnitude $\tv$ and direction (or orientation) $\hv$.

We then introduce the corresponding random variables $X_t \in \R^d $, $ \tV_t \in \R_+$, $\hV_t \in \mathbb{S}^{d-1} $ describing the position, speed and direction, respectively. Given a time step $\Delta t >0$ we consider the discrete-time domain $D_{\Delta t}:=\lbrace k\Delta t, k\in \mathbb{N}\rbrace$, and we express the stochastic processes 
\begin{align}
&X_{t+\Delta t}=X_t+\Delta t \tV_t \hV_t,\label{def:micro.X}\\[0.2cm]
&\tV_{t+\Delta t}=(1-\tilde{\Theta}_{t+\Delta t}) \tV_t+\tilde{\Theta}_{t+\Delta t}\tV_{t+\Delta t}',\label{def:micro.V}\\[0.2cm]
&\hV_{t+\Delta t}=(1-\hat{\Theta}_{t+\Delta t})\hV_t+\hat{\Theta} \hV_{t+\Delta t}',\label{def:micro.hV}
\end{align}
that are defined for $t\in D_{\Delta t}$ given the initial condition $(X_0,\tV_0,\hV_0)=(x_0,\tv_0,\hv_0)$. The stochastic process~\eqref{def:micro.X} describes the time evolution of the position $X_t$ at time $t$ into a new one $X_{t+\Delta t}$ at time $t+\Delta t$ that is determined by the free drift performed by the particle during the time interval $\Delta t$ along the constant direction $\hV_t$ and with constant speed $ \tV_t$, being $\hV_t, \tV_t$ the direction and speed at time $t$, respectively. The time evolution of the speed and of the direction are both due to jump-processes that we assume to happen at stochastically \textit{independent times}. Specifically, the direction changes in consequence of a \textit{reorientation} process (also called \textit{direction-jump}), which depends on an external cue, while the speed changes according to a \textit{speed-jump} process influenced by the external environment, which may modulate the speed of motion. Being the time interval $\Delta t$ fixed, both the speed $\tV_{t+\Delta t}$ and the direction $\hV_{t+\Delta t}$ at time $t+\Delta t$ may change or not into a new one with respect to the departing speed $ \tV_t$ and direction $\hV_t$. In~\eqref{def:micro.V}-\eqref{def:micro.hV} this is expressed by $\tilde{\Theta}_{t+\Delta t}$ and $\hat{\Theta}_{t+\Delta t}$ which are Bernoulli random variables that are sampled at each time iteration $t+\Delta t, t\in D_{\Delta t}$. For each sample time $t \in D_{\Delta t}$ we assume that
\begin{equation}\label{def:Berny}
\tilde{\Theta}_t \sim {\rm Bernoulli} ( \tmu \Delta t), \quad \hat{\Theta}_t \sim {\rm Bernoulli} (\hmu \Delta t), \qquad \forall t \in D_{\Delta t}
\end{equation} 
where $\tmu, \hmu\in \R_{+}$ are the positive frequencies of the two Bernoulli trial processes. We here remark that for the good definition of both Bernoulli random variables, we shall need a restriction on the time step, i.e., $\Delta t \le \min \lbrace \frac{1}{\tmu}, \frac{1}{\hmu}\rbrace$.
$\tilde{\Theta}_t, \hat{\Theta}_t$ are assumed to be independent, in order to embody the independence of the times of the two jump processes, i.e., 
\begin{equation}\label{ass:indep1}
\tilde{\Theta}_t, \quad \hat{\Theta}_t \qquad {\rm are} \, {\rm independent} \, \forall t \in D_{\Delta t}.
\end{equation}
Moreover, we assume that
\begin{equation}\label{ass:indep2}
\tV_t, \quad \tilde{\Theta}_t \qquad {\rm are} \,\, {\rm independent} \,\, \forall t \in D_{\Delta t},
\end{equation}
and that
\begin{equation}\label{ass:indep3}
\hV_t, \quad \hat{\Theta}_t \qquad {\rm are} \,\, {\rm independent} \,\, \forall t \in D_{\Delta t}.
\end{equation}

In~\eqref{def:micro.V}, $\tilde{\Theta}_{t+\Delta t}$ discriminates whether a speed-jump happens ($\tilde{\Theta}_{t+\Delta t}=1$) or not ($\tilde{\Theta}_{t+\Delta t}=0$). If it does, $\tV_{t}$ changes into $\tV_{t+\Delta t}=\tV_{t+\Delta t}'$, where $\tV_{t+\Delta t}'$ is a random variable (sampled at time $t+\Delta t$). We assume $\lbrace\tV_{t}'\rbrace_{t\in D_{\Delta t}}$ to be a homogeneous Markov process with no memory distributed according to the transition probability $\psi: \R_+ \times \mathbb{S}^{d-1}\to \R_+$, i.e.
\begin{equation}\label{def:psi}
 \tV_t' \sim \psi( \tV_t'=\tv'|\hV_t=\hv), \qquad \forall t \in D_{\Delta t}.
\end{equation}
We highlight that we are considering that the transition probability $\psi$ of $ \tV_t'$ depends on the current direction of motion $\hV_t=\hv$. 
Being $\psi$ a transition probability, we are assuming that
\begin{equation}\label{norm.psi}
\intR \psi(\tv|\hv) \, {\rm d}\tv =1, \qquad \forall \hv \in \mathbb{S}^{d-1},
\end{equation}
in such a way that, given $A \subset\R_+$, we have that ${\rm Prob}( \tV_t'\in A)= \int_A \psi(\tv|\hv) \, {\rm d}\tv.$
In particular, we are assuming that
\begin{equation}\label{psi.L1}
    \psi(\cdot|\hv) \in L^1(\R_+) \qquad \forall \hv \in \mathbb{S}^{d-1}.
\end{equation}
 
Concerning the direction, its time variation is expressed analogously by the stochastic process~\eqref{def:micro.hV}. In~\eqref{def:micro.hV}, $\hat{\Theta}_{t+\Delta t}$ discriminates whether a jump happens ($\hat{\Theta}_{t+\Delta t}=1$) or not ($\hat{\Theta}_{t+\Delta t}=0$). If it does, $\hV_{t}$ changes into $\hV_{t+\Delta t}=\hV_{t+\Delta t}'$, where $\hV_{t+\Delta t}'$ is a random variable (sampled at time $t+\Delta t$). We assume $\lbrace\hV_{t}'\rbrace_{t\in D_{\Delta t}}$ to be a homogeneous Markov process with no memory distributed according to the transition probability $q: \mathbb{S}^{d-1} \to \R_+$, i.e.,
\begin{equation}\label{def:q}
\hV_t' \sim q(\hV_t'=\hv'), \qquad \forall t \in D_{\Delta t}.
\end{equation}
Being $q$ a transition probability, we have that
\begin{equation}\label{norm.q}
\intS q(\hv) \, {\rm d} \hv =1,
\end{equation}
in such a way that, given $A \subset \mathbb{S}^{d-1}$ we have that ${\rm Prob}(\hV_t'\in A)= \int_A q(\hv) \, {\rm d}\hv.$
Note that we are assuming that
\begin{equation}\label{q.L1}
    q(\cdot) \in L^1(\mathbb{S}^{d-1}). 
\end{equation}

In order to formally derive the kinetic equation, we first introduce the kinetic one-particle density function on the phase space
\begin{equation}\label{def:f}
\begin{aligned}[b]
f: \R_+\times \R^d \times \R_+ \times \mathbb{S}^{d-1} &\longmapsto \R_+ \\[0.3cm]
(t,x, \tv, \hv) &\longmapsto f(t,x, \tv, \hv)
\end{aligned}
\end{equation}
that is, at each $(t,x)\in\R_+\times \R^d$ fixed, the joint probability density function of the couple of random variables $\tV_t, \hV_t$.
Le us now consider a test function, which is an observable quantity depending on the microscopic velocity couple, i.e., $\varphi=\varphi(\tv,\hv)$ defined on $\mathcal{V}$, being $\mathcal{V}= \R_+ \times \mathbb{S}^{d-1}$. We assume that $\varphi \in \mathcal{C}_c(\mathcal{V})$ with compact support. The evolution equation for the expected value of $\varphi$ along the stochastic trajectories defined by~\eqref{def:micro.X}-\eqref{def:psi} and~\eqref{def:q} in the limit~$\Delta t \to 0^+$ 
is
\begin{equation}\label{eq:kin_weak}
\begin{split}
&\dfrac{\partial}{\partial t}\intV\varphi(\tv,\hv) f(t,x,\tv,\hv) {\rm d}\hv {\rm d}\tv+\nabla_x \cdot \intV\varphi(\tv,\hv) v f(t,x,\tv,\hv) \, {\rm d}\hv \, {\rm d}\tv=\\[0.2cm]
&\phantom{\dfrac{\partial}{\partial t}\intV\varphi(\tv,\hv) f(t,x,\tv}\hmu\intV\left[\intS q(\hv')\varphi(\tv,\hv')\, {\rm d}\hv'-\varphi(\tv,\hv)\right]f(t,x,\tv,\hv) \,{\rm d}\hv \, {\rm d}\tv\\[0.2cm]
&\phantom{\dfrac{\partial}{\partial t}\intV\varphi(\tv,\hv) f(t,,}+ \tmu\intV\left[\intR\psi(\tv'|\hv)\varphi(\tv',\hv)\, {\rm d}\tv'-\varphi(\tv,\hv)\right]f(t,x,\tv,\hv) \,{\rm d}\hv \, {\rm d}\tv.
\end{split}
\end{equation}
The full formal procedure for deriving~\eqref{eq:kin_weak} is standard and it is reported in the Appendix~\ref{appendix1}.
Equation~\eqref{eq:kin_weak} is the weak kinetic equation for the probability density function $f$ for particles at $(t,x)$ having velocity $(\tv,\hv)$ evolving according to the microscopic stochastic dynamics~\eqref{def:micro.X}-\eqref{def:psi} and~\eqref{def:q} in the continuous-time limit~$\Delta t \to 0^+$. The first term on the right-hand side, i.e., the second line in~\eqref{eq:kin_weak}, describes the change in direction happening with frequency $\hmu$ and ruled by~\eqref{def:q} being $\tv$ fixed, while the second term, i.e., the third line in~\eqref{eq:kin_weak} describes the change in speed happening with frequency $\tmu$ and ruled by~\eqref{def:psi} being the direction $\hv$ fixed.

\subsection{The strong kinetic equation}
Let us now introduce the \textit{number density} $\rho$ of the one particle distribution $f$ as its zero-th order statistical moment for each $(t,x)$ fixed, i.e.,
\begin{equation}\label{def:bf}
\rho \equiv \rho(t,x):=\intV f(t,x,\tv,\hv) \, {\rm d} \tv \, {\rm d}\hv.
\end{equation}
As $f$ is a joint probability function for the couple of random variables $\tV, \hV$ at each $(t,x)$ fixed, we also introduce the normalized marginals. Let us then define the marginal of $\hV$, i.e., the function ${\rho \hf: \R_+\times \R^d \times \mathbb{S}^{d-1}\to \R_+}$ as
\begin{equation}\label{def:hf}
\rho \hf (\hv) \equiv \rho \hf(t,x,\hv):=  \intR f(t,x,\tv,\hv) \, {\rm d} \tv, 
\end{equation}
and the marginal of $\tV$, i.e., $\rho \tf: \R_+\times \R^d \times \R_+ \to \R_+$ as
\begin{equation}\label{def:tf}
 \rho  \tf(\tv) \equiv \rho \tf(t,x,\tv):= \intS f(t,x,\tv,\hv) \, {\rm d} \hv.
\end{equation}
The marginals $\hf, \tf$ are normalized probability density functions in the sense that, by definitions~\eqref{def:hf}-\eqref{def:tf} the following conditions hold:
\begin{equation}\label{norm.marg}
\intS \hf(\hv) \, {\rm d} \hv =1, \qquad \intR \tf(\tv) \, {\rm d} \tv =1.
\end{equation}
We can now write the strong form of the evolution equation~\eqref{eq:kin_weak} for $f$ defined in~\eqref{def:f}, that is
\begin{equation}\label{eq:kin_strong}
\partial_t f +v \cdot \nabla_x f = \hmu \left[\rho\tf(\tv) q(\hv) -f\right] +  \tmu\left[\rho\hf(\hv)\psi(\tv|\hv)-f\right], 
\end{equation}
which we couple with an initial condition
\begin{equation}\label{ci}
    f(0,x,\tv,\hv)\equiv f_0(x,\tv,\hv), \qquad (x,\tv,\hv) \in \R^d\times \R_+ \times \mathbb{S}^{d-1}.
\end{equation}
Equation~\eqref{eq:kin_strong} is a linear transport equation for the probability density $f$, in which in the left-hand side the operator $v\cdot \nabla_x$ describes the free particle drift performed with velocity $v=\tv \hv$. The right-hand side features two different scattering terms. The first one implements the direction-jump dynamics being the speed fixed. In fact, the gain term $\hmu \rho\tf(\tv) q(\hv)$ describes the fraction of particles $\rho \tf(\tv)$ with $\tv$ fixed that with frequency $\hmu$ acquire a new direction $\hv$ according to $q$. The second one is related to the speed-jump process being $\hv$ fixed, and its gain term $\tmu \rho\hf(\hv)\psi(\tv|\hv)$ describes the fraction of particles $\rho\hf(\hv)$ that, having direction $\hv$ fixed, change their speed with frequency $\tmu$ according to the density $\psi(\tv|\hv)$. 
It can be easily verified from~\eqref{eq:kin_weak} setting $\varphi=1$ and integrating over $\R^d$ in ${\rm d}x$ that the total number of particles is conserved in time, i.e.,
\begin{equation}\label{mass.cons}
    \int_{\R^d\times\mathcal{V}} f(t,x,\tv,\hv) \, {\rm d} \tv \, {\rm d} \hv \, {\rm d}x = \int_{\R^d\times\mathcal{V}} f_0(x,\tv,\hv) \, {\rm d} \tv \, {\rm d} \hv \, {\rm d}x.
\end{equation}

For formal purposes, let us now rewrite equation~\eqref{eq:kin_strong} as
\begin{equation}\label{eq:kin_strong.2}
\partial_t f +v \cdot \nabla_x f = \hmu \hat{\mathcal{L}}f +  \tmu\tilde{\mathcal{L}}f,   
\end{equation}
where we have introduced the \textit{direction-jump (or reorientation) operator}
\begin{equation*}\label{def:hL}
\begin{aligned}[b]
&\hat{\mathcal{L}}: L^2 (\mathcal{V}) \longmapsto L^2 (\mathcal{V})\\[0.3cm]
&\hat{\mathcal{L}}\varphi(\tv,\hv):=q(\hv)\intS \varphi(\tv,\hv') \, {\rm d}\hv' -\varphi(\tv,\hv) = \rho_\varphi\tphi(\tv) q(\hv) -\varphi(\tv,\hv),
\end{aligned}
\end{equation*}
and the \textit{speed-jump operator}
\begin{equation*}\label{def:tL}
\begin{aligned}[b]
&\tilde{\mathcal{L}}: L^2 (\mathcal{V}) \longmapsto L^2 (\mathcal{V})\\[0.3cm]
&\tilde{\mathcal{L}}\varphi(\tv,\hv):=\psi(\tv|\hv) \intR \varphi(\tv',\hv) \, {\rm d} \tv' - \varphi(\tv,\hv) = \rho_\varphi\hphi(\hv)\psi(\tv|\hv)-\varphi(\tv,\hv). 
  \end{aligned}
\end{equation*}
The quantities $\rho_\varphi, \tphi, \hphi$ are defined as in~\eqref{def:bf}-\eqref{def:hf}-\eqref{def:tf}, definitions that we report here for completeness
\begin{equation}\label{def:bthphi}
    \rho_\varphi:= \intV \varphi \, {\rm d} \tv \, {\rm d} \hv, \quad  \tphi:= \dfrac{1}{\rho_\varphi}\intS \varphi \,  {\rm d} \hv, \quad \hphi:=  \dfrac{1}{\rho_\varphi}\intR \varphi \,  {\rm d} \tv.
\end{equation}
We remark that in~\eqref{def:bf} we are denoting $\rho_f \equiv \rho$. 

Let us, eventually, define $\mathcal{L}:= \hmu \hat{\mathcal{L}} +  \tmu\tilde{\mathcal{L}}$, that is the following operator
\begin{equation}\label{def:L}
\begin{aligned}[b]
&\mathcal{L}: L^2 (\mathcal{V}) \longmapsto L^2 (\mathcal{V})\\[0.3cm]
&\mathcal{L}\varphi(\tv,\hv):=\big(\hmu\rho_\varphi \tphi(\tv) q(\hv)+\tmu\rho_\varphi \hphi(\hv)\psi(\tv|\hv)\big)-(\hmu+\tmu)\varphi. 
  \end{aligned}
\end{equation}
Thanks to~\eqref{def:L}, ~\eqref{eq:kin_strong.2} can be formally written as a linear transport kinetic equation with scattering operator $\mathcal{L}$, i.e.,
\begin{equation}\label{eq:boltzmann}
\partial_t f +v \cdot \nabla_x f=\mathcal{L}f.   
\end{equation}
However, despite its compact form, the linear equation~\eqref{eq:boltzmann} has only a formal resemblance with the linear Boltzmann equation~\eqref{kin.eq:M}. The main difference between~\eqref{eq:boltzmann}-\eqref{def:L} (i.e.,~\eqref{eq:kin_strong}) and~\eqref{kin.eq:M} lies in the structure of the gain term of the operator $\mathcal{L}$, which is not in the form $\rho M$, where $M$ is the transition probability describing the simultaneous change of the velocity couple. Conversely, the gain term in $\mathcal{L}$ depends on $f$ (and then also on time) through $\hf$ and $\tf$ and not only through $\rho$. 

It is worthwhile to stress again that the marginals $\hf$ and $\tf$ appear in the operators $\tilde{\mathcal{L}}$ and $\hat{\mathcal{L}}$, respectively, as these two operators describe the two different stochastic processes for the speed (that changes being the direction fixed) and for the direction (that changes being the speed fixed).  Let us then consider the equations for the marginals $\tf$ and $\hf$. The evolution equation for $\tf$ is obtained by integrating~\eqref{eq:kin_strong} over $\mathbb{S}^{d-1}$ that gives 
\begin{equation}\label{eq:tf}
    \partial_t (\barf \tf) +\nabla_x \cdot \left( \tv \intS \hv f \, {\rm d} \hv\right) = \tmu \barf\left[ \intS \hf(\hv) \psi(\tv|\hv) \, {\rm d} \hv-\tf \right],
\end{equation}
or, alternatively, by setting $\varphi(\tv,\hv)=\varphi(\tv)\cdot 1$ in~\eqref{eq:kin_weak} and writing the strong form with respect to $\tv$.
Analogously, the evolution equation for $\hf$ is obtained by integrating~\eqref{eq:kin_strong} over $\R_+$ that gives
\begin{equation}\label{eq:hf}
 \partial_t (\barf \hf) +\nabla_x \cdot \left( \hv \intR \tv f \, {\rm d} \tv\right) = \hmu \barf\left[ q(\hv)-\hf \right] ,   
\end{equation}
or, alternatively, by setting $\varphi(\tv,\hv)=\varphi(\hv)\cdot 1$ in~\eqref{eq:kin_weak} and writing the strong form with respect to $\hv$.
The equations for the marginals~\eqref{eq:tf}-\eqref{eq:hf} lead to define the \textit{marginal operators} 
\begin{equation}\label{def:Lpsi}
\begin{aligned}[b]
    &\tilde{\mathcal{L}}_\psi: L^2(\mathcal{V}) \longmapsto L^2(\R_+) \\
    &\tilde{\mathcal{L}}_\psi\varphi(\tv) = \tmu\intS\left(  \psi(\tv|\hv)\intR \varphi(\tv',\hv) \, {\rm d} \tv' -\varphi(\tv,\hv)\right) \, {\rm d} \hv ,
    \end{aligned}
\end{equation}
and
\begin{equation}\label{def:Lq}
    \begin{aligned}[b]
    &\hat{\mathcal{L}}_q: L^2(\mathcal{V}) \longmapsto L^2(\mathbb{S}^{d-1}) \\
    &\hat{\mathcal{L}}_q\varphi(\hv) = \hmu\intR\left(  q(\hv)\intS \varphi(\tv,\hv') \, {\rm d} \hv' -\varphi(\tv,\hv)\right) \, {\rm d} \tv, 
    \end{aligned}
\end{equation}
so that~\eqref{eq:tf}-\eqref{eq:hf} may be rewritten as
\begin{equation}\label{eq:tf.bis}
    \partial_t (\barf \tf) +\nabla_x \cdot \left( \tv \intS \hv f \, {\rm d} \hv\right) = \tilde{\mathcal{L}}_\psi f,
\end{equation}
and 
\begin{equation}\label{eq:hf.bis}
 \partial_t (\barf \hf) +\nabla_x \cdot \left( \hv \intR \tv f \, {\rm d} \tv\right) = \hat{\mathcal{L}}_qf.   
\end{equation}
We remark that the equations for the marginals~\eqref{eq:tf.bis}-\eqref{eq:hf.bis} are not closed in $\tf,\hf$ respectively. One reason lies in the presence of the transport term, that, as usual, involves higher order statistical moments of $f$. In fact,~\eqref{eq:tf.bis} features the average direction of $f$ being the transport speed $\tv$ fixed, while~\eqref{eq:hf.bis} features the average speed of $f$ being the transport direction $\hv$ fixed. However, even in the spatially homogeneous case, i.e., no transport term involved, the closure of~\eqref{eq:tf.bis}-\eqref{eq:hf.bis} would not be granted. In fact, the microscopic dynamics for the speed is ruled by the transition probability $\psi$ that depends on $\hv$. This is why the operator $\tilde{\mathcal{L}}_{\psi}$ is defined on $f$, because, as clearly written in~\eqref{eq:tf}, it features both marginals, being $\tilde{\mathcal{L}}_{\psi}f=\barf\left[ \intS \hf(\hv') \psi(\tv|\hv') \, {\rm d} \hv'-\tf \right]$. In analogy to $\tilde{\mathcal{L}}_{\psi}$, we have  defined the reorientation operator $\hat{\mathcal{L}}_q$ as a function of $f$. However, in this case as the transition probability $q$ only depends on $\hv$, then we actually have that $\hat{\mathcal{L}}_qf=\barf\left[ q(\hv)-\hf \right]$, and the homogeneous version of~\eqref{eq:hf} would be closed with respect to $\hf$. 

To this respect, we mention here that $f$ we may always be factorized as
\begin{equation}\label{f:cond}
    f(\tv,\hv)=\rho f_c(\tv|\hv) \hf(\hv)
\end{equation}
for each $(t,x)$ fixed, that we are omitting in the notation. In~\eqref{f:cond} the function $f_c=f_c(\tv|\hv)$ is the conditional probability density of the random variable $\tV$ given $\hV=\hv$. Because of the dependence of $\psi$, notwithstanding the independence of the Bernoulli trial processes for the speed and for the direction~\eqref{ass:indep1}, the distribution of the speeds is correlated to the one of the directions. In fact, the independence only concerns the stochastic distribution of the times of reorientation and of speed-jump. The marginal of the speed can be then also written as
\begin{equation}\label{ftilde_c}
    \tf(\tv) = \intS f_c(\tv|\hv) \hf(\hv) \, {\rm d} \hv.
\end{equation}


\subsection{Insight on the stochastic trajectories}
We now want to analyze the Cauchy problem~\eqref{eq:kin_strong}-\eqref{ci} by means of the perturbed semigroup theory for linear problems. Let us first choose the appropriate Banach space. Because of the conservation property~\eqref{mass.cons}, the most natural choice is
\[
Y=L^1(\R_+\times \R^d\times \mathcal{V}),
\]
so that choosing $f_0 \in Y$, the conservation property~\eqref{mass.cons} writes
\[
||f||_Y=||f_0||_Y.
\]
Let us then rewrite~\eqref{eq:kin_strong}-\eqref{ci} as the following evolution problem in $Y$
\begin{equation}\label{probl.semi}
    \begin{cases}
        \dfrac{d}{dt}f(t) = (A+Q)f(t), \qquad t>0,\\[0.3cm]
        f(0)=f_0 \in Y,
    \end{cases}
\end{equation}
where $A$ is the operator defined by
\begin{equation}\label{def:A}
    \begin{cases}
        (Af)(x,\tv,\hv) := -v\cdot \nabla_x f(x,\tv,\hv)-(\tmu+\hmu)f(x,\tv,\hv)\\[0.3cm]
        \mathcal{D}(A) = \lbrace f \in Y | v\cdot\nabla_x f \in Y\rbrace
    \end{cases}
\end{equation}
being $\mathcal{D}(A)$ the domain of $A$. The semigroup generated by $A$, that we shall indicate as $(T(t)f)$, can be formally found by solving the problem 
\begin{equation*}
    \begin{cases}
        \partial_t f(t,x,\tv,\hv) = - v\cdot \nabla_x f(t,x,\tv,\hv) -(\tmu+\hmu) f(t,x,\tv,\hv), \\[0.3cm]
        f(0,x,\tv,\hv) = f_0(x,\tv,\hv) \in Y, \quad  (x,\tv,\hv) \in \R^d\times \mathcal{V}, \, t\ge 0,
    \end{cases}
\end{equation*}
and it is well known to be the transport semigroup (that is actually a group) defined as
\begin{equation*}
    (T(t)f)(x,\tv,\hv) := e^{-(\tmu+\hmu)} f(x-tv,\tv,\hv), \qquad f \in Y.
\end{equation*}
As to the perturbation operator $Q$, it is actually defined by the sum of two operators
\begin{equation}\label{def:Q}
Q=\hat{Q}+\tilde{Q}
\end{equation}
where
\begin{equation}\label{def:hQ}
    \begin{cases}
        (\hat{Q}f)(x,\tv,\hv) := \hmu q(\hv)\displaystyle \intR f(t,x,\tv,\hv') \, {\rm d} \hv'\\[0.3cm]
        \mathcal{D}(\hat{Q}) = Y,
    \end{cases}
\end{equation}
and 
\begin{equation}\label{def:tQ}
    \begin{cases}
        (\tilde{Q}f)(x,\tv,\hv) := \tmu \psi(\tv|\hv) \displaystyle \intS f(t,x,\tv',\hv) \, {\rm d} \tv'\\[0.3cm]
        \mathcal{D}(\tilde{Q}) = Y.
    \end{cases}
\end{equation}
It can be readily seen that 
\[
||\hat{Q}f||_Y = \int_{\R^d\times\mathcal{V}} \hmu q(\hv) \intS f(t,x,\tv,\hv') \, {\rm d} \hv' {\rm d} \tv {\rm d}\hv {\rm d}x = \hmu ||f||_Y
\]
thanks to the positivity of $q$ and to~\eqref{q.L1}, and that
\[
||\tilde{Q}f||_Y = \int_{\R^d\times\mathcal{V}} \tmu \psi(\tv|\hv) \intR f(t,x,\tv',\hv) \, {\rm d} \tv' {\rm d} \tv {\rm d}\hv {\rm d}x = \tmu ||f||_Y
\]
thanks to the positivity of $\psi$ and to~\eqref{psi.L1}. As a consequence, both $\hat{Q}f$ and $\tilde{Q}f$ are defined for each $f \in Y$ and they are linear bounded operators with $||\hat{Q}|| = \hmu$ and $||\tilde{Q}||=\tmu$. Thus, $Q$ is bounded with $||Q||=\tmu+\hmu$. 
Therefore, the hypothesis of Theorem III.1.10 in~\cite{engel_nagel} are satisfied and the integral solution of the problem~\eqref{probl.semi}-\eqref{def:A}-\eqref{def:Q}-\eqref{def:hQ}-\eqref{def:tQ} is given by the Dyson-Phillips series 
\begin{equation}\label{dyson}
\begin{aligned}[b]
    f(t) &= T(t) f_0\\[0.3cm]
    &+\int_0^t T(t-s_1)QT(s_1) f_0 \, {\rm d}s_1\\[0.3cm]
    &+\int_0^t \int_0^{s_1} T(t-s_1)QT(s_1-s_2) f_0 \, {\rm d}s_2 \, {\rm d}s_1\\[0.3cm]
    &+...,
    \end{aligned}
\end{equation}
and its limit is the integral equation for $f$ given by the Duhamel formula
\begin{equation*}
\begin{aligned}[b]
    f(t,x,\tv,\hv) &= f_0(x-tv,\tv,\hv) e^{-(\tmu+\hmu)t}\\[0.3cm] 
    &+\hmu q(\hv) \int_0^t e^{-(\tmu+\hmu)(t-s)} \rho(s,x-(t-s)v)\tf(s,x-(t-s)v,\tv) \,  {\rm d}s\\[0.3cm]
    &+\tmu \psi(\tv|\hv)\int_0^t  e^{-(\tmu+\hmu)(t-s)}  \rho(s,x-(t-s)v)\hf(s,x-(t-s)v,\hv)\,  {\rm d}s.
    \end{aligned}
\end{equation*}
The terms in~\eqref{dyson} have a physical meaning that may give an insight on the stochastic trajectories that are followed by the particles. 
The first term in~\eqref{dyson} is
\[
(T(t)f_0)(x,\tv,\hv) = e^{-(\tmu+\hmu)t} f_0(x-tv,\tv,\hv) 
\]
which takes into account the particles that have reached position $x$ with velocity vector $v$ at time $t$ without reorientation or speed-jumps: they have never modified their direction $\hv$ or speed $\tv$ since $t=0$ when they where located at $x-tv$. 
The second term refers to particles which have undergone one direction-jump and one speed-jump. It is worth writing it explicitly as
\begin{equation}\label{dyson.1}
\begin{aligned}[b]
&\int_0^t T(t-s_1)QT(s_1) f_0 \, {\rm d}s_1 = \\[0.3cm]
&+ \hmu q(\hv) \int_0^t e^{-(\tmu+\hmu)(t-\hat{s}_1)}\rho(\hat{s}_1,x-(t-\hat{s}_1)v)\tf(\hat{s}_1,x-(t-\hat{s}_1)v,\tv) \,  {\rm d}\hat{s}_1\\[0.3cm]
&+\tmu \psi(\tv|\hv)\int_0^t  e^{-(\tmu+\hmu)(t-\tilde{s}_1)}  \rho(\tilde{s}_1,x-(t-\tilde{s}_1)v)\hf(\tilde{s}_1,x-(t-\tilde{s}_1)v,\hv)\,  {\rm d}\tilde{s}_1.
\end{aligned}
\end{equation}
The second line in~\eqref{dyson.1} represents the particles that have reached position $x$ with velocity $v=\tv \hv$ after having performed exactly one direction-jump at time $\hat{s}_1$ (the term is integrated over all the possible times $\hat{s}_1$). These particles have traveled with velocity $v'= \tv \hv '$ up to $\hat{s}_1$ when they have reoriented along $\hv$, keeping $\tv$ fixed. Analogously, the third line in~\eqref{dyson.1} represents the particles that have reached position $x$ with velocity $v=\tv \hv$ after having performed exactly one speed-jump at time $\tilde{s}_1$ (the term is integrated over all the possible times $\tilde{s}_1$). These particles have traveled with velocity $v'= \tv' \hv $ up to $\tilde{s}_1$ when they have changed their speed into $\tv$, keeping $\hv$ fixed. In order to fix the ideas, we can consider for a moment the possibility $\hat{s}_1>\tilde{s}_1$. This situation is represented in Figure \ref{Fig:Duamel}.
\begin{figure}[t]
\centering
    \includegraphics[width=\textwidth]{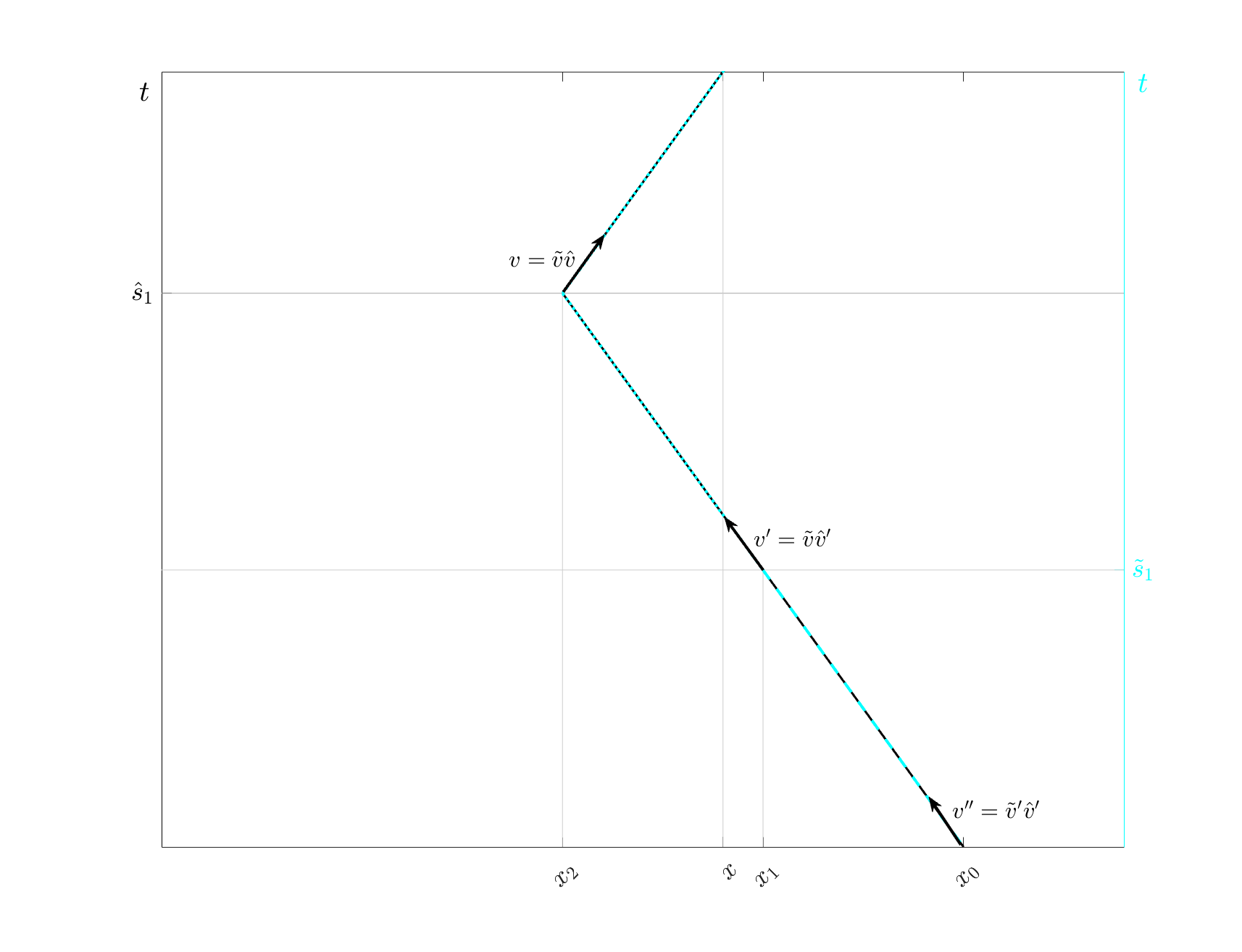}
    \caption{{\bf Graphical representation of the stochastic trajectories.} The left y-axis (black) refers to the times of the direction-jump process ($\hat{s}_1$), while the right y-axis (cyan) refers to the times of the speed-jump process ($\tilde{s}_1$). Speed changes are identified by the changes in the line style of the cyan lines (dashed or dotted lines), while direction changes are identified by the changes in orientation of the black lines. We denote the spatial point as $x_0=x-v(t-\hat{s}_1)-v'(\hat{s}_1-\tilde{s}_1)-v''\tilde{s}_1$, $x_1=x-v(t-\hat{s}_1)-v'\tilde{s}_1$, and $x_2=x-vt$.}
    \label{Fig:Duamel}
\end{figure}
At $\hat{s}_1$ the velocity $v'=\tv \hv'$ changes into $v=\tv \hv$, given that at $\tilde{s}_1$ the velocity had changed into $v'$ given the previous velocity $v''= \tv' \hv'$ held by the particle from time $0$ to $\tilde{s}_1$. Hence, those particles were located in $x-v(t-\hat{s}_1)-v'(\hat{s}_1-\tilde{s}_1)-v''\tilde{s}_1$ at time $t=0$. 

Following this interpretation, we can express the continuous-time stochastic process underlying~\eqref{eq:kin_strong} (and therefore~\eqref{eq:kin_weak}) as follows. 
Let $\lbrace \tilde{\tau}_n\rbrace_{n \in \mathbb{N}}$, $\lbrace \hat{\tau}_n\rbrace_{n \in \mathbb{N}}$ be two distinct and independent  sequences of independent identically distributed random variables in $\R_+$ with exponential law
\[
\mathbb{P}(\tilde{\tau}_n >t) = e^{-\tmu t}, \qquad \mathbb{P}(\hat{\tau}_n >t) = e^{-\hmu t}, \qquad \forall n \in \mathbb{N}.
\]
Therefore, the probability densities $p_{\tilde{\tau}}(t), p_{\hat{\tau}}(t)$ of $\tilde{\tau}_n$, $\hat{\tau}_n$ are
\[
p_{\tilde{\tau}}(t)=\tmu e^{-\tmu t}, \qquad p_{\hat{\tau}}(t)= \hmu e ^{-\hmu t}.
\]
The random variables $\tilde{\tau}_n$ and $\hat{\tau}_n$ represent the inter-arrival times in their respective renewal processes, that we shall define, respectively, as
\[
\tilde{T}_n := \sum_{i=0}^{n-1} \tilde{\tau}_i, \qquad \hat{T}_n := \sum_{i=0}^{n-1} \hat{\tau}_i.
\]
These ones are Poisson processes as the distribution of the inter-arrival times are exponential distributions.
As $\lbrace \tilde{\tau}_n\rbrace_{n \in \mathbb{N}}$, $\lbrace \hat{\tau}_n\rbrace_{n \in \mathbb{N}}$ are assumed to be independent, then $\lbrace \tilde{T}_n\rbrace_{n \in \mathbb{N}}$ and $\lbrace \hat{T}_n\rbrace_{n \in \mathbb{N}}$ are independent as well. This independence of the two renewal processes is equivalent to the one of the two trial Bernoulli processes~\eqref{ass:indep1} in the discrete-time stochastic process.

Like in the previous subsection, let us now introduce the random variables $\tV_t$, $\hV_t$ where now ${t \in \R_+}$, so that $\lbrace \tV_t \rbrace_{t\ge 0}$, $\lbrace \hV_t\rbrace_{t\ge 0}$ are continuous-time stochastic processes. Furthermore, we require that $\lbrace \tilde{\tau}_n \rbrace_{n \in \mathbb{N}}$ and $\lbrace \tV_t\rbrace_{t\ge 0}$ are independent as well as $\lbrace \hat{\tau}_n\rbrace_{n \in \mathbb{N}}$ and  $\lbrace \hV_t \rbrace_{t\ge 0}$. These two requirements have their analogous counterpart in~\eqref{ass:indep2} and~\eqref{ass:indep3}, respectively, in the discrete process. 
We here highlight the fact that $\lbrace \tV_t \rbrace_{t\ge 0}$, $\lbrace \hV_t\rbrace_{t\ge 0}$ are two piecewise deterministic Markov processes that are ruled by two different 
renewal processes  $\lbrace \tilde{T}_t \rbrace_{t\ge 0}$, $\lbrace \hat{T}_t\rbrace_{t\ge 0}$.
In order to define the stochastic trajectories in the physical space, i.e. the process $X_t$, as we have two different Poisson processes for the speed and direction, we define the renewal process for the velocity vector as
\begin{equation}\label{def:Tn}
    T_1 := \min \lbrace \tilde{T}_1, \hat{T}_1\rbrace, \quad T_n := \min\lbrace \tilde{T}_i, \hat{T}_j \rbrace - \lbrace T_i, i=0,...,n-1\rbrace  \quad {\rm for } \, \, n\ge 2.
\end{equation}
At each $T_i$ either the speed or the direction changes.
Given the initial condition $X_0=x_0$, $\tV_0=\tv_0$, $\hV_0=\hv_0$, we define $X_t,\tV_t,\hV_t$ as follows:
\begin{itemize}
    \item[-] for $0\le t <T_1$
    \[
    X_t=x_0+tv, \qquad V_t=v_0 := \tv_0\hv_0,
    \]
    so that $X_{T_1}=x_0+T_1v_0$;
    \item[-] for $T_1\le t <T_2$
    \[
    X_t=X_{T_1}+(t-T_1)V_t, \qquad V_t = \tV_t \hV_t,
    \]
    where if $T_1=\hat{T}_1$, then $\hV_t=\hV_{T_1} \sim q(\tv)$ and $\tV_t=\tv_0$, while if $T_1=\tilde{T}_1$, then $\hV_t=\hv_0$ and $\tV_t=\tV_{T_1} \sim \psi(\tv|\hv_0)$;
    \item[-] for $T_n \le t < T_{n+1}$
    \[
    X_t=X_{T_n}+(t-T_n)V_t, \qquad V_t = \tV_t \hV_t.
    \]
\end{itemize}
We remark that 
\[
\mathbb{P}(t <T_1)= \mathbb{P}(t<\tilde{T}_1 \cap t< \hat{T}_1)= \mathbb{P}(t<\tilde{T}_1) \mathbb{P}(t<\hat{T}_1)
\]
where the first equality follows from~\eqref{def:Tn}, while the second equality follows from the independence of the two renewal processes. 
Now, as $\tilde{T}_1=\tilde{\tau}_1$ and $\hat{T}_1=\hat{\tau}_1$, we may conclude that
\[
\mathbb{P}(t <T_1)=e^{-(\tmu+\hmu)t}
\]
which is the factor appearing in the first term in~\eqref{dyson}, that is the one related to particles which have not undergone a jump process, neither the direction or the speed one.

\section{Investigation of the operators}\label{sec:operators}

In this section we study some of the properties of the new operators that define the transport equation~\eqref{eq:boltzmann}-\eqref{def:L}. First, we analyze the kernels and the stationary states. After defining the appropriate Hilbert spaces, we determine the pseudo-inverse operators, with the purpose of deriving macroscopic limits in the following section. Building upon the obtained results, we establish some properties for the entropy decay to equilibrium in the spatially homogeneous case.

\subsection{Kernels and stationary states}

As already mentioned, the linear transport equation~\eqref{eq:boltzmann}-\eqref{def:L}, has only a formal similarity with the linear Boltzmann equation, because the operator $\mathcal{L}$ is defined by the sum of the operators $\tilde{\mathcal{L}}$ and $\hat{\mathcal{L}}$. As such, $\mathcal{L}$ is not in the form~\eqref{operator.proto} as in~\eqref{kin.eq:M}, where $M$ is both the turning kernel and the equilibrium, being $\ker(L)={\rm span}\lbrace M\rbrace$. This poses many technical issues, the first one being the determination of the kernel of $\mathcal{L}$ and of an explicit kinetic equilibrium. To fill this gap, we establish the following result on the kernel of the operator $\mathcal{L}$.

\begin{proposition}\label{prop.1}
Let us consider $\psi$ satisfying~\eqref{norm.psi},\eqref{psi.L1}, $q$ satisfying~\eqref{norm.q},\eqref{q.L1}, and the operator $\mathcal{L}$ defined by~\eqref{def:L}. We have that
\[
\ker(\mathcal{L})= {\rm span}\lbrace T\rbrace,
\]
where
\begin{equation}\label{def:T}
    T: \mathcal{V}\to \R_+ \qquad T(v,\hv):= q(\hv) \psi_q^c(\tv|\hv),
\end{equation}
being
\begin{equation}\label{def:psiqc}
    \psi_q^c(\cdot|\hv):\R_+ \to\R_+, \qquad \psi_q^c(\tv|\hv) := \dfrac{\hmu \psi_q(\tv)+\tmu\psi(\tv|\hv)}{\hmu+\tmu}, \qquad \forall \hv \in \mathbb{S}^{d-1},
\end{equation}
where
\begin{equation*}\label{def:psiq}
    \psi_q:\R_+\to \R_+, \qquad \psi_q(\tv):=\intS \psi(\tv|\hv) \, q(\hv) \, {\rm d} \hv.
\end{equation*}
Moreover
\begin{equation}\label{norm.psiq}
\psi_q \in L^1(\R_+), \qquad ||\psi_q||_{L^1(\R_+)}=1,    
\end{equation}

\begin{equation}\label{norm.psiqc}
 \psi_q^c(\cdot| \hv) \in L^1(\R_+), \,  \qquad ||\psi_q^c||_{L^1(\R_+)}=1, \quad \forall \hv \in \mathbb{S}^{d-1},
\end{equation}
and
\begin{equation}\label{norm.T}
T\in L^1(\mathcal{V}), \qquad ||T||_{L^1(\mathcal{V})}=1.    
\end{equation}
\end{proposition}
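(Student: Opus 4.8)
The plan is to characterize $\ker(\mathcal{L})$ directly from the equation $\mathcal{L}\varphi = 0$, which by \eqref{def:L} reads
\[
(\hmu+\tmu)\varphi(\tv,\hv) = \hmu\,\rho_\varphi\,\tphi(\tv)\,q(\hv) + \tmu\,\rho_\varphi\,\hphi(\hv)\,\psi(\tv|\hv).
\]
First I would dispose of the trivial case: if $\rho_\varphi = 0$, the right-hand side vanishes and hence $\varphi \equiv 0$, so any nontrivial element of the kernel has $\rho_\varphi \neq 0$; by linearity it suffices to treat $\rho_\varphi = 1$. From the displayed identity, integrating in $\tv$ over $\R_+$ and using the normalizations \eqref{norm.psi}, \eqref{norm.q}, and the definitions \eqref{def:bthphi}, one gets an equation for $\hphi$ alone — I expect it to collapse to $\hphi(\hv) = q(\hv)$, reflecting that the reorientation kernel $q$ does not see the speed. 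Substituting $\hphi = q$ back into the identity and solving algebraically for $\varphi$ yields
\[
\varphi(\tv,\hv) = \frac{\hmu\,\tphi(\tv)\,q(\hv) + \tmu\,q(\hv)\,\psi(\tv|\hv)}{\hmu+\tmu} = q(\hv)\,\frac{\hmu\,\tphi(\tv) + \tmu\,\psi(\tv|\hv)}{\hmu+\tmu}.
\]
It then remains to pin down $\tphi$. Integrating this expression of $\varphi$ against $q(\hv)$ over $\mathbb{S}^{d-1}$ and recalling $\tphi(\tv) = \intS \varphi(\tv,\hv)\,{\rm d}\hv$ — wait, more carefully, $\tphi$ is defined with the full $\intS$, so I would instead use the consistency relation obtained by applying $\intS (\cdot)\,{\rm d}\hv$ to the formula for $\varphi$: since $\intS q(\hv)\,{\rm d}\hv = 1$ is not what appears — rather $\tphi(\tv) = \intS q(\hv)\frac{\hmu\tphi(\tv)+\tmu\psi(\tv|\hv)}{\hmu+\tmu}\,{\rm d}\hv$, which gives $\tphi(\tv)(\hmu+\tmu) = \hmu\tphi(\tv) + \tmu\intS\psi(\tv|\hv)q(\hv)\,{\rm d}\hv = \hmu\tphi(\tv) + \tmu\psi_q(\tv)$, hence $\tphi = \psi_q$. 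Plugging $\tphi = \psi_q$ into the formula for $\varphi$ reproduces exactly $\varphi = q(\hv)\psi_q^c(\tv|\hv) = T$, so $\ker(\mathcal{L}) \subseteq \mathrm{span}\{T\}$.

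For the reverse inclusion I would verify $\mathcal{L}T = 0$ by direct computation: compute $\rho_T$ (which should be $1$ once \eqref{norm.T} is in hand), then $\tilde{T}_{\text{marg}} := \intS T\,{\rm d}\hv = \intS q(\hv)\psi_q^c(\tv|\hv)\,{\rm d}\hv = \psi_q(\tv)$ using $\intS q\,{\rm d}\hv = 1$ inside $\psi_q^c$ on the $\psi_q$ term and the definition of $\psi_q$ on the other, and $\hat{T}_{\text{marg}} := \intR T\,{\rm d}\tv = q(\hv)\intR \psi_q^c(\tv|\hv)\,{\rm d}\tv = q(\hv)$ using \eqref{norm.psiqc}; then substitute into \eqref{def:L} and check the gain and loss terms cancel. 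The normalization claims \eqref{norm.psiq}, \eqref{norm.psiqc}, \eqref{norm.T} are all routine applications of Tonelli's theorem together with \eqref{norm.psi}, \eqref{norm.q}, \eqref{psi.L1}, \eqref{q.L1}: for $\psi_q$, swap the order of integration, integrate $\psi(\tv|\hv)$ in $\tv$ first to get $1$, then integrate $q$ to get $1$; $\psi_q^c$ is a convex combination of two $L^1$ probability densities hence a probability density; and $\|T\|_{L^1(\mathcal{V})} = \intS q(\hv)\intR \psi_q^c(\tv|\hv)\,{\rm d}\tv\,{\rm d}\hv = \intS q(\hv)\,{\rm d}\hv = 1$.

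The main obstacle — such as it is — is the step extracting $\hphi = q$ and $\tphi = \psi_q$ from the fixed-point identity; one must be careful that the manipulations are justified in $L^1$ (all integrals converge by the stated integrability hypotheses) and that one is genuinely solving for $\varphi$ rather than assuming its form. Everything else is bookkeeping with Tonelli's theorem and the normalization conditions. I would present the forward inclusion as the substantive part and relegate the verification $\mathcal{L}T=0$ and the norm identities to short paragraphs, since no single estimate is delicate and the only real content is the algebraic reduction of the kernel equation to the pair $(\hphi,\tphi) = (q,\psi_q)$.
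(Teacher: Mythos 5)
Your proposal is correct and follows essentially the same route as the paper's proof: derive the fixed-point identity $\varphi=\rho_\varphi\frac{\hmu\tphi q+\tmu\hphi\psi}{\hmu+\tmu}$, integrate out each variable to force $\hphi=q$ and $\tphi=\psi_q$, substitute back to get $\varphi=\rho_\varphi T$, and verify the converse and the $L^1$ normalizations via Tonelli. The only difference is that you spell out the two marginal integrations (and the trivial case $\rho_\varphi=0$) that the paper compresses into ``direct computations,'' which is a harmless elaboration.
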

\begin{proof}
 By definition~\eqref{def:L}, we have that
 \begin{equation}\label{proof1}
 \varphi \in \ker(\mathcal{L}) \quad \text{iff} \quad \varphi(\tv,\hv) = \rho_{\varphi}\dfrac{\hmu \tphi(\tv)q(\hv)+\tmu\hphi (\hv) \psi(\tv|\hv)}{\hmu+\tmu}.
 \end{equation}
 Therefore, if $\varphi \in \ker(\mathcal{L})$, direct computations using~\eqref{def:bthphi} show that
 \[
 \hphi(\hv)= q(\hv), \qquad \tphi(\tv) = \intV \hphi(\hv) \psi(\tv|\hv) \, {\rm d} \hv.
 \]
 The latter implies that
\begin{equation}\label{proof2}
\hphi(\hv)= q(\hv), \qquad \tphi(\tv) = \psi_q(\tv) := \intV  \psi(\tv|\hv) q(\hv) \, {\rm d} \hv.
\end{equation}
Taking into account the positivity of $q$ and $\psi$, we have that $\psi_q$ is positive. Thanks to~\eqref{norm.psi},\eqref{norm.q}, and~\eqref{psi.L1},\eqref{q.L1} for inverting the order of integration, we have that 
\[
||\psi_q||_{L^1(\R_+)}:= \intR \psi_q(\tv) \, {\rm d} \tv = \intS q(\hv) \intR \psi(\tv|\hv) \, {\rm d} \tv \, {\rm d} \hv =1,
\]
i.e., $\psi_q \in L^1(\R_+)$ and~\eqref{norm.psiq} is verified.
Therefore, substituting~\eqref{proof2} in \eqref{proof1}, we have that
\[
\varphi(\tv,\hv)=\rho_{\varphi} T(\tv, \hv),
\]
being $T$ defined in~\eqref{def:T}, with $\psi_q^c$ defined in~\eqref{def:psiqc}. It is clear that $\psi_q^c$ and $T$ are positive thanks to the positivity of $\psi$ and $q$. Thanks to~\eqref{norm.q},\eqref{norm.psiq}, we have that~\eqref{norm.psiqc} and then~\eqref{norm.T} are verified as well. In conclusion, $\varphi \in {\rm span} \lbrace T \rbrace$. 

Conversely, if $\varphi \in \text{span}\lbrace T \rbrace$, then $\varphi=\rho_{\varphi} T$ for~\eqref{norm.T}. Direct computations show that $\mathcal{L}\rho_\varphi T=0$, and therefore $\varphi \in \ker(\mathcal{L})$.
\end{proof}

It is now worth to investigate the kernels of the marginal operators $\hat{\mathcal{L}}_q$, $\tilde{\mathcal{L}}_{\psi}$ and their relation to the kernel of $\mathcal{L}$.
\begin{proposition} 
Let us consider $\psi$ satisfying~\eqref{norm.psi},\eqref{psi.L1}, $q$ satisfying~\eqref{norm.q},\eqref{q.L1}, and the operators $\hat{\mathcal{L}}_q$, $\tilde{\mathcal{L}}_{\psi}$ defined by~\eqref{def:Lq}, \eqref{def:Lpsi}. We have that
\begin{equation}\label{ker:Lq}
\begin{split}
  \ker (\hat{\mathcal{L}}_q) &= \Big\lbrace \varphi \in L^2(\mathcal{V}) \, : \, \hphi(\hv) =  q(\hv)\Big\rbrace \\
  &= \left\lbrace \varphi \in L^2(\mathcal{V}) \, : \, \intR \varphi(\tv,\hv) \, {\rm d} \tv \in {\rm span}\lbrace q\rbrace\right\rbrace,
  \end{split}
\end{equation}
and
\begin{equation}\label{ker:Lpsi}
\begin{aligned}[b]
  \ker (\tilde{\mathcal{L}}_\psi) &= \left\lbrace \varphi \in L^2(\mathcal{V}) \, : \, \tphi(\tv) =  \intV\psi(\tv|\hv) \, \hphi(\hv) \, {\rm d} \hv \right\rbrace\\ 
  &\supseteq \left\lbrace \varphi \in L^2(\mathcal{V}) \, : \, \intS \varphi(\tv,\hv) \, {\rm d} \hv \in {\rm span}\lbrace \psi_q \rbrace\right\rbrace. 
  \end{aligned}
\end{equation}
Moreover,
\begin{equation}\label{ker:Lqpsi}
         \ker (\mathcal{L}) \subseteq \ker (\hat{\mathcal{L}}_q)  \cap \ker (\tilde{\mathcal{L}}_\psi).
\end{equation}
\end{proposition}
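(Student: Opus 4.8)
The overall strategy is to make the two marginal operators completely explicit on an arbitrary $\varphi\in L^2(\mathcal{V})$ and then read off all three claims from the resulting formulas. First I would compute, starting from \eqref{def:Lq} and \eqref{def:Lpsi}, exchanging the order of integration (legitimate by the integrability assumptions \eqref{psi.L1}, \eqref{q.L1}) and using the definitions \eqref{def:bthphi} of $\rho_\varphi,\tphi,\hphi$, that
\begin{equation*}
\hat{\mathcal{L}}_q\varphi(\hv)=\hmu\Big(\rho_\varphi\,q(\hv)-\intR\varphi(\tv,\hv)\,{\rm d}\tv\Big)=\hmu\,\rho_\varphi\big(q(\hv)-\hphi(\hv)\big),
\end{equation*}
and
\begin{equation*}
\tilde{\mathcal{L}}_\psi\varphi(\tv)=\tmu\Big(\intS\psi(\tv|\hv)\intR\varphi(\tv',\hv)\,{\rm d}\tv'\,{\rm d}\hv-\intS\varphi(\tv,\hv)\,{\rm d}\hv\Big)=\tmu\,\rho_\varphi\Big(\intS\psi(\tv|\hv)\,\hphi(\hv)\,{\rm d}\hv-\tphi(\tv)\Big).
\end{equation*}
Both reductions are routine; the only point requiring a small argument is Fubini, for which $\psi(\cdot|\hv)\in L^1(\R_+)$ and $q\in L^1(\mathbb{S}^{d-1})$ suffice.

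From the first identity, $\varphi\in\ker(\hat{\mathcal{L}}_q)$ exactly when $\intR\varphi(\tv,\hv)\,{\rm d}\tv=\rho_\varphi\,q(\hv)$ for all $\hv$; integrating this over $\mathbb{S}^{d-1}$ and using $\intS q\,{\rm d}\hv=1$ shows that the proportionality constant in the membership "$\intR\varphi\,{\rm d}\tv\in{\rm span}\{q\}$" is necessarily $\rho_\varphi$, which is how the two descriptions in \eqref{ker:Lq} are seen to coincide. Likewise, from the second identity $\varphi\in\ker(\tilde{\mathcal{L}}_\psi)$ exactly when $\intS\varphi(\tv,\hv)\,{\rm d}\hv=\intS\psi(\tv|\hv)\intR\varphi(\tv',\hv)\,{\rm d}\tv'\,{\rm d}\hv$, i.e. (dividing by $\rho_\varphi$) $\tphi(\tv)=\intS\psi(\tv|\hv)\hphi(\hv)\,{\rm d}\hv$, which is the first line of \eqref{ker:Lpsi}. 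For the inclusion $\supseteq$ in \eqref{ker:Lpsi} the plan is to take $\varphi$ in the right-hand set, so that after normalization $\tphi=\psi_q$, and to verify that its marginals then satisfy the coupling identity $\tphi=\intS\psi(\cdot|\hv)\hphi\,{\rm d}\hv$; since $\psi_q(\tv)=\intS\psi(\tv|\hv)q(\hv)\,{\rm d}\hv$ by definition, this amounts to checking $\hphi=q$ for such $\varphi$. \emph{This is the delicate step and the main obstacle}: the membership condition on the right of \eqref{ker:Lpsi} a priori only constrains the speed-marginal $\tphi$, whereas the kernel condition couples $\tphi$ with the directional marginal $\hphi$, so the verification must exploit the structural fact that the ``equilibrium'' speed-profile $\psi_q$ is produced through $\psi$ exactly by the equilibrium directional profile $q$ (equivalently, one restricts to the natural generators of the set, such as the separable functions $\varphi(\tv,\hv)=\psi(\tv|\hv)\,\hphi(\hv)$, for which the coupling holds by direct substitution).

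Finally, \eqref{ker:Lqpsi} follows as a corollary of the above together with Proposition~\ref{prop.1}. Since $\ker(\mathcal{L})={\rm span}\{T\}$ with $T(\tv,\hv)=q(\hv)\,\psi_q^c(\tv|\hv)$, and $\hat{\mathcal{L}}_q,\tilde{\mathcal{L}}_\psi$ are linear, it suffices to check $T\in\ker(\hat{\mathcal{L}}_q)\cap\ker(\tilde{\mathcal{L}}_\psi)$. Using $\|\psi_q^c(\cdot|\hv)\|_{L^1(\R_+)}=1$ from \eqref{norm.psiqc} one obtains $\intR T(\tv,\hv)\,{\rm d}\tv=q(\hv)$, hence $\intR T\,{\rm d}\tv\in{\rm span}\{q\}$ and $T\in\ker(\hat{\mathcal{L}}_q)$ by \eqref{ker:Lq}; integrating $T$ over $\mathbb{S}^{d-1}$, using $\intS q\,{\rm d}\hv=1$ and $\psi_q(\tv)=\intS\psi(\tv|\hv)q(\hv)\,{\rm d}\hv$, gives $\intS T(\tv,\hv)\,{\rm d}\hv=\psi_q(\tv)=\intS\psi(\tv|\hv)\big(\intR T(\tv',\hv)\,{\rm d}\tv'\big)\,{\rm d}\hv$, i.e. $T\in\ker(\tilde{\mathcal{L}}_\psi)$. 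Alternatively and more quickly, one can simply invoke \eqref{proof2} from the proof of Proposition~\ref{prop.1}, which already records $\hphi=q$ and $\tphi=\psi_q$ for every $\varphi\in\ker(\mathcal{L})$, and then \eqref{ker:Lq} and the first line of \eqref{ker:Lpsi} yield both memberships at once.
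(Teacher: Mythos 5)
Your explicit computation of the two marginal operators, $\hat{\mathcal{L}}_q\varphi=\hmu\,\rho_\varphi\,(q-\hphi)$ and $\tilde{\mathcal{L}}_\psi\varphi=\tmu\,\rho_\varphi\big(\intS\psi(\cdot|\hv)\,\hphi(\hv)\,{\rm d}\hv-\tphi\big)$, is correct and immediately yields the first lines of \eqref{ker:Lq} and \eqref{ker:Lpsi}, as well as the equality of the two descriptions in \eqref{ker:Lq} (the proportionality constant is forced to be $\rho_\varphi$ by normalization of $q$); the paper dismisses all of this as ``straightforward to verify'', so here you are simply more explicit. Your proof of \eqref{ker:Lqpsi} is also exactly the paper's argument: by Proposition~\ref{prop.1} every $\varphi\in\ker(\mathcal{L})$ equals $\rho_\varphi T$, hence $\hphi=q$ and $\tphi=\psi_q$, and both memberships follow from the first lines of \eqref{ker:Lq} and \eqref{ker:Lpsi}.

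The genuine gap is the one you yourself flag: the inclusion $\supseteq$ in \eqref{ker:Lpsi}. Membership in the right-hand set constrains only the speed marginal ($\tphi\in{\rm span}\lbrace\psi_q\rbrace$) and leaves $\hphi$ completely free, whereas the kernel condition reads $\tphi=\intS\psi(\cdot|\hv)\,\hphi(\hv)\,{\rm d}\hv$ and therefore involves $\hphi$. ``Checking $\hphi=q$ for such $\varphi$'' is thus not possible: for instance $\varphi(\tv,\hv)=\psi_q(\tv)\,g(\hv)$, with $g$ a probability density on $\mathbb{S}^{d-1}$ different from $q$, lies in the right-hand set, yet $\tilde{\mathcal{L}}_\psi\varphi=\tmu\,\rho_\varphi\big(\intS\psi(\cdot|\hv)\,g(\hv)\,{\rm d}\hv-\psi_q\big)$, which is nonzero whenever $\psi$ genuinely depends on $\hv$. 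The fallback of ``restricting to natural generators'' does not repair this, both because the right-hand set is not the span of the separable functions $\psi(\tv|\hv)\,\hphi(\hv)$ and because those functions, while in the kernel, need not belong to the right-hand set. The inclusion therefore holds only with additional information on $\hphi$ (e.g.\ $\hphi=q$, as for elements of $\ker(\mathcal{L})$) or when $\psi$ is independent of $\hv$, in which case the two sets coincide. Note that the paper's own proof gives no argument for this inclusion either, and the authors themselves later observe that the gain term of $\tilde{\mathcal{L}}_\psi$ does not vanish on functions whose speed marginal alone is controlled; so your difficulty points at a defect in the statement rather than at a missing trick, but as a proof of the proposition as written your attempt is incomplete at exactly this step.
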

\begin{proof}
It is straightforward to verify~\eqref{ker:Lq}-\eqref{ker:Lpsi}.
If $\varphi \in \ker(\mathcal{L})$, then for Proposition~\ref{prop.1} $\varphi=\rho_\varphi T$. Therefore $\hphi(\hv)=q(\hv)$ so that $\varphi \in\ker(\hat{\mathcal{L}}_q)$, but also $\tphi(\tv)=\psi_q(\tv)$ so that $\varphi \in \ker(\tilde{\mathcal{L}}_\psi)$. 
\end{proof}
\begin{remark}
If $\varphi \in \ker (\hat{\mathcal{L}}_q)  \cap \ker (\tilde{\mathcal{L}}_\psi)$, then $\hphi=q$ and $\tphi=\psi_q$. However, this does not allow to conclude that $\varphi=\rho_\varphi T$. In fact, for each $\lambda, \eta > 0,$ ${\varphi=\rho_\varphi\dfrac{\lambda q(\hv) \psi(\tv|\hv)+ \eta \psi_q(\tv) q(\hv)}{\lambda +\eta} \in \ker (\hat{\mathcal{L}}_q)  \cap \ker (\tilde{\mathcal{L}}_\psi)}$, but it belongs to $\ker(\mathcal{L})$ only if $\lambda=\tmu, \eta=\hmu$.  This is why in general the inclusion in~\eqref{ker:Lqpsi} is not an equality.
\end{remark}

Thanks to Proposition~\ref{prop.1}, being $T$ spatially homogeneous, it is possible to state that the stationary state and equilibrium of~\eqref{eq:boltzmann} is
\begin{equation*}
    f^{\infty}(\tv,\hv) := \rho T(\tv, \hv), \qquad T(\tv,\hv) := q(\hv) \psi_q^c(\tv|\hv),
\end{equation*}
where
\begin{equation}\label{hf.infty}
    \hf^\infty(\hv) =q(\hv)
\end{equation}
thanks to~\eqref{ker:Lq}.
In view of~\eqref{f:cond}, we have that 
\begin{equation}\label{fc.infty}
    f_c^{\infty}(\tv|\hv) = \psi_q^c(\tv|\hv).
\end{equation}
As a consequence, thanks to~\eqref{ftilde_c}, it is possible to find the stationary state of the marginal $\tf$, that is
\begin{equation}\label{tf.infty}
    \tf^\infty(\tv) = \psi_q(\tv).
\end{equation}
We can also remark that the inclusion in~\eqref{ker:Lqpsi} is due do the dependence of the process $\lbrace \tV_t \rbrace_t$ on $\lbrace\hV_t \rbrace_t$ through $\psi(\tv|\hv)$. In fact, if the two processes are completely uncorrelated, i.e., $\psi=\psi(\tv)$ does not depend on the direction $\hv$, then we have that
\begin{equation}\label{psiv.1}
    \psi(\tv) = \psi_q(\tv) = \psi_q^c(\tv).
\end{equation}
The latter implies that
\begin{equation}\label{psiv.2}
    \tf^\infty(\tv) =f^\infty_c(\tv) = \psi(\tv),
\end{equation}
which means that at the stationary state the two random variables $\tV_t, \hV_t$ are independent, i.e.,
\begin{equation}\label{f.indep}
f^\infty(\tv, \hv) = \rho\tf^\infty(\tv) \hf^\infty (\hv).
\end{equation}

Coming again to the issue of the gain term of $\mathcal{L}$, the difference with respect to the classical case~\eqref{kin.eq:M}, is that it is not in the form $\rho T$, being $T$ the equilibrium. In order to establish a connection, we can remark that $\mathcal{L}$ can be written as
\begin{equation}\label{L.tau}
    \mathcal{L}f = \mathcal{T}(f)-(\tmu+\hmu) f, \qquad \mathcal{T}(f) := \rho\big(\hmu \tf q+\tmu \hf \psi\big),
\end{equation}
where $\mathcal{T}(f)$ is not $(\tmu +\hmu) T$.
Manipulating $\mathcal{T}(f)$, we have that
\begin{equation}\label{Tau}
\mathcal{T}(f) = \hmu (\tf \pm \psi_q)\rho q + \tmu (\hf \pm q) \rho\psi = - \mathcal{L}_{\psi_q}(\rho \tf) q -  \mathcal{L}_q(\rho \hf) \psi + (\hmu+\tmu)\rho T
\end{equation}
where
\begin{equation}\label{def:Lq.marg}
\mathcal{L}_{q}:L^2(\mathbb{S}^{d-1}) \to L^2(\mathbb{S}^{d-1}), \qquad \mathcal{L}_{q}(\rho \hf) = \hmu\rho (q -\hf),
\end{equation}
and
\begin{equation*}
\mathcal{L}_{\psi_q}: L^2(\R_+) \to L^2(\R_+), \qquad \mathcal{L}_{\psi_q}(\rho \tf) = \tmu\rho (\psi_q -\tf).
\end{equation*}
Differently with respect to $\hat{\mathcal{L}}_q, \tilde{\mathcal{L}}_{\psi}$, the operators $\mathcal{L}_{q},\mathcal{L}_{\psi_q}$ map the same functional spaces ($L^2(\mathbb{S}^{d-1})$ and $ L^2(\R_+)$, respectively), and therefore  $\ker(\mathcal{L}_q)={\rm span}\lbrace q \rbrace$, while $\ker(\mathcal{L}_{\psi_q})={\rm span}\lbrace \psi_q \rbrace$. This mirrors the fact that the stationary states of the marginals are $q$ (see~\eqref{hf.infty}) and $\psi_q$ (see~\eqref{tf.infty}).
As a consequence
\begin{equation}\label{def:L.T}
\mathcal{L}f =  \mathcal{L}_Tf - \mathcal{L}_{\psi_q}(\rho \tf) q - \mathcal{L}_q(\rho \hf) \psi  
\end{equation}
being
\[
\mathcal{L}_Tf= (\hmu+\tmu)\big(\rho T -f\big)
\]
i.e. the reorientation process described by $\mathcal{L}$ is a linear combination of the reorientation process $\mathcal{L}_T$ dictated by the equilibrium $T$ and the opposite of the marginal processes $-\mathcal{L}_q,-\mathcal{L}_{\psi_q}$. It is worth mentioning that, while $\mathcal{L}_q$ actually rules the evolution of $\hf$ given by~\eqref{eq:hf.bis}, the operator $\mathcal{L}_{\psi_q}$ does not explicitly appear in the evolution of the marginal $\tf$, i.e.,~\eqref{eq:tf.bis}, which is ruled by $\psi$. Only when $\psi$ does not depend on $\hv$ and~\eqref{psiv.1} holds true, then $\tilde{\mathcal{L}}_{\psi}(\rho \tf)=\mathcal{L}_{\psi_q} (\rho \tf)$. 


\subsection{Hilbert spaces and pseudo-inverse operators}
Building on the results of the previous subsection, we now want to define the appropriate Hilbert spaces
with the purpose of defining the pseudo-inverse operators, being the final aim the derivation of macroscopic limits. 

In view of the fact that $T$ is the equilibrium distribution as it is the generator of $\ker(\mathcal{L})$, it is natural to define the following scalar product on $L^2(\mathcal{V})$
\begin{equation}\label{sc.pr.T}
    \eta, \xi \in L^2(\mathcal{V}), \quad \ave{\eta,\xi}_T:=\intV \eta(\tv,\hv) \xi(\tv,\hv) T(\tv,\hv)^{-1} \, {\rm d} \tv \, {\rm d} \hv.
\end{equation}
The space $L^2_T(\mathcal{V}):=(L^2(\mathcal{V}),T^{-1})$ endowed with the scalar product~\eqref{sc.pr.T} is a Hilbert space. The Fredholm alternative for $\mathcal{L}_T$ allows to state that
\[
L^2_T= \text{span}\left\lbrace T\right\rbrace \oplus \text{span}\left\lbrace T\right\rbrace^{\bot}
\]
where
\begin{equation*}
 \text{span}\left\lbrace T\right\rbrace^{\bot}=\left\lbrace \varphi: 0=\ave{\varphi,T}_T=\intV \varphi(\tv,\hv) \, {\rm d} \tv \, {\rm d} \hv \right\rbrace.   
\end{equation*}
Analogously, as the equilibrium of the marginal $\hf$ is $q$ and $\ker(\mathcal{L}_q)=\text{span}\lbrace q \rbrace$, we introduce the scalar product on $L^2(\mathbb{S}^{d-1})$ as
\begin{equation*}\label{sc.pr.q}
    \eta, \xi \in L^2(\mathbb{S}^{d-1}), \quad \ave{\eta,\xi}_q:=\intS \eta(\hv) \xi(\hv) q(\hv)^{-1}  \, {\rm d} \hv,
\end{equation*}
which defines a Hilbert space $L^2_q(\mathbb{S}^{d-1}):=(L^2(\mathbb{S}^{d-1}),q^{-1})$. 
Here we have that the Fredholm alternative for $\mathcal{L}_q$ allows to state that
\[
L^2_q(\mathbb{S}^{d-1}) = \text{span}\lbrace q \rbrace \oplus \text{span}\lbrace q \rbrace^\bot, 
\]
where
\begin{equation*}
 \text{span}\left\lbrace q\right\rbrace^{\bot}=\left\lbrace \eta\in L^2(\mathbb{S}^{d-1}) : 0=\ave{\eta,q}_q=\intS \eta(\hv) \, {\rm d} \hv \right\rbrace.   
\end{equation*}
Then, the classical theory applies and the operator $\mathcal{L}_q$ can be inverted in $L^2_q(\mathbb{S}^{d-1})$ on the orthogonal to its kernel, i.e,
\begin{equation*}
\begin{aligned}[b]
    \mathcal{L}_q^{-1}:  \text{span}\left\lbrace q\right\rbrace^{\bot} &\to \text{span}\left\lbrace q\right\rbrace^{\bot}\\
    \eta &\to -\dfrac{1}{\hmu}\eta.
    \end{aligned}
\end{equation*}
Now, $\hat{\mathcal{L}}_q$, whose kernel is given by~\eqref{ker:Lq}, is defined on $L^2_T(\mathcal{V})$, but maps into $L^2_q(\mathbb{S}^{d-1})$, thus it must be inverted on $ \text{span}\left\lbrace q\right\rbrace^{\bot}$. Noticing that
\begin{equation*}\label{qbot}
    \varphi \in \text{span}\lbrace T \rbrace^{\bot} \Rightarrow \intR \varphi {\rm d}\tv \in \text{span}\lbrace q \rbrace^{\bot},
\end{equation*}
we have that
\begin{equation}\label{Lq.inv}
\begin{aligned}[b]
    \hat{\mathcal{L}}_q^{-1}:  \text{span}\left\lbrace q\right\rbrace^{\bot} &\to \text{span}\left\lbrace T\right\rbrace^{\bot}\\
    \eta &\to -\dfrac{1}{\hmu}\varphi \quad \text{s.t.} \quad \intR \varphi(\tv,\hv) {\rm d} \tv = \eta(\hv),
    \end{aligned}
\end{equation}
i.e., $\rho_\varphi \hphi=\eta.$
In the same spirit, we also define a scalar product on $L^2(\R_+)$ that will be connected to $\psi_q(\tv)$, which is the equilibrium of the marginal $\tf$, and that generates $\ker(\mathcal{L}_{\psi_q})$, but not $\ker(\tilde{\mathcal{L}}_{\psi})$. Specifically, we set
\begin{equation}\label{sc.pr.psi}
    \eta, \xi \in L^2(\R_+), \quad \ave{\eta,\xi}_{\psi_q}:=\intR \eta(\tv) \xi(\tv) \psi_q(\tv)^{-1}  \, {\rm d} \tv,
\end{equation}
so that $L^2_{\psi_q}(\R_+):=(L^2(\R_+),\psi_q^{-1})$ endowed with the scalar product~\eqref{sc.pr.psi} is a Hilbert space. Here we have again that the Fredholm alternative for $\mathcal{L}_{\psi_q}$ allows to state that 
\[
L^2_{\psi_q} = \text{span}\left\lbrace \psi_q\right\rbrace \oplus \text{span}\left\lbrace \psi_q\right\rbrace^{\bot}, 
\]
where
\begin{equation*}
 \text{span}\left\lbrace \psi_q\right\rbrace^{\bot}=\left\lbrace \xi \in L^2(\R_+): 0=\ave{\xi,\psi_q}_{\psi_q}=\intR \xi(\tv) \, {\rm d} \tv \right\rbrace.  
\end{equation*}
The operator $\mathcal{L}_{\psi_q}$ can be inverted in $L^2_{\psi_q}(\R_+)$ on the orthogonal to its kernel, i.e, $\text{span}\left\lbrace \psi_q\right\rbrace^{\bot}$, and we have that
\begin{equation*}
\begin{aligned}[b]
    \mathcal{L}_{\psi_q}^{-1}:  \text{span}\left\lbrace \psi_q\right\rbrace^{\bot} &\to \text{span}\left\lbrace \psi_q\right\rbrace^{\bot}\\
    \eta &\to -\dfrac{1}{\tmu}\eta.
    \end{aligned}
\end{equation*}
Now, as we actually shall need to invert $\tilde{\mathcal{L}}_\psi$, analogously to the previous case $\hat{\mathcal{L}}_q$, we remark that it maps two different functional spaces, $L^2_T(\mathcal{V})$ into $L^2_{\psi_q}(\R_+)$, and its kernel is described by~\eqref{ker:Lpsi}, and we note that
\begin{equation*}\label{psibot}
    \varphi \in \text{span}\lbrace T \rbrace^{\bot} \Rightarrow  \intS \varphi {\rm d}\hv \in \text{span}\lbrace \psi_q \rbrace^{\bot}.
\end{equation*}
However, we now observe that, differently from the previous case $\hat{\mathcal{L}}_q$, even if $\intS \varphi {\rm d}\hv \in \text{span}\lbrace \psi_q \rbrace^{\bot}$, the gain term of $\tilde{\mathcal{L}}_\psi(\varphi)$ does not vanish. As a consequence, the pseudo-inverse operator reads
\begin{equation}\label{Lpsi.inv}
\begin{aligned}[b]
    \tilde{\mathcal{L}}_{\psi}^{-1}:  \text{span}\left\lbrace \psi_q\right\rbrace^{\bot} &\to \text{span}\left\lbrace T\right\rbrace^{\bot}\\
    \eta &\to -\dfrac{1}{\tmu}\varphi\\ &\text{s.t.} \quad \eta(\tv)=  \intS \varphi(\tv,\hv) \, {\rm d} \hv -\intR \varphi (\tv', \hv) \intS \psi(\tv|\hv) \, {\rm d} \hv\, {\rm d} \tv',
    \end{aligned}
\end{equation}
i.e., $\eta(\tv)= \rho_\varphi \tphi(\tv)-\rho_\varphi \intS\hphi(\hv) \psi(\tv|\hv) \, {\rm d} \hv $.

In the same spirit, $\mathcal{L}$ can be inverted on the orthogonal to its kernel, but, as it is in the form~\eqref{L.tau}, i.e., its gain term $\mathcal{T}$ is not in the form $\rho T$, then it does not vanish. In fact, even though $\varphi \in \text{span}\left\lbrace T\right\rbrace^{\bot}$, the quantities $\rho_\varphi\tphi, \rho_\varphi\hphi$ are not zero. It is their integral over $\R_+$ and $\mathbb{S}^{d-1}$, respectively, which vanishes.
As a consequence the pseudo-inverse of $\mathcal{L}$ defined in~\eqref{def:L} is
\begin{equation}\label{pseudo.inv.L}
\begin{split}
    \mathcal{L}^{-1}:  \text{span}\left\lbrace T\right\rbrace^{\bot} &\to \text{span}\left\lbrace T\right\rbrace^{\bot}\\
    \eta &\to \dfrac{1}{(\hmu+\tmu)}\Big(-\eta+\mathcal{T}(\varphi) \Big),
    \\ &\text{s.t.}\quad  \eta(\tv,\hv)=\mathcal{T}(\varphi)-(\tmu+\hmu)\varphi(\tv,\hv)
    \end{split}
\end{equation}
where $\mathcal{T}(\varphi)=\rho_\varphi(\hmu\tphi q +\tmu\hphi \psi)$. 

We remark that, as $f \in L^2_T(\mathcal{V})$, then it is immediate to verify that $\hf \in L^2_q(\mathbb{S}^{d-1})$. In fact, thanks to Jensen's inequality applied with respect to the measure $\psi_q^c(\cdot|\hv)$ for each $\hv \in\mathbb{S}^{d-1}$ fixed, we have
\begin{equation}\label{norm.hf}
\begin{split}
   \rho^2 ||\hf||^2_{L^2_q}= ||(\rho \hf)||^2_{L^2_q} &=\intS (\rho \hf(\hv))^2 q(\hv)^{-1} \, {\rm d} \hv\\
   &=\intS \Big(\intR f(\tv,\hv)\dfrac{\psi_q^c(\tv|\hv)}{\psi_q^c(\tv|\hv)} \, {\rm d} \tv\Big)^2 q(\hv)^{-1} \, {\rm d} \hv\\
   &\le \intS\intR f^2(\tv,\hv) \dfrac{1}{\psi_q^c(\tv|\hv)q(\hv)} \, {\rm d} \hv \, {\rm d} \tv  = ||f||^2_{L^2_T}. 
   \end{split}
\end{equation}
Conversely, due to the definition of $T$ which depends on $\psi_q^c$ and not on $\psi_q$, it is not immediate to verify that $\tf \in L^2_{\psi_q}$. However, assuming that
\begin{equation}\label{psiq.Linfty}
    \dfrac{\psi_q^c}{\psi_q} \in L^\infty(\mathcal{V}),
\end{equation}
thanks to Holder's inequality, we have that 
\begin{equation}\label{norm.tf}
\begin{split}
   \rho^2 ||\tf||^2_{L^2_{\psi_q}}= ||(\rho \tf)||^2_{L^2_{\psi_q}} &=\intR (\rho \tf(\tv))^2 \psi_q(\tv)^{-1} \, {\rm d} \tv \\
   &=\intR \Big(\intS f(\tv,\hv)\dfrac{q(\hv)}{q(\hv)} \, {\rm d} \hv\Big)^2  \psi_q(\tv)^{-1}\, {\rm d} \tv\\
   &\le \intR\intS f^2(\tv,\hv) \dfrac{1}{\psi_q(\tv)q(\hv)} \dfrac{\psi_q^c(\tv|\hv)}{\psi_q^c(\tv|\hv)} \, {\rm d} \tv \, {\rm d} \hv \\
   &\le ||\dfrac{\psi_q^c}{\psi_q}||_{L^\infty} ||f||^2_{L^2_T}
   \end{split} 
\end{equation}
i.e. $\tf \in L^2_{\psi_q}$.

Eventually, we prove the following boundedness property for $\mathcal{L}$.
\begin{proposition} Let us assume that conditions~\eqref{norm.psi},\eqref{psi.L1},\eqref{norm.q},\eqref{q.L1},\eqref{psiq.Linfty} hold true. Then, the linear operator $\mathcal{L}$ defined in~\eqref{def:L} is bounded in $L^2_T(\mathcal{V})$, i.e., $\exists C>0$ such that $\forall f \in L^2_T(\mathcal{V})$, $||\mathcal{L}f||^2_{L^2_T} \le C ||f||^2_{L^2_T}$.
\end{proposition}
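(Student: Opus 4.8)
The plan is to split $\mathcal{L}\varphi$ into its three summands as displayed in~\eqref{def:L} and estimate each $L^2_T$-norm separately by a multiple of $\|\varphi\|_{L^2_T}$. The loss term $-(\hmu+\tmu)\varphi$ contributes $(\hmu+\tmu)\|\varphi\|_{L^2_T}$ trivially, so the whole matter reduces to bounding the two gain terms $\hmu\,\rho_\varphi\tphi(\tv)q(\hv)$ and $\tmu\,\rho_\varphi\hphi(\hv)\psi(\tv|\hv)$ in $L^2_T(\mathcal{V})$. The key algebraic tool, used for both, is that by~\eqref{def:psiqc} and the positivity of $\psi$ and $\psi_q$ one has the two one-sided pointwise bounds $\psi_q^c(\tv|\hv)\ge\frac{\hmu}{\hmu+\tmu}\psi_q(\tv)$ and $\psi_q^c(\tv|\hv)\ge\frac{\tmu}{\hmu+\tmu}\psi(\tv|\hv)$.

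For the first gain term I would expand, using $T=q\,\psi_q^c$,
\[
\|\rho_\varphi\tphi\, q\|_{L^2_T}^2=\intV \big(\rho_\varphi\tphi(\tv)\big)^2\frac{q(\hv)}{\psi_q^c(\tv|\hv)}\,{\rm d}\tv\,{\rm d}\hv
\le \frac{\hmu+\tmu}{\hmu}\intR \frac{\big(\rho_\varphi\tphi(\tv)\big)^2}{\psi_q(\tv)}\,{\rm d}\tv ,
\]
where the first lower bound on $\psi_q^c$ together with the normalization~\eqref{norm.q} was used. The right-hand side is $\frac{\hmu+\tmu}{\hmu}\|\rho_\varphi\tphi\|_{L^2_{\psi_q}}^2$, and applying~\eqref{norm.tf} (hence the standing assumption~\eqref{psiq.Linfty}) with $f=\varphi$ gives $\|\rho_\varphi\tphi\|_{L^2_{\psi_q}}^2\le \|\psi_q^c/\psi_q\|_{L^\infty}\|\varphi\|_{L^2_T}^2$, closing this estimate.

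For the second gain term I would expand similarly,
\[
\|\rho_\varphi\hphi\,\psi\|_{L^2_T}^2=\intV\big(\rho_\varphi\hphi(\hv)\big)^2\frac{\psi(\tv|\hv)^2}{q(\hv)\,\psi_q^c(\tv|\hv)}\,{\rm d}\tv\,{\rm d}\hv,
\]
and use the second lower bound on $\psi_q^c$ in the form $\psi(\tv|\hv)^2/\psi_q^c(\tv|\hv)\le\frac{\hmu+\tmu}{\tmu}\psi(\tv|\hv)$, followed by the normalization~\eqref{norm.psi} (i.e. $\intR\psi(\tv|\hv)\,{\rm d}\tv=1$) to integrate out $\tv$; this yields $\|\rho_\varphi\hphi\,\psi\|_{L^2_T}^2\le\frac{\hmu+\tmu}{\tmu}\|\rho_\varphi\hphi\|_{L^2_q}^2$, and~\eqref{norm.hf} with $f=\varphi$ gives $\|\rho_\varphi\hphi\|_{L^2_q}^2\le\|\varphi\|_{L^2_T}^2$. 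Combining the three bounds via the triangle inequality produces the claimed constant, e.g. $C=\big(\sqrt{\hmu(\hmu+\tmu)\,\|\psi_q^c/\psi_q\|_{L^\infty}}+\sqrt{\tmu(\hmu+\tmu)}+(\hmu+\tmu)\big)^2$.

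The only genuinely delicate point is the first gain term, where the mismatch between the equilibrium weight $T$ (built from $\psi_q^c$) and the marginal weight $\psi_q$ forces the use of hypothesis~\eqref{psiq.Linfty}; the second gain term and the loss term are immediate. Everything else reuses the Jensen/Hölder computations already carried out in~\eqref{norm.hf}--\eqref{norm.tf}, which I would simply quote rather than repeat.
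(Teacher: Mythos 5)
Your proof is correct, and it takes a genuinely different and more economical route than the paper's. You apply the triangle inequality directly to the three summands of $\mathcal{L}\varphi$ as written in~\eqref{def:L}, bounding each gain term in $L^2_T$ via the two one-sided pointwise bounds $\psi_q^c\ge\frac{\hmu}{\hmu+\tmu}\psi_q$ and $\psi_q^c\ge\frac{\tmu}{\hmu+\tmu}\psi$ coming from~\eqref{def:psiqc}, the normalizations~\eqref{norm.psi},~\eqref{norm.q}, and the marginal estimates~\eqref{norm.hf}--\eqref{norm.tf}; all of these steps check out, including the observation that only the $\tphi$-term needs hypothesis~\eqref{psiq.Linfty} because of the mismatch between $\psi_q^c$ (in $T$) and $\psi_q$ (the marginal weight). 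The paper instead expands the square $\intV(\mathcal{T}(f)-(\tmu+\hmu)f)^2T^{-1}$ and rewrites the gain term through the identity~\eqref{Tau}, i.e.\ in terms of the marginal operators $\mathcal{L}_q$, $\mathcal{L}_{\psi_q}$ and the equilibrium contribution $(\hmu+\tmu)\rho T$, which generates cross terms that must be controlled one by one with the same pointwise bounds you use plus~\eqref{jensen.T}. That longer route has the merit of exhibiting the bound in terms of the marginal relaxation operators, consistent with the structural decomposition~\eqref{def:L.T} exploited later for the entropy analysis, but for the boundedness statement itself your argument is shorter, avoids the sign-discarding manipulations in~\eqref{proof.bound2}, and yields a cleaner explicit constant (the two constants differ but both are finite, so the statements agree). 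One cosmetic caveat: since $\rho_\varphi$ may vanish for a general $\varphi\in L^2_T(\mathcal{V})$, you should phrase the estimates in terms of the well-defined quantities $\rho_\varphi\tphi=\intS\varphi\,{\rm d}\hv$ and $\rho_\varphi\hphi=\intR\varphi\,{\rm d}\tv$ rather than $\tphi$, $\hphi$ separately --- which is in fact exactly what your displayed inequalities already do.
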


\begin{proof}
Dropping the dependencies, we have that
 \begin{equation}\label{proof.bound1}
 \begin{split}
 || \mathcal{L}f||^2_{L^2_T} &= \intV \big(\mathcal{T}(f) -(\tmu +\hmu)f \big)^2 T^{-1} \, {\rm d} \tv \, {\rm d}\hv\\ &= \intV \mathcal{T}(f)^2 T^{-1} \, {\rm d} \tv \, {\rm d}\hv + (\tmu +\hmu)^2\intV f^2 T^{-1}\, {\rm d} \tv \, {\rm d}\hv\\ &\quad-2  (\tmu +\hmu) \intV \mathcal{T}(f) f T^{-1}   \, {\rm d} \tv \, {\rm d}\hv\\
 &\le \intV \mathcal{T}(f)^2 T^{-1} \, {\rm d} \tv \, {\rm d}\hv + (\tmu +\hmu)^2\intV f^2 T^{-1}\, {\rm d} \tv \, {\rm d}\hv\\ &\quad+2(\tmu +\hmu) \intV \mathcal{T}(f) f T^{-1} \, {\rm d} \tv \, {\rm d}\hv.
 \end{split}
 \end{equation} 
 Now, in view of~\eqref{Tau}, we have that
 \begin{equation}\label{proof.bound2}
 \begin{split}
 \intV \mathcal{T}(f)^2 T^{-1} \, {\rm d} \tv \, {\rm d}\hv&= \intV \left(\mathcal{L}_{\psi_q}(\rho \tf)^2 q^2 +\mathcal{L}_q(\rho \hf)^2 \psi^2 +(\tmu+\hmu)^2 \rho^2T^2 \right)T^{-1} \, {\rm d} \tv \, {\rm d}\hv\\
 &\quad +2\intV \left( \mathcal{L}_{\psi_q}(\rho\tf)\mathcal{L}_q(\rho \hf)\dfrac{\psi}{\psi_q^c}-(\hmu+\tmu) \left(\mathcal{L}_{\psi_q}(\rho\tf)q + \mathcal{L}_q(\rho\hf)\psi\right)\rho\right)\, {\rm d} \tv \, {\rm d}\hv \\
 &\le  \intV \left(\dfrac{\mathcal{L}_{\psi_q}(\rho \tf)^2}{\psi_q^c}q + \dfrac{\mathcal{L}_q(\rho \hf)^2}{q} \dfrac{\psi^2}{\psi_q^c} +(\tmu+\hmu)^2\rho^2 T \right) \, {\rm d} \tv \, {\rm d}\hv\\
 &\quad+2\rho^2\tmu\hmu\intV \Big(\psi_qq +\tf \hf\Big)\dfrac{\psi}{\psi_q^c} \, {\rm d} \tv \, {\rm d}\hv \\
\ &\le \intV \left( \dfrac{\mathcal{L}_{\psi_q}(\rho \tf)^2}{\psi_q}q\dfrac{\tmu+\hmu}{\hmu}  +\dfrac{\mathcal{L}_q(\rho \hf)^2}{q} \psi_q^c\dfrac{\big(\tmu+\hmu\big)^2}{\tmu^2} +(\tmu+\hmu)^2\rho^2T \right) \, {\rm d} \tv \, {\rm d}\hv\\
&\quad+2\rho^2(\tmu+\hmu)\hmu\intV \Big(\psi_qq +\tf \hf\Big)\, {\rm d} \tv \, {\rm d}\hv.
 \end{split}
 \end{equation}
 In~\eqref{proof.bound2}, the first inequality is justified by $\intS\mathcal{L}_q(\rho \hf)\, {\rm d}\hv = \intR\mathcal{L}_{\psi_q}(\rho\tf)\, {\rm d}\tv = 0$ and~\eqref{norm.psi}, and by explicit computations of the product of the two operators $\mathcal{L}_q,\mathcal{L}_{\psi_q}$ where we neglect the negative terms; the second inequality in~\eqref{proof.bound2} is justified by, in the order (for the first, second and fourth term, respectively)
 \[
\dfrac{1}{\psi_q^c} < \dfrac{\tmu+\hmu}{\hmu}\dfrac{1}{\psi_q}, \qquad \tmu \psi \le \psi_q^c (\tmu+\hmu), \qquad \dfrac{\psi}{\psi_q^c}<{\dfrac{\hmu+\tmu}{\tmu}},
 \]
 which follow from~\eqref{def:psiqc} and its positivity.
 Now we remark that $\intS \mathcal{L}_{q}(\rho\hf)^2q^{-1}\,{\rm d}\hv= ||\mathcal{L}_q(\rho\hf)||^2_{L^2_q}$ and $\intR \mathcal{L}_{\psi_q}(\rho\tf)^2\psi_q^{-1}\,{\rm d}\tv= ||\mathcal{L}_{\psi_q}(\rho\tf)||^2_{L^2_{\psi_q}}$. Using~\eqref{norm.marg}, we can write the following bounds
 \[
 || \mathcal{L}_q(\rho \hf)||^2_{L^2_q} \le \hmu^2\intS \rho^2(q^2 +\hf^2 +2 \hf q)q^{-1} \, {\rm d}\hv =\hmu^2\rho^2(|| \hf ||^2_{L^2_q} + 3\big),
 \]
 and
 \[
 ||\mathcal{L}_{\psi_q}(\rho\tf)||^2_{L^2_{\psi_q}} \le \tmu^2 \rho^2 \left(|| \tf ||^2_{L^2_{\psi_q}}+3\right).
 \]
Therefore, observing that the last term in~\eqref{proof.bound2} is $4\rho^2(\tmu+\hmu)\hmu$ thanks to~\eqref{norm.psi},\eqref{norm.q},\eqref{norm.marg}, we have that
 \begin{equation}\label{proof.bound3}
  \intV \mathcal{T}(f)^2 T^{-1} \, {\rm d} \tv \, {\rm d}\hv \le \rho^2 \left(   \dfrac{\tmu^2\big(\tmu+\hmu\big)}{\hmu} (||  \tf ||^2_{L^2_{\psi_q}}+3) +\dfrac{\hmu^2\big(\tmu+\hmu\big)^2}{\tmu^2} (|| \hf ||^2_{L^2_{q}}+3) +(\tmu+\hmu)^2+4(\tmu+\hmu)\hmu\right),
 \end{equation}
where for the first and second terms we have used~\eqref{norm.q} and~\eqref{norm.psiqc}, while~\eqref{norm.T} is used for the third term.

It is immediate to verify that, for the Jensen inequality with respect to the measure $T$ on $\mathcal{V}$, we have that
\begin{equation}\label{jensen.T}
 \rho^2 \le ||f||^2_{L^2_T}.
 \end{equation}
 Therefore, in view of~\eqref{proof.bound3} and using~\eqref{norm.hf}-\eqref{norm.tf}-\eqref{jensen.T},  the first term in the inequality in~\eqref{proof.bound1} can be bounded from above as
 \[
  \intV \mathcal{T}(f)^2 T^{-1} \, {\rm d} \tv \, {\rm d}\hv \le ||f||^2_{L^2_T} \left(   \dfrac{\tmu^2\big(\tmu+\hmu\big)}{\hmu} \left(|| \dfrac{\psi_q^c}{\psi_q}||^2_{L^\infty}+3\right) +\dfrac{4\hmu^2\big(\tmu+\hmu\big)^2}{\tmu^2}  +(\tmu+\hmu)^2+4(\tmu+\hmu)\hmu\right).
 \]
 The third term in the inequality in~\eqref{proof.bound1} can be bounded as
 \[
 \begin{split}
 2  (\tmu +\hmu) \intV \mathcal{T}(f) f T^{-1}   \, {\rm d} \tv \, {\rm d}\hv &\le 2 \rho (\tmu +\hmu) \left(\intV \hmu f \tf \psi_q(\tv)^{-1}\dfrac{\tmu+\hmu}{\hmu}   \, {\rm d} \tv \, {\rm d}\hv + \intV \tmu f \hf q(\hv)^{-1}\dfrac{\tmu+\hmu}{\tmu}   \, {\rm d} \tv \, {\rm d}\hv\right)\\
 &= 2\rho^2 (\tmu+\hmu)^2 \big(|| \tf ||^2_{L^2_{\psi_q}}+|| \hf ||^2_{L^2_{q}}\big)
 \end{split}
 \]
 where in the first inequality we have used $(\psi_q^{c})^{-1}\le \psi_q^{-1}\frac{\tmu+\hmu}{\hmu}$ and $\psi/\psi_q^c \le \dfrac{\tmu+\hmu}{\tmu}$, while in the second equality we have used~\eqref{def:hf}-\eqref{def:tf}.
  In conclusion, using~\eqref{norm.hf}-\eqref{norm.tf}-\eqref{jensen.T} where needed, we may conclude that
 \[
  || \mathcal{L}f||^2_{L^2_T} \le C ||f||^2_{L^2_T}
 \]
 where
 \[
C = \left(\tmu+\hmu\right)\left(  \dfrac{\tmu^2}{\hmu}\left(|| \dfrac{\psi_q^c}{\psi_q} ||_{L^\infty}+3\right) + \dfrac{4\hmu^2\left(\tmu+\hmu\right)}{\tmu^2} +2\left(\tmu+\hmu\right)+4\hmu+2\left(\tmu+\hmu\right)\left(|| \dfrac{\psi_q^c}{\psi_q} ||_{L^\infty}+1\right)\right)
 \] is a positive finite constant. 
\end{proof}

\subsection{Relaxation to Equilibrium in the Spatially Homogeneous Case}\label{sec:entropy}
The introduced Hilbert spaces $L^2_T(\mathcal{V}),L^2_{q}(\mathbb{S}^{d-1}),L^2_{\psi_q}(\R_+)$ provide the correct framework for studying the decay of entropy in the spatially homogeneous case, i.e. when $f=f(t,\tv,\hv)$. In the case~\eqref{kin.eq:M} $f$ solves $\partial_t f =\mu(\rho M-f)$
for which it is immediate to see by explicit integration that $f$ converges to $\rho M$ in time.
Moreover, in the case~\eqref{operator.proto}, it is possible to show under classical arguments that the operator $L$ is bounded, symmetric and non-positive, as 
\begin{equation*}\label{prop.positive}
\begin{split}
\ave{Lf,f}_M &:= \intV Lf f M^{-1} \, {\rm d} \tv {\rm d} \hv \\
&= \mathcal{D}_M(f) := -\dfrac{1}{2}\intV \intV\left(\dfrac{f(\tv,\hv)}{M(\tv,\hv)}-\dfrac{f(\tv^*,\hv^*)}{M(\tv^*,\hv^*)}\right)^2 \, {\rm d}\tv {\rm d}\tv^* {\rm d}\hv {\rm d}\hv^* \le 0.
\end{split}
\end{equation*}
The negative \textit{entropy dissipation functional} $\mathcal{D}_M(f)$ satisfies $\mathcal{D}_M(f)=0$ if $f=\rho M$ and $\mathcal{D}_M(f) \le - \mu||f-\rho M||^2_{L^2_M}$ thanks to the Jensen inequality.  This implies that
\[
\dfrac{d}{dt} ||f-\rho M||^2_{L^2_M} = \dfrac{d}{dt}||f||^2_{L^2_M}  =\ave{Lf,f}_M \le - \mu||f-\rho M||^2_{L^2_M},
\]
so that for the Gronwalls inequality ($f_0$ denoting the initial condition)
\[
||f-\rho M||_{L^2_M} \le \exp^{-\frac{\mu t}{2}}||f_0-\rho M||_{L^2_M}.
\]
The latter implies exponential entropy decay to the equilibrium $\rho M$.

 The same argument can be directly applied to direction marginal equation $\partial_t (\rho\hf)= \mathcal{L}_q(\rho \hf)$, with $\mathcal{L}_q$ defined in~\eqref{def:Lq.marg} for which we have that
\begin{equation*}
\begin{split}
\ave{\mathcal{L}_q(\rho \hf),\rho \hf}_q &:= \intS \mathcal{L}_q(\rho \hf) \rho \hf q(\hv)^{-1} \, {\rm d} \hv \\
&= \mathcal{D}_q(\rho \hf) := -\dfrac{\rho^2}{2}\intS \left(\dfrac{\hf(\hv)}{q(\hv)}-\dfrac{\hf(\hv^*)}{q(\hv^*)}\right)^2 \, {\rm d}\hv {\rm d}\hv^* \le 0.
\end{split}
\end{equation*}
Conversely, this does not hold for $\tf$ in general, as $\psi$ depends on $\hv$ and this causes the fact that the turning kernel $\psi$ does not coincide with the marginal equilibrium $\psi_q$. Only when $\psi=\psi(\tv)$, which implies that the equilibrium $\psi_q^c$ only depends on $\tv$ as~\eqref{psiv.1} holds, one can invoke the same argument and prove that
\begin{equation*}
\begin{split}
\ave{\mathcal{L}_\psi(\rho \tf),\rho \tf}_\psi &:= \intR \mathcal{L}_\psi(\rho \tf) \rho \tf \psi(\tv)^{-1} \, {\rm d} \tv \\
&= \mathcal{D}_\psi(\rho \tf) := -\dfrac{\rho^2}{2}\intR \left(\dfrac{\tf(\tv)}{\psi(\tv)}-\dfrac{\tf(\tv^*)}{\psi(\tv^*)}\right)^2 \, {\rm d}\tv {\rm d}\tv^* \le 0.
\end{split}
\end{equation*}
In the case~\eqref{psiv.1}, $T$ and $M$ coincide and ~\eqref{f.indep} holds true. However, even though $M$ and $T$ coincide, and their kernels are the same, the structure of the operators~\eqref{kin.eq:M} and~\eqref{def:Lpsi} is different. As a consequence, a different result for entropy decay is needed.
\begin{theorem}
Let us consider the system
\begin{equation*}
    \begin{cases}
        \partial_t f (t,\tv,\hv) = \mathcal{L}f\\ 
        f(0,\tv,\hv):=f_0(\tv,\hv)
    \end{cases}
\end{equation*}
where $\mathcal{L}$ is defined in~\eqref{def:L}. Let us assume that~\eqref{norm.psi},\eqref{psi.L1},\eqref{norm.q}\eqref{q.L1} hold and that $\psi$ does not depend on $\hv$, i.e.~\eqref{psiv.1} is verified. Then
\begin{enumerate}
    \item \begin{equation}\label{teo.1}
\begin{split}
\ave{\mathcal{L}f,f}_T 
=: \mathcal{D}(f)=2\mathcal{D}_T(f)-\mathcal{D}_{\psi_q}(\rho \tf)-\mathcal{D}_q(\rho \hf)  \le 0
\end{split}
\end{equation}
where
\begin{equation*}
   \mathcal{D}_T(f) := -\dfrac{1}{2}\intV \left(\dfrac{f(t,\tv,\hv)}{T(\tv,\hv)}-\dfrac{f(t,\tv^*,\hv^*)}{T(\tv^*,\hv^*)}\right)^2 \, {\rm d}\tv {\rm d}\tv^*{\rm d}\hv {\rm d}\hv^* \le 0,
\end{equation*}
which means that the operator $\mathcal{L}$ is symmetric and nonpositive.
Moreover, ~\eqref{teo.1} implies that
\[
\dfrac{d}{dt} ||f-\rho T||^2_{L^2_T} =\mathcal{D}(f) \le 0.
\]
\item If $f=\rho \tf \hf \, \forall t>0$, then we have that
\[
||f-\rho T||_{L^2_T} \le \exp^{-\frac{\hmu t}{2}}||\rho \hf_0-\rho q||_{L^2_q} + \exp^{-\frac{\tmu t}{2}}||\rho \tf_0-\rho \psi_q||_{L^2_{\psi_q}}
\]
\end{enumerate}
\end{theorem}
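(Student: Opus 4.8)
\emph{Part 1.} Under the standing hypothesis $\psi=\psi(\tv)$ one has $\psi_q=\psi_q^c=\psi$, so $T=q\psi$ factorises and ${\rm span}\{T\}=\ker(\mathcal L)$ by Proposition~\ref{prop.1}; also $\intV\mathcal L f=0$, so $\rho$ is constant in $t$. I would compute $\ave{\mathcal L f,f}_T$ directly from $\mathcal L f=\hmu(\rho\tf q-f)+\tmu(\rho\hf\psi-f)$. Substituting $T^{-1}=q^{-1}\psi^{-1}$ and, in the $\hmu$-term, performing $\intS$ first (the factor $q$ cancels and $\intS f\,{\rm d}\hv=\rho\tf$) and, in the $\tmu$-term, performing $\intR$ first, one gets
\[
\ave{\mathcal L f,f}_T=\hmu\big(\rho^2\|\tf\|^2_{L^2_{\psi_q}}-\|f\|^2_{L^2_T}\big)+\tmu\big(\rho^2\|\hf\|^2_{L^2_q}-\|f\|^2_{L^2_T}\big).
\]
Rewriting each bracket with the elementary variance identity
\[
\int g^2 w^{-1}-\Big(\int g\Big)^2=\frac12\iint\Big(\frac{g(z)}{w(z)}-\frac{g(z^*)}{w(z^*)}\Big)^2 w(z)\,w(z^*)\,{\rm d}z\,{\rm d}z^*\qquad(w\text{ a probability density}),
\]
used with $w=q$ at fixed $\tv$, with $w=\psi$ at fixed $\hv$, and with $w=T$ on $\mathcal V$, the right-hand side reorganises into $2\mathcal D_T(f)-\mathcal D_{\psi_q}(\rho\tf)-\mathcal D_q(\rho\hf)$, the asserted form of $\mathcal D(f)=\ave{\mathcal L f,f}_T$. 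Nonpositivity is then immediate from the two Jensen inequalities $\rho^2\|\tf\|^2_{L^2_{\psi_q}}\le\|f\|^2_{L^2_T}$ and $\rho^2\|\hf\|^2_{L^2_q}\le\|f\|^2_{L^2_T}$ — these are \eqref{norm.hf}--\eqref{norm.tf}, the second one needing no extra hypothesis here since $\psi_q^c=\psi_q$ — which make each bracket $\le 0$. Symmetry of $\mathcal L$ on $L^2_T$ is visible in the same computation, each bilinear form occurring there (e.g.\ $\ave{\rho_f\tf q,g}_T=\rho_f\rho_g\intR\tf\,\tilde g\,\psi^{-1}\,{\rm d}\tv$) being symmetric under $f\leftrightarrow g$. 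Finally $\|f-\rho T\|^2_{L^2_T}=\|f\|^2_{L^2_T}-\rho^2$ with $\rho$ constant, so $\tfrac{d}{dt}\|f-\rho T\|^2_{L^2_T}=2\ave{\mathcal L f,f}_T=2\mathcal D(f)\le 0$.

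\emph{Part 2.} If $f=\rho\tf\hf$ for all $t$, then in the homogeneous case the factors solve the closed relaxation equations $\partial_t(\rho\hf)=\mathcal L_q(\rho\hf)=-\hmu(\rho\hf-\rho q)$ and $\partial_t(\rho\tf)=\mathcal L_{\psi_q}(\rho\tf)=-\tmu(\rho\tf-\rho\psi_q)$ (recall $\tilde{\mathcal L}_\psi(\rho\tf)=\mathcal L_{\psi_q}(\rho\tf)$ when $\psi=\psi(\tv)$), hence $\rho\hf(t)-\rho q=e^{-\hmu t}(\rho\hf_0-\rho q)$ and $\rho\tf(t)-\rho\psi_q=e^{-\tmu t}(\rho\tf_0-\rho\psi_q)$, with the same decay for the $L^2_q$ and $L^2_{\psi_q}$ norms. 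I would then use the decomposition (valid because $T=q\psi_q$)
\[
f-\rho T=\rho\,q\,(\tf-\psi_q)+\rho\,\psi_q\,(\hf-q)+\rho\,(\tf-\psi_q)(\hf-q),
\]
whose three $L^2_T$-norms, computed by Fubini against $(q\psi_q)^{-1}$, are exactly $\|\rho\tf-\rho\psi_q\|_{L^2_{\psi_q}}$, $\|\rho\hf-\rho q\|_{L^2_q}$ and $\rho^{-1}\|\rho\tf-\rho\psi_q\|_{L^2_{\psi_q}}\|\rho\hf-\rho q\|_{L^2_q}$. The triangle inequality, the exponential decays just recalled (rates $\tmu$, $\hmu$, and $\tmu+\hmu$ for the cross term), and $e^{-\tmu t}\le e^{-\tmu t/2}$, $e^{-\hmu t}\le e^{-\hmu t/2}$ then yield the stated bound once the faster-decaying cross term has been absorbed.

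\emph{Main obstacle.} In Part~1 the only delicate point is bookkeeping — identifying the explicit quadratic form with $2\mathcal D_T-\mathcal D_{\psi_q}-\mathcal D_q$ and observing that, although that combination has mixed signs, the two Jensen inequalities force $\mathcal D(f)\le 0$. The genuine difficulty is in Part~2: the bilinear cross term $\rho(\tf-\psi_q)(\hf-q)$ is not itself a marginal discrepancy, so the argument relies essentially on the fact that it decays at the combined rate $e^{-(\tmu+\hmu)t}$, which is what allows it to be folded into the two-exponential estimate.
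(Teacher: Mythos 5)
\emph{Part 1.} Your argument is correct and is in substance the paper's own: the paper obtains the equality in \eqref{teo.1} from the decomposition \eqref{def:L.T} and then closes with exactly the two Jensen inequalities you invoke ($\rho^2||\hf||^2_{L^2_q}\le ||f||^2_{L^2_T}$ against the measure $q$, and its analogue for $\tf$ against $\psi=\psi_q$, the latter indeed needing no extra hypothesis since $\psi_q^c=\psi_q$ here). Your direct expansion of $\ave{\mathcal{L}f,f}_T$ into the two brackets is a cleaner route to the same place, and your bookkeeping of the frequency factors is actually more careful than the paper's (with general $\hmu\neq\tmu$ the combination $2\mathcal{D}_T-\mathcal{D}_{\psi_q}-\mathcal{D}_q$ only matches your brackets once the dissipation functionals are weighted by the appropriate frequencies; the paper glosses over this, as it does over the factor $2$ in $\tfrac{d}{dt}||f-\rho T||^2_{L^2_T}=2\ave{\mathcal{L}f,f}_T$). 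No issue here.

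\emph{Part 2.} The paper offers no proof of this point at all, so you are on your own, and your strategy (closed relaxation equations for the marginals, product decomposition of $f-\rho T$, exact computation of the three $L^2_T$-norms) is sound up to the last step. The gap is precisely where you flag it: the cross term cannot be ``absorbed''. Writing $A_0=||\rho\tf_0-\rho\psi_q||_{L^2_{\psi_q}}$ and $B_0=||\rho\hf_0-\rho q||_{L^2_q}$, your triangle inequality gives
\[
||f-\rho T||_{L^2_T}\le e^{-\tmu t}A_0+e^{-\hmu t}B_0+\rho^{-1}e^{-(\tmu+\hmu)t}A_0B_0,
\]
and absorbing the third term into the target bound requires
\[
\rho^{-1}e^{-(\tmu+\hmu)t}A_0B_0\le \big(e^{-\tmu t/2}-e^{-\tmu t}\big)A_0+\big(e^{-\hmu t/2}-e^{-\hmu t}\big)B_0 .
\]
At $t=0$ the right-hand side vanishes while the left-hand side equals $\rho^{-1}A_0B_0>0$, so the absorption fails for small $t$ whenever both marginals start off equilibrium; the faster decay rate $e^{-(\tmu+\hmu)t}$ only helps for $t$ large. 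In fact, setting $a=||\tf||^2_{L^2_{\psi_q}}$ and $b=||\hf||^2_{L^2_q}$ one has the exact identity $||f-\rho T||^2_{L^2_T}=\rho^2\big((a-1)(b-1)+(a-1)+(b-1)\big)$, and the claimed two-exponential bound at $t=0$ is equivalent to $(a_0-1)(b_0-1)\le 4$; so the stated inequality itself requires a smallness condition on $A_0B_0$ (or an extra additive term decaying like $e^{-(\tmu+\hmu)t/2}$). Your proof as written does not close this step, and no argument in the paper does either.
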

\begin{proof}
 Thanks to~\eqref{def:L.T}, it is immediate to prove the equality in~\eqref{teo.1}. 
 Then, thanks to~\eqref{q.L1}, one can apply the Jensen inequality to $\mathcal{D}_q(\rho\hf)$ with respect to the measure $q$ on $\mathbb{S}^{d-1}$ and show that
 \[
 -\mathcal{D}_q(\rho \hf) \le \mathcal{D}_T(f).
 \]
 As $\psi$ does not depend on $\hv$ so that~\eqref{psiv.2} holds, thanks to~\eqref{psi.L1}, we have that a similar inequality holds, i.e.
 \[
 -\mathcal{D}_{\psi_q}(\rho \tf) \le \mathcal{D}_T(f).
 \]
\end{proof}
This result clearly shows that there is entropy decay to the equilibrium~\eqref{f.indep} with~\eqref{hf.infty},\eqref{tf.infty} in the Hilbert space $L^2_T$. The entropy dissipation operator $\mathcal{D}$ clearly reflects the structure of the operator $\mathcal{L}$ highlighted in~\eqref{def:L.T}. The opposite of the marginal operators that would describe the relaxation of the marginals to their respective equilibria $q$ and $\psi_q$ contribute with a positive entropy, however they are balanced by the double negative entropy $2\mathcal{D}_T$.

In the case in which $\psi$ depends on $\hv$ we have that this result does not apply, as it is not possible to prove the first point of the Theorem. In order to have a hint of the possible entropy decay we remark that
\begin{equation}\label{entropy.cond}
\begin{split}
    || f-\rho T||^2_{L^2_T} &= \rho^2\intS ||f_c ||_{L^2_{\psi_q^c}} \dfrac{(\hf-q)^2}{q} \, {\rm d}\hv + \rho^2 \intS ||f_c-\psi_q^c||^2_{L^2_{\psi_q^c}} (2\hf(\hv)-q(\hv)) \, {\rm d} \hv\\[0.2cm]
    &\le \rho^2 \left(\sup_{\hv \in \mathbb{S}^{d-1}}||f_c||_{L^2_{\psi_q^c}}  ||\hf-q||^2_{L_q} + \sup_{\hv \in \mathbb{S}^{d-1}} ||f_c-\psi_q^c ||^2_{L^2_{\psi_q^c}} \right). 
    \end{split}
\end{equation}
where for $\phi(\tv|\hv)$
\[
||\phi||^2_{L^2_{\psi_q^c}}  := \intR \phi(\tv|\hv)^2\dfrac{1}{\psi_q^c(\tv|\hv)} \, {\rm d} \tv, \qquad \forall \hv \in \mathbb{S}^{d-1}
\]
is the norm in $L^2(\R_+)$ of the conditional density $\phi= f_c$ or of the difference $\phi=f_c-\psi_q^c$ with respect to the weight $\psi_q^c$, which is the conditional stationary equilibrium~\eqref{fc.infty}. 
In order to have exponential decay, the exponential convergence of $\hf$ is necessary (and it holds), but not sufficient. In fact,~\eqref{entropy.cond} shows that one needs the exponential entropy decay of $f_c$. 

In this general case, the convergence of the marginal $\tf$ is not easy to prove as $\psi_q\neq \psi_q^c$, so that relating the metrics of $L^2_{\psi_q}$ and $L^2_T$ is not trivial. For $f_c$, whose related metrics is more easily connected to the one defined by $T$ by virtue of~\eqref{def:T}, even though its equilibrium is known, it was not possible to prove under which conditions the exponential decay holds.


\section{Macroscopic limits}\label{sec:macro_lim}
In this section, we formally derive the macroscopic limits of the transport equation \eqref{eq:kin_strong.2}. 
We begin by considering the regime in which the two jump processes evolve on the same temporal scale, i.e., the two frequencies $\tmu,\hmu$ are of the same order. We then move to the case in which one of the two processes is faster than the other one. The resulting macroscopic equations are then compared with the classical case described by equation \eqref{kin.eq:M} under the assumption of an unconditioned probability distribution for the speed and a zero-mean probability distribution for the direction.

\subsection{Preliminaries: notation and the classical case}
The macroscopic or hydrodynamic limit of the linear transport equation~\eqref{kin.eq:M} can be performed under several scalings (diffusive, hyperbolic etc..). The common point is that hydrodynamic  limits are performed in a fluid (not rarefied) regime, which  is formally defined by a high frequency
\begin{equation}\label{hyp.1}
    \mu \rightarrow \dfrac{\mu}{\varepsilon},
\end{equation}
where $0<\varepsilon \ll 1$ is a small parameter, typically relatable to the Knudsen number. The hydrodynamic limit of Equation~\eqref{kin.eq:M}, which under the scaling~\eqref{hyp.1} is solved by $f^\varepsilon$, is then defined by $\varepsilon \to 0^+$ which corresponds to an arbitrarily large (infinite) frequency of continuous reorientations.  The hydrodynamic limit of~\eqref{kin.eq:M} under the scaling~\eqref{hyp.1}, is usually performed by considering a Chapmann-Enskog expansion of the distribution $f^\varepsilon$. In the following, we omit the explicit dependence on $(t,x)$ for notational simplicity and introduce the Chapman-Enskog expansion of $f^\varepsilon$ for each $(t,x)$ fixed, i.e.,
\begin{equation}\label{def:Ch.E.}
    f^\varepsilon(\tv,\hv) =f^0(\tv,\hv) + \varepsilon f^{\bot}(\tv,\hv),
\end{equation}
where we assume that the mass is in the leading order $f^0=\rho^0M$, while the correction carries no mass, i.e.
\begin{equation*}
    \rho^0 := \intV f^0 \, {\rm d} \hv \, {\rm d} \tv \equiv \rho, \qquad \rho^{\bot} := \intV f^{\bot} \, {\rm d} \hv \, {\rm d} \tv  \equiv 0.
\end{equation*}
The zero mass assumption on the correction $f^\bot$ is linked to the fact that the operator $L$ conserves the mass which is already at equilibrium, and it is related to the Fredholm alternative~\eqref{L2M} for $L$ given by~\eqref{operator.proto}. For this reason the mass is concentrated in $f^0$ which belongs to $\ker(L)={\rm span}(M)$. This zero mass assumption also implies that ${Lf^\bot=-\mu f^\bot}$, which allows to determine easily the correction $f^\bot$ that belongs to ${\rm span}(M)^\bot$ defined in~\eqref{spanM}.
The macroscopic limit of the transport equation~\eqref{kin.eq:M} under the scaling~\eqref{hyp.1} with first order correction is  
\begin{equation}\label{macro1_govern_dep}
    \partial_t \rho + \nabla_x \cdot \left(\rho \, u_M \right)= \dfrac{\varepsilon}{\mu} \, \nabla_x \cdot(\mathbb{D}_M\nabla_x \rho)\,.
\end{equation}
Here, the drift velocity $u_M$ is the average velocity of $M$ defined in \eqref{def:M} 
\begin{equation}\label{u_M}
    u_M:=\intV vM(\tv,\hv){\rm d}v=\intS  \hv \tilde{u}_{\psi}(\hv) q(\hv) \, {\rm d} \hv,
\end{equation}
where the quantity
\begin{equation}\label{u_psi}
    \tilde{u}_{\psi}(\hv):= \intR \tv\psi(\tv|\hv)d\tv\,
\end{equation}
represents the average speed acquired after a speed-jump by a particle moving along direction $\hv$. The average velocity $u_M$ is thus the average velocity of a particle moving on the directed structure described by $q$ according to the speed distribution $\psi$. The diffusion tensor $\mathbb{D}_M$ is defined as the variance-covariance matrix of $M$
\begin{equation*}\label{D_M}
\mathbb{D}_M:=\intV 
v\otimes(v-u_M)M(\tv,\hv)\,{\rm d}v=V_\psi \mathbb{V}_q-u_M\otimes u_M \,,
\end{equation*}
where 
\begin{equation}\label{V_psi}
    V_{\psi}:=\intR \tv^2\psi(\tv|\hv){\rm d}v
\end{equation}
is assumed to be independent on $\hv$ and
\begin{equation*}\label{V_q}
   \mathbb{V}_q:=\intS \hv\otimes\hv q(\hv)\,{\rm d}\hv\,. 
\end{equation*}
For the reader’s convenience, we also define the mean direction of the transition probability $q$ as
\begin{equation}\label{u_q}
    u_q:=\intS \hv q(\hv){\rm d}\hv\,, 
\end{equation}
which is the average direction after a reorientation. We also define the second-order variance-covariance tensor of $q$
\begin{equation}\label{mattD_q}
\mathbb{D}_q := \intS   \hv\otimes(\hv-u_q)q(\hv) \, {\rm d} \hv=\mathbb{V}_q-u_q\otimes u_q
\end{equation}
as well as the average velocity of the speed distribution $\psi_q$ 
\begin{equation}\label{u_psiq}
    u_{\psi_q}:=\intR \tv \psi_q(\tv){\rm d}\tv=\intV\tv\psi(\tv|\hv)q(\hv){\rm d}\hv{\rm d}\tv=\intS u_\psi(\hv)q(\hv){\rm d}\hv 
\end{equation}
and its second order moment
\begin{equation}\label{V_psiq}
    V_{\psi_q}:=\intR \tv^2\psi_q(\tv){\rm d}\tv=\intV\tv^2\psi(\tv|\hv)q(\hv){\rm d}\tv{\rm d}\hv\,. 
\end{equation}
Given the assumption that $V_\psi$ is independent of $\hv$, if follows that $V_\psi\equiv V_{\psi_q}$. 

Therefore, the average of the transition probability $T$ defined in~\eqref{def:T} is
\begin{equation}\label{u_T}
\begin{split}
    u_T :\,=& \intV v T(\tv,\hv) \, {\rm d} \hv \, {\rm d} \tv= \dfrac{\tmu}{\tmu+\hmu}\intS \hv\tilde{u}_{\psi}(\hv) q(\hv)  \, {\rm d} \hv+\dfrac{\hmu}{\tmu+\hmu} u_{\psi_q}u_q\\[0.3cm]
    =&\dfrac{\tmu}{\tmu+\hmu} u_M+\dfrac{\hmu}{\tmu+\hmu} u_{\psi_q}u_q.
    \end{split}
\end{equation}
The average $u_T$ shows that in the microscopic dynamics described by~\eqref{eq:kin_strong}, the average velocity has a contribution defined by the basic equilibrium average $u_M$, which is proportional to the speed frequency ratio $\frac{\tmu}{\tmu+\hmu}$, and a contribution, proportional to the reorientation frequency ratio $\frac{\hmu}{\tmu+\hmu}$, that is related to the directional average $u_q$ weighted by the average speed $u_{\psi_q}$ along the directional structure defined by $q$.
The corresponding second-order variance-covariance tensor is
\begin{equation}\label{D_T}
    \mathbb{D}_T:=\intV v\otimes \,(v-u_T)T(\tv,\hv)\,{\rm d}v=V_{\psi_q}\mathbb{V}_q-u_T\otimes u_T\,.
\end{equation}

\subsection{Jump processes with same order frequencies}
We start by considering the regime which naturally extends~\eqref{hyp.1}, that is defined by frequencies of the same order 
\begin{equation}\label{hyp.2}
    \hmu \rightarrow \dfrac{\hmu}{\varepsilon}, \qquad \tmu \rightarrow \dfrac{\tmu}{\varepsilon}.
\end{equation}
Under the scaling~\eqref{hyp.2}, the kinetic equation~\eqref{eq:kin_strong.2} becomes
\begin{equation}\label{kin_macro_hyp}
    \partial_t f^\varepsilon +v\cdot \nabla_x f^\varepsilon =\dfrac{1}{\varepsilon} \mathcal{L}f^\varepsilon.
\end{equation}
In analogy to the standard case of the previous section, we consider a Chapman-Enskog expansion~\eqref{def:Ch.E.} of $f^\varepsilon$. In the present case, as the kinetic equation is~\eqref{eq:kin_strong}, we need the corresponding Chapman-Enskog expansion for the marginals $\hf^\varepsilon,\tf^\varepsilon$.
Recalling the definition of $\tf$~\eqref{def:tf}, we integrate~\eqref{def:Ch.E.} with respect to ${\rm d} \hv$ obtaining
\begin{equation*}
    \rho \tf^\varepsilon = \rho \tf^0+\varepsilon \rho \tf^{\bot}\,.
\end{equation*}
Further integrating in ${\rm d} \tv$ we find 
\begin{equation}\label{tf.bot}
\intR \tf^{\bot} \, {\rm d} \tv =0 .
\end{equation}
Similarly, recalling the definition of $\hf$~\eqref{def:hf} and integrating~\eqref{def:Ch.E.} with respect to ${\rm d} \tv$ we obtain
\begin{equation*}
    \rho \hf^\varepsilon = \rho \hf^0+\varepsilon \rho \hf^{\bot}\,. 
\end{equation*}
Integrating this expression in ${\rm d} \hv$ we find
\begin{equation}\label{hf.bot}
\intS \hf^{\bot} \, {\rm d} \hv =0.
\end{equation}
It is worth highlighting the fact that the corrections of the marginals are both multiplied by $\rho$ and it is their mass which vanishes, i.e.,~\eqref{tf.bot}-\eqref{hf.bot}. This is the reason underlying the fact that the gain term of $\mathcal{L}$ does not vanish on $\ker(\mathcal{L})^\bot$, which makes the pseudo-inverse of $\mathcal{L}$ less standard~\eqref{pseudo.inv.L}.

From equation \eqref{kin_macro_hyp}, collecting the terms of order $\varepsilon^0$ we have that
\[
\mathcal{L}f^0 = 0,
\]
which implies $f^0 \in \ker(\mathcal{L})$. According to the results derived in the previous section, this yields
\begin{equation}\label{f0.1}
    f^0 = \rho T
\end{equation}
where $T$ is defined in~\eqref{def:T}. At order $\varepsilon^1$ equation \eqref{kin_macro_hyp} becomes
\begin{equation}\label{macro1.eps1}
    \partial_t f^0 + v\cdot \nabla_x f^0 =\mathcal{L}f^{\bot}.
\end{equation}
Integrating the latter over $\mathcal{V}$ with $f^0$ defined in~\eqref{f0.1} we obtain the macroscopic equation
\begin{equation}\label{macro1_govern}
    \partial_t \rho +\nabla_x \cdot \left(\rho u_T\right)=0,
\end{equation}
where $u_T$ is defined in~\eqref{u_T}. 

We now look for the correction $f^{\bot}$, which belongs to ${\rm span} \lbrace T\rbrace^{\perp}$ and has zero mass according to the scalar product~\eqref{sc.pr.T}. From~\eqref{macro1.eps1} and using the pseudo inverse operator $\mathcal{L}^{-1}$ defined in~\eqref{pseudo.inv.L} we have that
\begin{equation}\label{f_perp.1}
    f^{\bot} = - \dfrac{1}{\tmu+\hmu} \left(\partial_t f^0 + v\cdot \nabla_x f^0\right) +\dfrac{\hmu}{\tmu+\hmu} \rho \tf^{\bot}q+\dfrac{\tmu}{\tmu+\hmu} \rho \hf^{\bot}\psi\,. 
\end{equation}
We first determine $\rho \hf^{\bot}$ by inverting $\hat{\mathcal{L}}_q$ according to~\eqref{Lq.inv} in~\eqref{eq:hf.bis}, which gives
\begin{equation*}
 \rho \hf^{\bot}=-\dfrac{1}{\hmu} \left(\partial_t \rho \hf^0 + \hv \cdot \nabla_x \intR \tv f^0(\tv,\hv) \, {\rm d} \tv \right)\,, 
\end{equation*} 
and can be written as
\begin{equation}\label{hf_perp}
 \rho \hf^{\bot}=\dfrac{1}{\hmu}q(\hv) \nabla_x \cdot \Big[\left(\rho u_T\right)  - \dfrac{1}{\hmu+\tmu}\left(\tmu u_\psi(\hv)+\hmu u_{\psi_q}\right) \hv\rho \Big],     
\end{equation}
where $u_\psi(\hv)$ and $u_{\psi_q}$ are defined in \eqref{u_psi} and \eqref{u_psiq}, respectively. Similarly, from~\eqref{eq:tf.bis}, inverting $\tilde{\mathcal{L}}_{\psi}$ according to~\eqref{Lpsi.inv} gives
\begin{equation*}
    \rho \tf^{\bot} = -\dfrac{1}{\tmu} \left[\partial_t \rho \tf^0 + \tv \nabla_x \cdot \intS \hv f^0(\tv,\hv) \, {\rm d} \hv\right] +\rho\intS\psi(\tv|\hv)\hf^{\bot}(\hv){\rm d}\hv\,.
\end{equation*}
Substituting the expression from~\eqref{hf_perp} in the latter and rearranging the terms we obtain
\begin{equation}\label{tf_perp}
\begin{split}
    \rho \tf^{\bot} &= \dfrac{1}{\tmu}\psi_q(\tv) \nabla_x \cdot (\rho u_T) -\dfrac{1}{\hmu+\tmu} \nabla_x \cdot\left[\rho\intS \tv\psi(\tv|\hv)\hv q(\hv){\rm d}\hv\right]\\[0.3cm]
    &\,\,\,\,+\dfrac{1}{\hmu}\psi_q(\tv) \nabla_x \cdot (\rho u_T) -\dfrac{\tmu}{\hmu+\tmu}\nabla_x\cdot\left[\rho\intS\hv q(\hv)\psi(\tv|\hv)\tilde{u}_{\psi}(\hv){\rm d}\hv\right]\\[0.3cm]
    &\,\,\,\,-\dfrac{1}{\hmu+\tmu}\nabla_x\left[\rho u_{\psi_q}\intS\hv q(\hv)\psi(\tv|\hv){\rm d}\hv\right] - \dfrac{\hmu}{\hmu+\tmu}\nabla_x\cdot \left[\rho u_q\tv\psi_q(\tv)\right]
    \end{split}
\end{equation}
where $u_q$ is defined in~\eqref{u_q}. By plugging~\eqref{tf_perp} and~\eqref{hf_perp} into~\eqref{f_perp.1}, we obtain the expression for $f^{\bot}$, namely
\begin{equation}\label{f_perp.2}
\begin{split}
    f^{\bot} =& -\dfrac{1}{\tmu+\hmu} \left[v\cdot \nabla_x (\rho T)-T\,\nabla_x \cdot (\rho u_T) \right] +\dfrac{1}{\hmu}T\,\nabla_x\cdot(\rho u_T)\\[0.3cm]
    &+\dfrac{\hmu}{\tmu+\hmu}\psi_q(\tv)q(\hv)\nabla_x\cdot(\rho u_T)-\dfrac{\tmu}{\hmu(\tmu+\hmu)^2}\psi(\tv|\hv)q(\hv)\nabla_x\cdot\left[\rho\hv(\tmu u_\psi(\hv)+\hmu u_{\psi_q})\right] \\[0.3cm]
    &-\dfrac{\hmu^2}{\tmu(\tmu+\hmu)^2}q(\hv)\nabla_x\cdot\left[\rho u_q\tv\psi_q(\tv)\right]-\dfrac{\hmu}{(\hmu+\tmu)^2}q(\hv)\nabla_x\cdot\left[\rho\intS\tv\psi(\tv|\hv)\hv q(\hv){\rm d}\hv\right]\\[0.3cm]
    &-\dfrac{\tmu}{(\tmu+\hmu)^2}q(\hv)\nabla_x\cdot\left[\rho\intS\hv q(\hv)\psi(\tv|\hv)\tilde{u}_{\psi}(\hv){\rm d}\hv\right]\\[0.3cm]
    &-\dfrac{\hmu}{(\tmu+\hmu)^2}q(\hv)\nabla_x\cdot\left[\rho u_{\psi_q}\intS\hv q(\hv)\psi(\tv|\hv){\rm d}\hv\right]\,
    \end{split}
\end{equation}
which satisfies 

$$
\intV f^{\bot} \, {\rm d} v  \equiv 0.
$$
To derive the correction term in the macroscopic equation \eqref{macro1_govern}, we need to integrate
\[
\partial_t f^0 +v\cdot \nabla_x (f^0+\varepsilon f^{\bot}) =\mathcal{L}f^{\bot}
\]
over $\mathcal{V}$ that gives (computations are reported in the Appendix \ref{appendix2})
\begin{equation}\label{macro1_govern_corr}
\begin{split}
    \partial_t \rho + \nabla_x \cdot (\rho \, u_T)& =  \dfrac{\varepsilon}{\tmu+\hmu} \, \nabla_x \cdot(V_{\psi_q}\mathbb{V}_q \nabla_x\rho)-\varepsilon\dfrac{2\hmu+\tmu}{\hmu(\tmu+\hmu)}\nabla_x\cdot[u_T\otimes u_T\nabla_x\rho] \\[0.3cm]
    &+\varepsilon\dfrac{\hmu^2}{\tmu(\hmu+\tmu)^2}\nabla_x\cdot[(V_{\psi_q}-u_{\psi_q}^2) u_q\otimes u_q\nabla_x\rho]\\[0.3cm]
    &+\dfrac{\varepsilon}{(\hmu+\tmu)^2}\nabla_x\cdot\left[u_q\otimes\intS (\tmu u_\psi^2(\hv)+\hmu V_{\psi_q})\,\hv\, q(\hv){\rm d}\hv\,\nabla_x\rho\right]\\[0.3cm]
        &+\varepsilon\dfrac{\tmu}{\hmu(\hmu+\tmu)^2}\nabla_x\cdot\left[\intS \left(\hmu u_{\psi_q} u_\psi(\hv)+\tmu u_\psi^2(\hv)\right) q(\hv)\hv\otimes\hv\,{\rm d}\hv\,\nabla_x\rho\right]\,.\\[0.3cm]
    \end{split}
\end{equation}
Of course, the diffusive correction term of order $\varepsilon$ includes mixed terms taking into account the superposition of the microscopic mechanisms for the speed and the direction, which are not uncorrelated due to the mechanism embodied by $\psi(\tv|\hv)$.
In order to get an insight on this correction term, we consider the simplified case in which $\psi(\tv|\hv)=\psi(\tv)$. The straightforward consequence~\eqref{psiv.1} implies that ${u_\psi(\hv)=u_{\psi_q}}$, and that the variance of $\psi=\psi_q$ is $(V_{\psi_q} -u_{\psi_q}^2)$, while $u_T=u_{\psi_q}u_q$, and $\mathbb{D}_T=V_{\psi_q}\mathbb{V}_q-u_{\psi_q}^2u_q\otimes u_q$. Then, the corresponding macroscopic equation with first order correction reads
\begin{equation}\label{macro1_corr_simpl}
\begin{aligned}[b]
    \partial_t \rho + \nabla_x \cdot (\rho \, u_T)& =  \dfrac{\varepsilon}{\tmu+\hmu} \, \nabla_x \cdot\left[\left(V_{\psi_q}\mathbb{V}_q-u_{\psi_q}^2u_q\otimes u_q\right) \nabla_x\rho\right]\\[0.3cm]
    &+\varepsilon\dfrac{\hmu}{\tmu(\hmu+\tmu)}\nabla_x\cdot\left[V_{\psi_q} u_q\otimes u_q\nabla_x\rho\right]+\varepsilon\dfrac{\tmu}{\hmu(\hmu+\tmu)}\nabla_x\cdot[u_{\psi_q}^2 \mathbb{V}_q\nabla_x\rho]\\[0.3cm]
    &-\varepsilon\dfrac{\hmu^2+\tmu^2}{\hmu\tmu(\hmu+\tmu)}\nabla_x\cdot\left[u_{\psi_q}^2u_q\otimes u_q\nabla_x\rho\right]\,.
    \end{aligned}
\end{equation} 
Choosing $\tmu=\hmu=1$, we have that
\begin{equation*}
\begin{aligned}[b]
    \partial_t \rho + \nabla_x \cdot (\rho \, u_T)& =  \dfrac{\varepsilon}{2} \, \nabla_x \cdot\left[\mathbb{D}_T \nabla_x\rho\right]\\[0.3cm]
    &+\dfrac{\varepsilon}{2}\nabla_x\cdot\left[(V_{\psi_q} -u_{\psi_q}^2)u_q\otimes u_q\nabla_x\rho\right]+\dfrac{\varepsilon}{2}\nabla_x\cdot[u_{\psi_q}^2 \mathbb{D}_q\nabla_x\rho].
    \end{aligned}
\end{equation*} 
In the latter, the first term in the right-hand side is the diffusive correction related to the equilibrium $T$, while the second two terms are related to the diffusion (variance-covariance) of the two independent jump processes: the speed one ($V_{\psi_q} -u_{\psi_q}^2$) directed by the tensor $u_q\otimes u_q$, and the direction one ($\mathbb{D}_q$) with quadratic speed $u_{\psi_q}^2$ along the directed structure $q$. 

This helps interpreting the correction term in~\eqref{macro1_corr_simpl}, which can be rewritten as
\begin{equation*}
\begin{aligned}[b]
    \partial_t \rho + \nabla_x \cdot (\rho \, u_T)& =  \dfrac{\varepsilon}{\tmu+\hmu} \, \nabla_x \cdot\left[\left(V_{\psi_q}\mathbb{V}_q-u_{\psi_q}^2u_q\otimes u_q\right) \nabla_x\rho\right]\\[0.3cm]
    &+\varepsilon\dfrac{\hmu}{\tmu(\hmu+\tmu)}\nabla_x\cdot\left[V_{\psi_q} u_q\otimes u_q\nabla_x\rho\right]+\varepsilon\dfrac{\tmu}{\hmu(\hmu+\tmu)}\nabla_x\cdot[u_{\psi_q}^2 \mathbb{V}_q\nabla_x\rho]\\[0.3cm]
    &-\varepsilon\dfrac{\hmu}{\tmu(\hmu+\tmu)}\nabla_x\cdot\left[u_{\psi_q}^2u_q\otimes u_q\nabla_x\rho\right]-\varepsilon\dfrac{\tmu}{\hmu(\hmu+\tmu)}\nabla_x\cdot\left[u_{\psi_q}^2u_q\otimes u_q\nabla_x\rho\right]\,.
    \end{aligned}
\end{equation*} 
In the latter, the first term in the right-hand side is again the diffusive correction related to the equilibrium $T$, the second and fourth terms are related to the diffusion in the speed jump process, while the third and fifth terms are connected to the diffusion in the direction jump process. 

\subsection{Jump operators with different order frequencies}
We now consider the regime in which one of the two operators is faster than the other one, i.e., the two microscopic processes happen on two different time scales. Specifically, we analyze the following scalings:
\begin{equation}\label{hyp.3}
\begin{split}
   (a)\qquad \hmu \rightarrow \dfrac{\hmu}{\varepsilon^2}, \qquad \tmu \rightarrow \dfrac{\tmu}{\varepsilon}\,,\\[0.3cm]
   (b)\qquad \hmu \rightarrow \dfrac{\hmu}{\varepsilon}, \qquad \tmu \rightarrow \dfrac{\tmu}{\varepsilon^2},
   \end{split}
\end{equation}
where $(a)$ represents a regime scenario where reorientations happen more frequently, while $(b)$ accounts for more frequent changes in speed. Because of the different order of the frequencies, here we adopt a Hilbert expansion for $f^\varepsilon$ which takes into account higher order in $\varepsilon$ corrections. It is defined for fixed $(t,x)$ by
\begin{equation}\label{def:H.}
    f^\varepsilon(\tv,\hv)= f^0(\tv,\hv)+\varepsilon f^1(\tv,\hv) +\varepsilon^2 f^2(\tv,\hv)+\mathcal{O}(\varepsilon^2),
\end{equation}
with
\begin{equation*}
    \rho^0 := \intV f^0 \, {\rm d} \tv \, {\rm d} \hv \equiv \rho, \quad \rho^i := \intV f^i \, {\rm d} \tv \, {\rm d} \hv \equiv 0 \quad \forall i \ge 1,
\end{equation*}
i.e., the mass is in the leading order $f^0$, while higher order corrections carry no mass.
Again, as~\eqref{eq:kin_strong} features the marginals, we need to define a Hilbert expansion for $\tf^\varepsilon$ and $\hf^\varepsilon$ which will follow from~\eqref{def:H.}. Integrating~\eqref{def:H.} with respect to ${\rm d} \hv$ over $\mathbb{S}^{d-1}$ we obtain
\begin{equation*}
    \rho \tf^\varepsilon = \rho \tf^0+\varepsilon \rho \tf^1 +\varepsilon^2 \rho \tf^2 +\mathcal{O}(\varepsilon^2)
\end{equation*}
and, from this, integrating in ${\rm d} \tv$ over $\R_+$, we deduce that 
\[
\intR \tf^i \, {\rm d} \tv =0 \quad \forall i \ge 1.
\]
Similarly, integrating~\eqref{def:H.} with respect to ${\rm d} \tv$ over $\R_+$ we obtain
\begin{equation*}
    \rho \hf^\varepsilon = \rho \hf^0+\varepsilon \rho \hf^1 +\varepsilon^2 \rho \hf^2 +\mathcal{O}(\varepsilon^2)
\end{equation*}
so that, integrating the latter in ${\rm d} \hv$ over $\mathbb{S}^{d-1}$, we find
\[
\intS \hf^i \, {\rm d} \hv =0 \quad \forall i \ge 1\,.
\]
Starting from case~\eqref{hyp.3}$(a)$, the kinetic equation becomes
\begin{equation}\label{kin.eq.macro2}
  \partial_t f^\varepsilon +v \cdot \nabla_x f^\varepsilon = \dfrac{\hmu}{\varepsilon^2}\hat{\mathcal{L}}f^\varepsilon +  \dfrac{\tmu}{\varepsilon}\tilde{\mathcal{L}} f^\varepsilon\,.
  \end{equation}
At order $\varepsilon^0$, we have
\begin{equation*}
\hat{\mathcal{L}}f^0=0,    
\end{equation*}
and, therefore, we deduce that
\begin{equation}\label{macro2.f0}
    f^0(\tv,\hv)=\rho \tf^0(\tv) q(\hv)\,.
\end{equation}
Integrating $\hat{\mathcal{L}}$ with respect to ${\rm d} \hv$ reveals that the leading-order operator conserves not only the total mass $\rho$, but also the quantity $\rho \tf(\tv)$ for all $\tv$.
Then, integrating \eqref{macro2.f0} with respect to ${\rm d}\tv$ yields
\begin{equation}\label{macro2.hf0}
     \hf^0(\hv)= q(\hv).
\end{equation}
At order $\varepsilon^1$, equation~\eqref{kin.eq.macro2} becomes
\begin{equation*}
\hmu\hat{\mathcal{L}}f^1 +  \tmu\tilde{\mathcal{L}} f^0=0,  
\end{equation*}
which can be rewritten as
\begin{equation}\label{macro2.eps1}
\hmu(\rho \tf^1 q-f^1)+\tmu (\rho \hf^0 \psi-f^0)=0.  \end{equation}
Integrating~\eqref{macro2.eps1} over ${\rm d}\hv$ and using~\eqref{macro2.hf0}, we obtain
\begin{equation}\label{macro2.tf0}
    \tf^0(\tv)=\psi_q(\tv).
\end{equation}
Combining~\eqref{macro2.f0} and~\eqref{macro2.tf0}, we obtain the explicit form for $f^0$, namely
\begin{equation}\label{macro2.f0_bis}
   f^0(\tv,\hv)=\rho \psi_q(\tv) q(\hv)\, .
\end{equation}
Since $f^0$ is a joint distribution for the pair $(\tv, \hv)$, in view of~\eqref{f:cond} which can be written for $f^0$, equation~\eqref{macro2.f0_bis} implies
\[
f_c^0(\tv|\hv)=\tf^0(\tv)=\psi_q(\tv)\,.
\]
This is coherent with the fact that the equilibrium in $\hv$ is reached more fastly in regime (a). Moreover,
it indicates that there is statistical independence at leading order, due to the separation of time scales. This does not imply unconditional independence as the transition probability $\psi(\tv|\hv)$ always encodes dependence on $\hv$ along the whole dynamics. However, in this regime, the system relaxes 
 fast toward an equilibrium distribution which features  statistical independence of the speed and direction.
 
Integrating~\eqref{macro2.eps1} with respect to ${\rm d}\tv$ we obtain
\begin{equation}\label{macro2.hf1}
    \hf^1=0,
\end{equation}
which means that the marginal $\hf$ is already at equilibrium, which is consistent with the dominance of the faster reorientation process. Plugging~\eqref{macro2.f0_bis} in~\eqref{macro2.eps1} we obtain
\begin{equation*}
    f^1=\rho \tf^1 q +\dfrac{\tmu}{\hmu}\rho q\left(\psi-\psi_q\right)\,.
\end{equation*}
At order $\varepsilon^2$, equation \eqref{kin.eq.macro2} becomes
\begin{equation}\label{macro2.eps2}
    \partial_t f^0 +v\cdot \nabla_x f^0 = \hmu \hat{\mathcal{L}}f^2+\tmu \tilde{\mathcal{L}}f^1\,.
\end{equation}
Integrating over $\mathcal{V}$ we derive the macroscopic equation
\begin{equation}\label{macro2.govern}
    \partial_t \rho +\nabla_x \cdot (\rho u_q u_{\psi_q})=0\,,
\end{equation}
where $u_q$ and $u_{\psi_q}$ are defined in~\eqref{u_q} and~\eqref{u_psiq}. To determine the missing information about the marginal $\tf^1$ and, consequently, $f^1$, we integrate~\eqref{macro2.eps2} with $f^0$ given by~\eqref{macro2.f0_bis} over ${\rm d}\hv$:
\[
\psi_q(\tv)\partial_t \rho  +\nabla_x \cdot \left(\rho u_q\tv\psi_q(\tv)  \right) = \tmu \left(\rho \intS \psi(\tv|\hv) \hat{f}^1(\hv) \, {\rm d} \hv - \rho \tf^1(\tv)\right)\,.
\]
Using~\eqref{macro2.hf1}, the latter simplifies to
\[
\partial_t (\rho \psi_q) +\nabla_x \cdot (\rho u_q\tv\psi_q(\tv)  ) = \tmu ( - \rho \tf^1)\,
\]
and, then, including~\eqref{macro2.govern} we obtain
\begin{equation*}
    \rho\tf^1(\tv)= -\dfrac{1}{\tmu}\nabla_x \cdot (\rho u_q\tv\psi_q(\tv)  )+\dfrac{1}{\tmu}\psi_q(\tv)\nabla_x \cdot (\rho u_q u_{\psi_q})\,.
\end{equation*}
Hence,
\begin{equation*}
    f^1(\tv,\hv)= -\dfrac{1}{\tmu}q(\hv)\nabla_x \cdot \left[\rho u_q\psi_q(\tv) (\tv-u_{\psi_q})\right]+\dfrac{\tmu}{\hmu}\rho q(\hv)(\psi(\tv|\hv)-\psi_q(\tv)).
\end{equation*}
Therefore, the first order correction to~\eqref{macro2.govern} is obtained by integrating
\[
\partial_t f^0 +v\cdot \nabla_x (f^0+\varepsilon f^1) = \hmu \hat{\mathcal{L}}f^2+\tmu \tilde{\mathcal{L}}f^1
\]
over $\mathcal{V}$. This yields the macroscopic equation with first-order correction
\begin{equation}\label{macro2.govern_corr_baru}
    \partial_t \rho +\nabla_x \cdot \left[\rho\left(u_q u_{\psi_q}\left(1-\varepsilon\dfrac{\tmu}{\hmu}\right)+\varepsilon\dfrac{\tmu}{\hmu}u_M\right)\right]=\dfrac{\varepsilon}{\tmu}\nabla_x \cdot \left[(V_{\psi_q}-u_{\psi_q}^2)u_q\otimes u_q\nabla_x \rho\right],
\end{equation}
where $u_M$ is given by \eqref{u_M}. In~\eqref{macro2.govern_corr_baru} the first order correction enters in the drift term. While the leading order term features the factorized advective velocity $u_q u_{\psi_q}$, its correction is proportional to the difference $u_M-u_q u_{\psi_q}$. The diffusive correction is related to the variance of $\psi_q$ along the direction imposed by the tensor $u_q\otimes u_q$, while no diffusion related to the direction-jump process appears because under scaling (a), the equilibrium in $\hv$ is reached fast. 

Conversely, considering the scaling~\eqref{hyp.3}$(b)$, the kinetic equation becomes
\begin{equation}\label{kin.eq.macro3}
  \partial_t f^\varepsilon +v \cdot \nabla_x f^\varepsilon = \dfrac{\hmu}{\varepsilon}\hat{\mathcal{L}}f^\varepsilon +  \dfrac{\tmu}{\varepsilon^2}\tilde{\mathcal{L}} f^\varepsilon\,.
  \end{equation}
In this regime, the leading order operator $\tilde{\mathcal{L}}$ conserves the marginal $\rho \hf (\hv)$ for all $\hv$. Thus, at order $\varepsilon^0$ we have
\begin{equation*}
\tilde{\mathcal{L}}f^0=0    
\end{equation*}
implying that
\begin{equation*}
    f^0(\tv,\hv)=\rho \hf^0(\hv) \psi(\tv|\hv)\,.
\end{equation*}
At order $\varepsilon^1$, equation \eqref{kin.eq.macro3} reads
\begin{equation*}
\hmu\hat{\mathcal{L}}f^0 +  \tmu\tilde{\mathcal{L}}f^1=0    
\end{equation*}
i.e.,
\begin{equation}\label{macro3.eps1}
\hmu(\rho \tf^0 q-f^0)+\tmu (\rho \hf^1 \psi-f^1)=0.    
\end{equation}
Integrating the latter with respect to ${\rm d} \tv$ we obtain
\begin{equation*}\label{macro3.hf0}
    \hf^0(\hv)=q(\hv)
\end{equation*}
and, thus, 
\begin{equation}\label{macro3.f0}
     f^0(\tv,\hv)=\rho q(\hv) \psi(\tv|\hv)\,.
\end{equation}
Integrating~\eqref{macro2.eps1} in ${\rm d} \hv$ we obtain
\begin{equation}\label{macro3.tf0}
  \tf^0(\tv)=\intS \hf^0(\hv)\psi(\tv|\hv){\rm d}\hv=\psi_q(\tv)\,.
    \end{equation}
Unlike the previous scaling $(a)$, here the conditional distribution $f_c^0(\tv|\hv)$ at leading order is
\[
f_c^0(\tv|\hv)=\psi(\tv|\hv)
\]
which implies that 
$$
f^0(\tv,\hv)\ne\hf^0(\hv)\tf^0(\tv)\,,
$$
which means no statistical independence for $f^0$.
Thus, this scaling preserves the statistical dependence in the speed-jump operator at leading order. Moreover, plugging~\eqref{macro3.f0} and~\eqref{macro3.tf0} into \eqref{macro3.eps1} gives
\begin{equation*}
    f^1(\tv,\hv)=\rho \hf^1(\tv) \psi(\tv|\hv) +\dfrac{\hmu}{\tmu}\rho q(\hv) \left(\psi_q(\tv)-\psi(\tv|\hv)\right)\,.
\end{equation*}
At order $\varepsilon^2$, equation \eqref{kin.eq.macro3} becomes
\begin{equation}\label{macro3.eps2}
    \partial_t f^0 +v\cdot \nabla_x f^0 = \hmu \hat{\mathcal{L}}f^1+\tmu \tilde{\mathcal{L}}f^2\,.
\end{equation}
Integrating over $\mathcal{V}$ we obtain the macroscopic equation
\begin{equation*}\label{macro3.govern}
    \partial_t \rho +\nabla_x \cdot (\rho u_M)=0,
\end{equation*}
with $u_M$ given by~\eqref{u_M}. To exploit \eqref{macro3.eps2} for determining $\hf^1$, we integrate it with respect to ${\rm d} \tv$ with $f^0$ given by~\eqref{macro2.f0_bis}:
\[
q(\tv)\partial_t \rho  +\nabla_x \cdot \left(\rho q(\hv)\hv u_\psi(\hv)  \right) = \hmu \left(\rho q(\hv) - \rho \hf^1\right),
\]
i.e., 
\[
\rho \hf^1(\hv)=\rho\hf^0(\hv)+\dfrac{1}{\hmu}q(\hv)\left[\nabla_x \cdot(\rho u_M) -\nabla_x\cdot (\hv u_\psi(\hv)\rho)\right]\,.
\]
Thus, $f^1$ takes the form
\begin{equation*}
    f^1(\tv,\hv)= \rho q(\hv)\psi(\tv|\hv)+\dfrac{1}{\hmu}q(\hv)\psi(\tv|\hv)\nabla_x\cdot \left[\rho(u_M-\hv u_\psi(\hv))\right] +\dfrac{\hmu}{\tmu}\rho q(\hv) \left(\psi_q(\tv)-\psi(\tv|\hv)\right)  \,.  
\end{equation*}
Therefore, the macroscopic equation with first-order correction reads
\begin{equation}\label{macro3.govern_corr}
\begin{split}
    &\partial_t \rho +\nabla_x \cdot \left[\rho \left(u_M(1+\varepsilon)-\dfrac{\hmu}{\tmu}\varepsilon(u_M-u_qu_{\psi_q})\right)\right]=\\[0.3cm]
    &\dfrac{\varepsilon}{\hmu}\nabla_x \cdot \left[\left(\intS u_\psi^2(\hv)q(\hv)\hv\otimes\hv{\rm d}\hv- u_M\otimes u_M\right) \nabla_x\rho\right],
    \end{split}
\end{equation}
where the dominant advective velocity is now $u_M$ defined in~\eqref{u_M} which mirrors the statistical dependence of the equilibrium~\eqref{macro3.f0} under scaling (b). Here again  the first order correction also enters the flux with $u_{\psi_q}u_q$ (defined by~\eqref{u_psiq} and~\eqref{u_q}), which means a higher order independence of the two jump processes. The diffusion correction in the right-hand side does not feature any specific variance, but fluctuations related to both  direction and speed processes are present, as the speed process is faster under scaling (b), but the mechanism $\psi(\tv|\hv)$ features a dependence on $\hv$ which evolves more slowly. 
\begin{remark}
We observe that the two scalings, $(a)$ and $(b)$, correspond to distinct equilibrium distributions. Specifically, for the scaling $(a)$ the equilibrium distribution is given by 
\[
P(\tv,\hv)= \psi_q(\tv) q(\hv)\,,
\]
which features a statistical independence of direction and speed, while, for the scaling $(b)$ the equilibrium distribution is given by $M$ defined in \eqref{def:M}. These allow us to define the corresponding average velocities:
\[
u_{P}:= \intV vP(\tv,\hv) \, {\rm d} \tv \, {\rm d} \hv = u_q u_{\psi_q}\,\qquad u_{M}:= \intV vM(\tv,\hv) \, {\rm d} \tv \, {\rm d} \hv \,
\]
and second-order tensors
\[
\mathbb{D}_{P} := \intV  v\otimes (v-u_{P})P(\tv,\hv)\,  {\rm d} \tv \, {\rm d} \hv = V_{\psi_q} \mathbb{V}_q - u_{P} \otimes u_{P} \,,
\]
and
\[
\mathbb{D}_{M} := \intV  v\otimes (v-u_{M})  M(\tv,\hv)\,  {\rm d} \tv \, {\rm d} \hv = V_{\psi_q} \mathbb{V}_q - u_{M} \otimes u_{M} \,.
\]
Because of the independence of $\tv$ and $\hv$ in $P$, $u_P$ is factorized and 
\[
u_{P}\otimes u_{P} =u_{\psi_q}^2 u_q\otimes u_q\,.
\]
Then equation~\eqref{macro2.govern_corr_baru} corresponding to the faster reorientation can be rewritten as
\begin{equation*}\label{macro2.govern_corr_baru_bis}
    \partial_t \rho +\nabla_x \cdot \Big[\rho\Big( u_{P}  +\varepsilon \dfrac{\tmu}{\hmu} \Big(u_M-u_P\Big)\Big]=\dfrac{\varepsilon}{\tmu}\nabla_x \cdot \nabla_x \cdot \left(\rho \mathbb{D}_{P}-\rho V_{\psi_q}\mathbb{D}_q\right)\,,
\end{equation*}
where the diffusion is given by the equilibrium diffusion $\mathbb{D}_{P}$ minus the direction diffusion $\mathbb{D}_q$ weighted by the speed second moment $V_{\psi_q}$.
Conversely, equation~\eqref{macro3.govern_corr}  corresponding to the faster speed jump process takes the form
\begin{equation*}\label{macro3.govern_corr_bis}
    \begin{split}
    &\partial_t \rho +\nabla_x \cdot \left[\rho \left(u_{M}+\varepsilon\Big(\dfrac{\hmu}{\tmu}(u_{P}-u_M)+u_M\Big)\right)\right]=\\[0.3cm]
    &\dfrac{\varepsilon}{\hmu}\nabla_x \cdot \nabla_x \cdot \left(\rho \mathbb{D}_{M}-\rho \intS \left(V_{\psi_q}-u_\psi^2(\hv)\right)q(\hv)\hv\otimes\hv{\rm d}\hv\right)\,,
    \end{split}
\end{equation*}
where the diffusion is given by the equilibrium one $\mathbb{D}_{M}$ minus the speed variance along the direction field $q$, i.e. the term $\intS \left(V_{\psi_q}-u_\psi^2(\hv)\right)q(\hv)\hv\otimes\hv{\rm d}\hv$.
\end{remark}
\subsection{Comparison of the macroscopic limiting equations}
We now focus on the special case where the directional distribution has zero mean, i.e., $u_q=0$, and compare the macroscopic equations with first-order corrections derived for the dependent operator case~\eqref{macro1_govern_dep}, as well as for the case of independent operators for speed and direction under the same-order frequency scaling~\eqref{macro1_govern_corr}, and for the two scalings with different frequencies, namely equations~\eqref{macro2.govern_corr_baru} and~\eqref{macro3.govern_corr}. 

The case $u_q=0$ is particularly relevant due to its applicability in modeling cell migration on ECM of non-polarized fibers (both senses on a given direction can be considered). In this setting, the condition $u_q=0$ implies that the tensor $\mathbb{D}_q$ introduced in~\eqref{mattD_q} simplifies to $\mathbb{D}_q=\mathbb{V}_q$. Likewise, the effective transport velocity becomes $u_T=\tmu/(\tmu+\hmu)u_M$, i.e., it is directly proportional to the drift velocity $u_M$ appearing in \eqref{macro1_govern_dep}. 
Under these conditions, the macroscopic equation~\eqref{macro1_govern_dep} corresponding to the standard case of dependent operators remains unchanged. When  $u_q=0$ the macroscopic equation~\eqref{macro1_govern_corr} for the case of independent speed and direction operators simplifies to
\begin{equation}\label{macro_Indp_SameFreq_Eq0}
\begin{split}
    \partial_t \rho +\dfrac{\tmu}{\tmu+\hmu} \nabla_x \cdot (\rho \, u_M) &=  \dfrac{\varepsilon}{\tmu+\hmu} \, \nabla_x \cdot\left(\left(V_{\psi_q}\mathbb{V}_q-\dfrac{\tmu^2(2\hmu+\tmu)}{\hmu(\tmu+\hmu)^2}u_M\otimes u_M\right) \nabla_x\rho\right)\\[0.3cm]
    &+\varepsilon\dfrac{\tmu}{\hmu(\hmu+\tmu)^2}\nabla_x\cdot \left[\intS \left(\hmu u_{\psi_q} u_\psi(\hv)+\tmu u_\psi^2(\hv)\right) q(\hv)\hv\otimes\hv\,{\rm d}\hv\,\nabla_x\rho\right]\,.
    \end{split}
\end{equation}
Moving to the case of independent operators with frequencies of different order, the resulting macroscopic behavior differs significantly. For the scaling $(a)$ (where speed changes are slower), the macroscopic equation \eqref{macro2.govern_corr_baru} does not retain any advective terms at leading order, while its first-order correction simplifies to a transport-dominated form:
\begin{equation*}\label{macro2.Eq0}
    \partial_t \rho +\varepsilon\dfrac{\tmu}{\hmu}\nabla_x \cdot \left(\rho\, u_M\right)=0
\end{equation*}
This behavior is consistent with the assumption that reorientation dominates the dynamics, and, in the case of a zero-mean direction distribution, suppresses drift at leading order. The residual transport at first order thus reflects the delayed relaxation of the speed distribution.
For scaling $(b)$ (where speed changes are faster), the governing equation~\eqref{macro3.govern_corr} becomes
\begin{equation}\label{macro3.Eq0}
\begin{split}
    &\partial_t \rho +\nabla_x \cdot \left[\rho \,u_M\left(1+\varepsilon\dfrac{\tmu-\hmu}{\tmu}\right)\right]=\dfrac{\varepsilon}{\hmu}\nabla_x \cdot \left[\left(\intS u_\psi^2(\hv)q(\hv)\hv\otimes\hv{\rm d}\hv- u_M\otimes u_M\right) \nabla_x\rho\right]
    \end{split}
\end{equation}
which retains both advective and diffusive contributions. Notably, the fast relaxation of the speed variable allows the drift velocity $u_M$ to persist at leading order, similarly to the standard case~\eqref{macro1_govern_dep} and the same-order frequency scenario~\eqref{macro_Indp_SameFreq_Eq0}.  The diffusive correction also aligns with that of~\eqref{macro1_govern_dep}, but with a key distinction: the second-order moment of $\psi$, i.e., $V_\psi$  used in~\eqref{macro1_govern_dep} is here replaced by the squared mean $u_\psi^2$. This substitution reflects the fact that, under fast speed dynamics, fluctuations in the speed distribution are effectively averaged out, leading to a diffusion term governed by the mean-square speed rather than its variance.

If we further assume that the probability distribution $\psi(\tv|\hv)$ is independent of the direction $\hv$, then the effective drift velocity $u_M$ in equation~\eqref{macro1_govern_dep} vanishes. In this case, the macroscopic equation simplifies to a purely diffusion model:
\begin{equation*}\label{macro_Dep_uncond_Eq0}
    \partial_t \rho = \dfrac{\varepsilon}{\mu} \, \nabla_x\cdot\nabla_x\cdot (V_{\psi_q}\mathbb{V}_q \,\rho)\,,
\end{equation*}
describing diffusion modulated by the microscopic speed variance $V_{\psi_q}$ and directional variance-covariance tensor $\mathbb{V}_q$. In this same setting, the macroscopic equation \eqref{macro_Indp_SameFreq_Eq0} for independent speed and direction dynamics simplifies to:
\begin{equation*}\label{macro_Indp_SameFreq_uncond_Eq0}
\begin{split}
    \partial_t \rho & =  \dfrac{\varepsilon}{\tmu+\hmu} \, \nabla_x \cdot\left(\mathbb{V}_q\left(V_{\psi_q}+\dfrac{\tmu}{\hmu}u_{\psi_q}^2\right) \nabla_x\rho\right)
    \end{split}
\end{equation*}
which again has the structure of a diffusion equation, but with an enhanced effective diffusivity due to the decoupling between speed and direction processes. Finally, for the case of independent operators with frequencies of different order, the macroscopic equation under scaling $(a)$ becomes trivial, i.e., no dynamics are retained at zero and first order, while under the scaling $(b)$, equation \eqref{macro3.Eq0} reduces to
\begin{equation*}\label{macro3.Eq0_uncond}
\begin{split}
    &\partial_t \rho =\dfrac{\varepsilon}{\hmu}\nabla_x\cdot\nabla_x \cdot \left[u_{\psi_q}^2\mathbb{V}_q\,\rho\right],
    \end{split}
\end{equation*}
confirming once again that, in the absence of directional bias, all transport effects vanish, and the evolution is governed solely by a diffusion term. In this case, due to the faster speed dynamics, the effective diffusivity is governed by the squared mean speed $u^2_q$ and the directional variance-covariance tensor $\mathbb{V}_q$.



\section{Numerical tests}\label{sec:num_tests}

In this section, we perform some numerical tests aimed at qualitatively comparing the evolution of the standard linear transport equation \eqref{kin.eq:M} with the evolution of the linear transport equation with distinct scattering operators \eqref{eq:kin_strong} introduced in these notes. Precisely, we numerically integrate the transport equations to approximate the density distribution $f$, as in \cite{loy2019JMB,conte2023SIAP,conte2022multi}, and in turn the corresponding macroscopic density via \eqref{def:bf}. We present two numerical tests.
\begin{itemize}
    \item[{\bf Test 1.}] In section \ref{sec:num_test1}, we compare the evolution of linear transport equation in three cases: Eq.~\eqref{kin.eq:M} for $M(\tv,\hv)=\psi(\tv)q(\hv)$, i.e., a single scattering operator with unconditioned speed transition probability; Eq.~\eqref{kin.eq:M} for $M(\tv,\hv)=\psi(\tv|\hv)q(\hv)$, i.e., a single scattering operator with conditioned speed transition probability; Eq.~\eqref{eq:kin_strong} with two distinct scattering operators and conditioned speed transition probability $\psi(\tv|\hv)$.
    \item[{\bf Test 2.}] In section \ref{sec:num_test2}, we compare the evolution of the macroscopic settings derived in section \ref{sec:macro_lim} via two demonstrative examples.
\end{itemize}

\subsection{Test 1: comparison between the linear transport equations}\label{sec:num_test1}
Test 1 is designed to highlight the qualitative differences in the evolution of the described kinetic model in a particular experimental setting. Formally, we are dealing with:
\begin{itemize}
    \item[(K1)] Eq.~\eqref{kin.eq:M} in which $M(\tv,\hv)=\psi(\tv)q(\hv)$, namely there is no dependence between the new speed and the current direction;
    \item[(K2)]  Eq.~\eqref{kin.eq:M} with $M(\tv,\hv)=\psi(\tv|\hv)q(\hv)$, which introduces a conditional dependence of the new speed on the current direction;
    \item[(K3)] Eq.~\eqref{eq:kin_strong} with $\psi=\psi(\tv|\hv)$ in the scattering operator for the speed. 
\end{itemize}
We shall perform the test in a two dimensional setting corresponding to $d=2$, so that we can define ${\hv=(\cos(\theta)\sin(\theta))}$, $\theta \in [0,2\pi)$. In particular, we consider the region $\Omega=[0, 2.5]\times[0,2.5]$. For the transition probability of the speed, we consider the following unimodal von Mises distribution rescaled over $[0, U]$, i.e.,
\begin{equation}\label{psi_unimodal}
    \psi(\tv|\hv)=\dfrac{1}{2\pi I_0(k_\psi)}\exp\left[k_\psi \cos\left(2\pi\dfrac{\tv-\bar{U}(\hv)}{U}\right)\right],
\end{equation}
where the maximum speed is $U=4$, the concentration parameter $k_\psi=80$, while $I_0(k_\psi)$ is the Bessel function of order 0. The mean speed $\bar{U}(\hv)$ is chosen to be constant $\bar{U}(\hv)=\bar{U}=1.5$ in the case (K1), while for (K2) and (K3) we set
\begin{equation*}
    \bar{U}(\hv)=
    \begin{sistem}
       1.5\qquad \text{for}\,\,\,\, 0\le\hv<\pi\,,\\[0.2cm]
       0.2\qquad \text{for}\,\,\,\, \pi\le\hv<2\pi\,.
    \end{sistem}    
\end{equation*}
For the transition probability of the directions, we use a bimodal von Mises distribution, with given concentration parameter $k_q=10$ and preferential direction of migration $\theta_q=\pi/2$, i.e.,
\begin{equation}\label{q_bimodal}
q(\hv)=\dfrac{1}{4\pi I_0(k_q)}\Big(\exp\left[k_q \cos(\theta-\theta_q)\right]+\exp\left[-k_q \cos((\theta-\theta_q)\right]\Big)\,.
\end{equation}
We remark that in this case $u_q=0$.
The initial condition $\rho_0$ is a Gaussian
\begin{equation}\label{rho0}
\rho_0(x)=r_0\exp\left(-\dfrac{(x-x_0)^2}{2\sigma^2}\right)    
\end{equation}
with $r_0=1$, $\sigma^2=0.01$, and $x_0=(1.25,1.25)$. 

This experimental setting is inspired by various realistic scenarios in which the particles exhibit a preferential direction of movement, yet their forward and backward speeds are clearly different. A typical example is traffic on a straight highway: although the preferential direction is well-defined, vehicles can move forward at high speed but cannot realistically move backward, even if the direction (in a symmetric sense) is the same. A second example, drawn from a biological context, involves cell migration in a fibrous environment with steric hindrance effects. In this case, although the probability of selecting direction $\theta_q$ or $-\theta_q$ may be symmetric, the density of the extracellular matrix could impede cell movement in one direction, thereby enhancing it in the opposite one.

Results of this first test are shown in Figure \ref{fig:test1}.
\begin{figure}[h!]
       \centering
       \includegraphics[width=\linewidth]{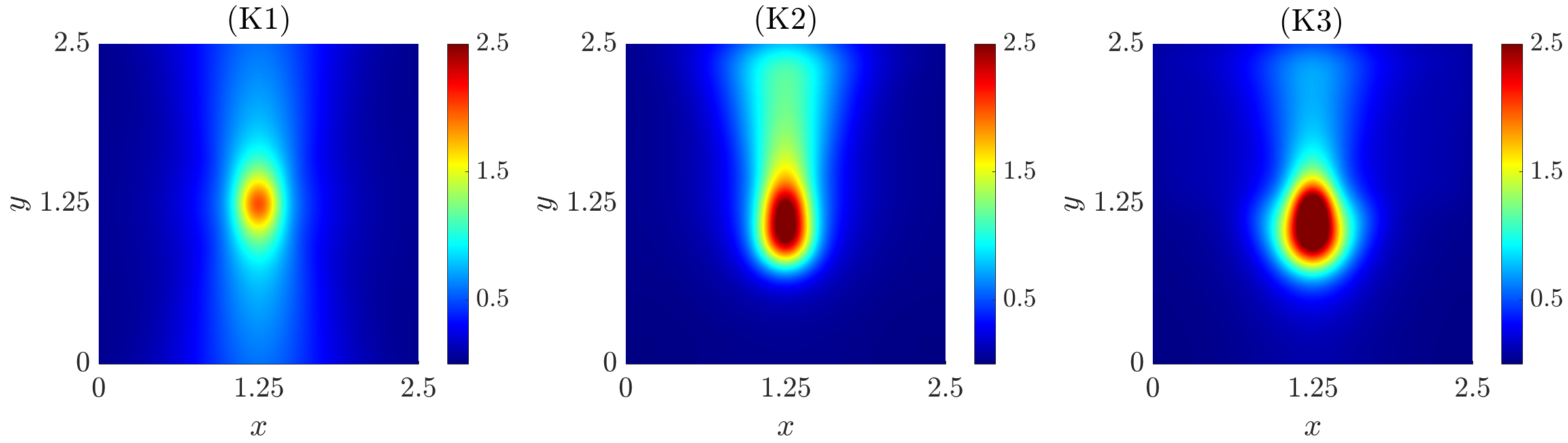}
       \caption{{\bf Test 1.} Evolution of the macroscopic density $\rho$ in the three cases (K1), (K2), and (K3) at time $T=1.88$ and starting from the initial distribution defined in \eqref{rho0}.}
       \label{fig:test1}
\end{figure}
We observe a clear distinction between the unconditioned case (K1) and the conditioned cases (K2) and (K3). Specifically, when the mean speed is constant and independent of the particles' current direction, as in (K1), the dynamics are primarily governed by the fixed directional distribution $q(\hv)$. As a result, particles tend to move away from their initial position predominantly along the directions $\theta_q$ and $-\theta_q$.  In contrast, when we assume that $\psi=\psi(\tv|\hv)$, as in (K2) and (K3), the movement changes significantly:  between the two symmetric directions, particles tend to move more strongly in the directions where the mean speed $\bar{U}$ is higher, namely for directions within the interval $[0,\pi)$. Regarding cases (K2) and (K3), we observe a qualitatively similar trend: in both cases, particles tend to move predominantly along the direction $\theta_q=\pi/2$, where the mean speed is higher. However, the dynamics in (K2) appear to be faster compared to (K3). Additionally, the ellipsoidal shape of the spreading pattern, which characterizes the dynamics driven by $q(\hv)$, is more elongated along the y-axis in (K2). In (K3), while the y-axis component still dominates, the spread exhibits a comparatively larger x-axis component than in (K2). These differences may be attributed to the structural distinction in the models. In fact, in (K3), the use of two distinct scattering operators for speed and direction allows for changes in speed without necessarily altering direction. Given our specific choice of $\psi$ and $\bar{U}(\hv)$, this decoupling permits particles to explore regions of higher speed even if they are not perfectly aligned with the preferential direction prescribed by $q(\hv)$.

\subsection{Test 2: comparison between the macroscopic settings}\label{sec:num_test2}
This test is designed to highlight the qualitative differences in the evolution of the macroscopic models derived in section~\ref{sec:macro_lim}, under different assumptions on the microscopic transition frequencies. In this test, we focus exclusively on the case of conditioned transition probability $\psi(\tv|\hv)$ for the speed, and we consider the following four macroscopic models:
\begin{itemize}
    \item[(M1)] Eq.~\eqref{macro1_govern_dep} derived from the transport equation~\eqref{kin.eq:M} with a single scattering operator, under the scaling $\mu\to\mu/\varepsilon$;
    \item[(M2)]  Eq.~\eqref{macro1_govern_corr} derived from the transport equation~\eqref{eq:kin_strong} with distinct scattering operators for speed and direction and assuming jump processes with the same order frequencies, i.e., $\tmu\to\tmu/\varepsilon$ and $\hmu\to\hmu/\varepsilon$;
    \item[(M3)] Eq.~\eqref{macro2.govern_corr_baru}, also derived from the transport equation~\eqref{eq:kin_strong} with distinct scattering operators for speed and direction, but assuming a faster jump process for the direction, i.e., $\tmu\to\tmu/\varepsilon$ and $\hmu\to\hmu/\varepsilon^2$;
    \item[(M4)] Eq.~\eqref{macro3.govern_corr} derived from the transport equation~\eqref{eq:kin_strong} with distinct scattering operators for speed and direction but under the opposite assumption, i.e., a faster speed jump process with $\tmu\to\tmu/\varepsilon^2$ and $\hmu\to\hmu/\varepsilon$.
\end{itemize}
We consider two experimental configurations, referred to as {\bf Test 2.1} and {\bf Test 2.2}. 

In {\bf Test 2.1}, we adopt the same initial conditions defined in section~\ref{sec:num_test1}. The goal here is to compare the evolution of the four macroscopic models (M1)–(M4) and assess the influence of the underlying microscopic assumptions on the large-scale behavior. The results of this test are shown in Figure \ref{fig:test2.1}.
\begin{figure}[h!]
       \centering
       \includegraphics[width=\linewidth]{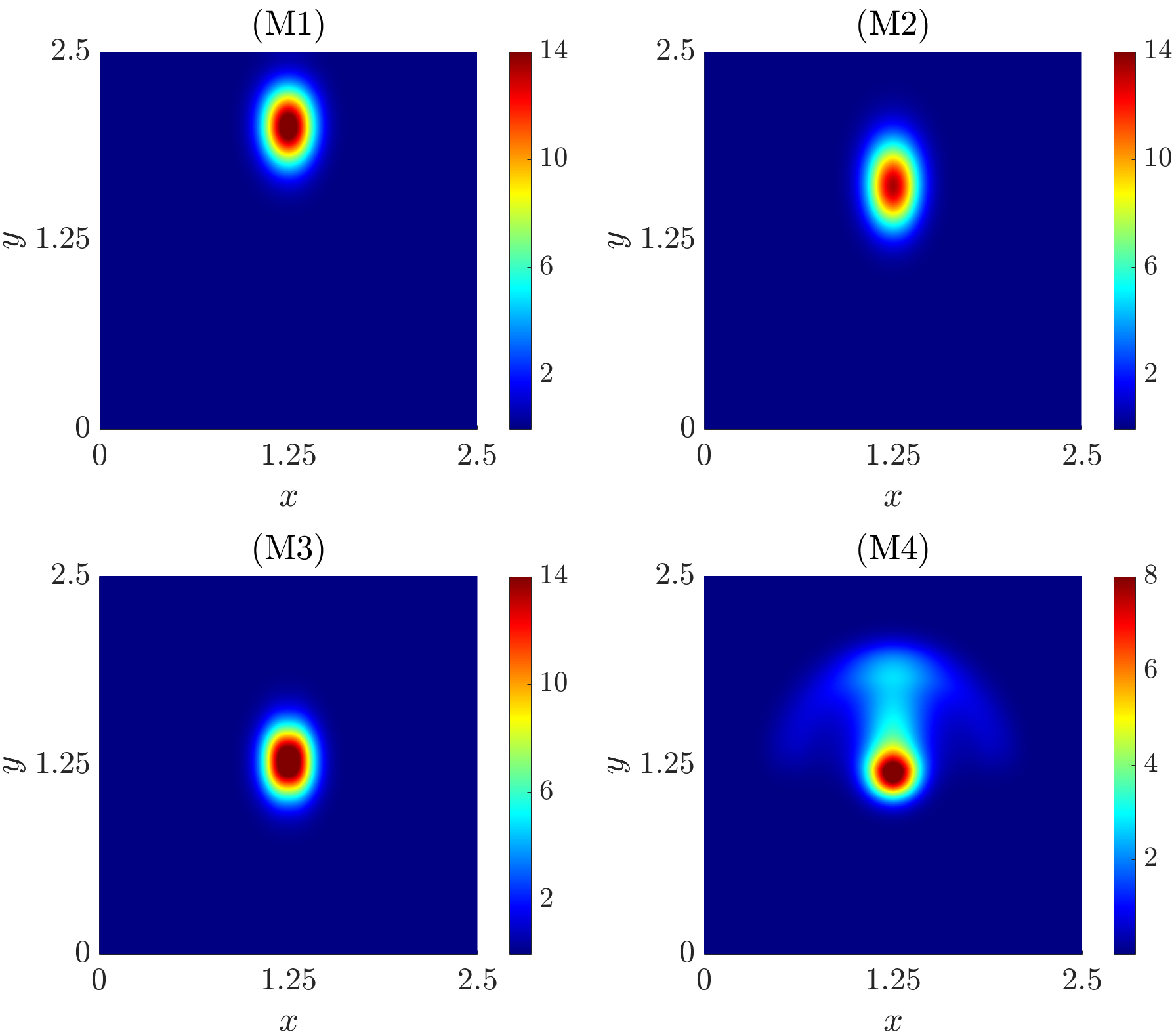}
       \caption{{\bf Test 2.1.} Evolution of the macroscopic density $\rho$ in the settings (M1), (M2), (M3), and (M4). Precisely, (M1) and (M2) are shown at time $T=1.25$, (M3) at time $T=4.69$, and (M4) at time $T=0.47$. For all the simulations, we set $\mu=\tmu=\hmu=1$.}
       \label{fig:test2.1}
\end{figure}
We first observe that, qualitatively, the macroscopic models (M1) and (M2) exhibit similar evolutionary behavior, both showing the expected drift-driven dynamics along the preferential direction $\theta_q=\pi/2$. The differences in their propagation speed can be mainly attributed to the nature of the leading-order drift terms. In (M1), the drift is governed solely by $u_M$, whereas in (M2) it results from a combination of $u_M$ and $u_{\psi_q}u_q$ weighted by a factor of 1/2. Since $u_q=0$, the drift velocity in (M2) is the half with respect to the case (M1). This structural difference leads to a reduced effective drift in (M2) compared to (M1), thereby providing an explanation to the slightly slower dynamics. Looking instead at the evolution of models (M3) and (M4), we observe two distinct behaviors. When the direction jump process relaxes faster, i.e., when $\hmu\to\hmu/\varepsilon^2$ as in (M3), the dynamics are strongly influenced by the equilibrium distribution of the scattering operator for direction. This is evidenced by the ellipsoidal shape of the spreading pattern, which aligns with the preferential direction $\theta_q$. However, the overall dynamics are significantly slower compared to the other models. In contrast, when the speed jump process relaxes faster, i.e., $\tmu\to\tmu/\varepsilon^2$, as in (M4), the distribution of speed plays a more dominant role. In this case, particles tend to explore a broader portion of the domain, particularly regions where the mean speed is higher (i.e., $\hv\in[0,\pi)$). Nevertheless, a greater accumulation of particles is still observed along the preferential direction $\theta_q$, consistent with the influence of the directional bias encoded in $q(\hv)$.

In {\bf Test 2.2}, we consider the same region $\Omega=[0, 2.5]\times[0,2.5]$.
For the transition probability of the speed, we use the unimodal von Mises distribution rescaled over $[0, U]$, as given in \eqref{psi_unimodal}, with a direction-dependent mean speed defined by $\bar{U}(\hv)=1.5|\cos(\hv)|$ and set the concentration parameter to $k_\psi=150$. For the transition probability of the directions, we adopt the same bimodal von Mises distribution as in \eqref{q_bimodal}, but we set the concentration parameter to $k_q=2$. 

This experimental setting was designed with the aim of defining a conditioned speed distribution that, given a fixed directional distribution with preferential  orientation $\theta_q$, counterbalances this bias by encouraging movement in the perpendicular direction. The goal is to make an otherwise anisotropic movement more isotropic. This approach draws biological inspiration from the idea of controlling and regulating the invasive behavior of brain tumor cells. A well-established phenomenon is related to the tendency of cells to migrate preferentially along the white matter tracts of the brain, resulting in highly anisotropic movement. This directional migration contributes to deeper tumor infiltration and makes the invasion more challenging to predict and treat. Certain therapeutic strategies aim to disrupt this guided migration by slowing down cell migration and thus promoting a more isotropic and uniform cell movement, thereby making tumor spread more regular, predictable, and potentially easier to control. The therapeutic strategy (embodied by $\tilde{\mathcal{L}}$) is of course independent and happens on a different time scale than the one which rules cell migration, which is embodied by $\hat{\mathcal{L}}$.

Results of this test are shown in Figure \ref{fig:test2.2}.
\begin{figure}[h!]
       \centering
       \includegraphics[width=\linewidth]{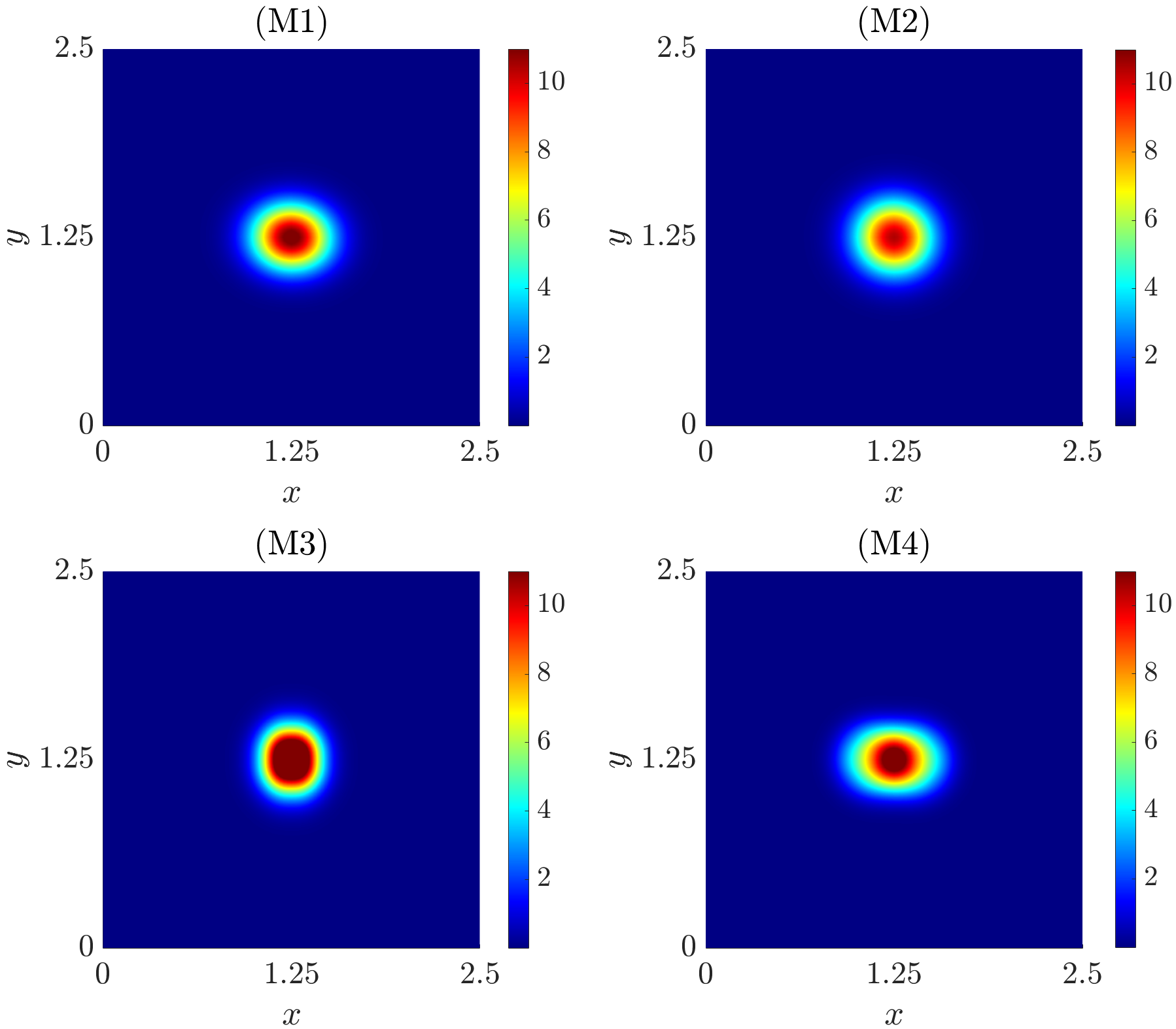}
       \caption{{\bf Test 2.2.} Evolution of the macroscopic density $\rho$ in the settings (M1), (M2), (M3), and (M4). Precisely, (M1) and (M2) are shown at time $T=1.875$, (M3) at time $T=5.94$, and (M4) at time $T=0.1875$. For all the simulations, we set $\mu=\tmu=\hmu=1$.}
       \label{fig:test2.2}
\end{figure}
Starting from the evolution of models (M1) and (M2), we observe that the choice of the transition probability for the speed and its associated concentration parameter enables the recovery of an isotropic-like movement of the particles. This counterbalances the effect of the directional transition probability, which on its own would have resulted in a purely anisotropic motion in direction $\theta_q$. As previously noted, the primary difference between the two simulations concerns the duration of the transient regimes. In particular, model (M2) exhibits faster spreading dynamics compared to model (M1). As before, this may be attributed to differences in the structure of the drift terms and, more specifically, to the greater influence of the speed-jump process, which, in this case, tends to drive particles preferentially along the $x-$direction. In contrast, models (M3) and (M4) exhibit markedly different behaviors. In the case of model (M3), where $\hmu\to\hmu/\varepsilon^2$, the dynamics are strongly influenced by the equilibrium distribution of the scattering operator acting on direction, which induces a more pronounced ellipsoidal shape in the spreading pattern, elongated along the $y-$direction. Although anisotropy remains present, it is less dominant than in the previous case, primarily due to the altered transition probability governing speed. On the other hand, in model (M4), where $\tmu\to\tmu/\varepsilon^2$, the speed distribution plays a central role. This leads to a distinct behavior, characterized by enhanced particle propagation in regions of the domain with higher mean speeds, with a more pronounced anisotropic spreading pattern and an ellipsoidal shape elongated along the $x-$direction, oriented oppositely to that observed in model (M3).

\section{Conclusion}
In this work, we proposed a novel linear transport model describing the evolution of a one particle distribution undergoing free particle drift and incorporating two distinct scattering mechanisms to describe changes in its speed and direction. The model is grounded in the classical theory of the linear Boltzmann equation for particle dynamics. However, the presence of two separate scattering operators, acting at different times and with different intensities, necessitated a non-trivial analysis of their properties, posing non-standard challenges. 

Starting from a microscopic description of the dynamics based on discrete-time stochastic processes, we modeled the evolution of the microscopic state of the particles. Specifically, the dynamics of speed and direction were described by two distinct jump processes occurring at different stochastic times and with different frequencies. From this microscopic framework, we formally derived the novel kinetic equation for the one-particle density function $f$, along with the ones for the associated marginal distributions for speed ($\tf$) and direction ($\hf$). We examined the structure of the operators appearing in the corresponding kinetic equations, highlighting both their connections and distinctive features. Furthermore, we analyzed the resulting kinetic equation using semigroup theory, which provided insights into the behavior of continuous stochastic trajectories. In particular, we highlighted the main differences between the proposed description and the standard case of a single scattering operator for the velocity changes \eqref{kin.eq:M}.

We investigated the properties of the newly introduced scattering operators for speed and direction by analyzing their kernels and stationary states. Specifically, we characterized the structure of the kernel for the operators of both the one-particle distribution $f$ and its marginals, proving that the kernel of $\mathcal{L}$ in the equation for $f$ is included in the intersection of the kernels of the operators $\hat{\mathcal{L}}_q$ and $\tilde{\mathcal{L}}_\psi$ in the equations for $\hf$ and $\tf$. Moreover, we computed the stationary states for the three distributions, emphasizing the differences between the unconditioned case, where $\psi=\psi(\tv)$ and the asymptotic distribution $f^{\infty}$ factorizes into speed and direction components, and the conditioned case, where such factorization no longer holds. We then analyzed the Fredholm-alternative and the non-standard pseudo-inverse operators in suitable Hilbert spaces, establishing the boundedness of the operator $\mathcal{L}$, and proving a novel entropy decay result in the spatially homogeneous setting for the unconditioned case $\psi=\psi(\tv)$. As discussed in section \ref{sec:entropy}, extending the entropy decay analysis to the general case $\psi=\psi(\tv|\hv)$ remains an open question that we aim to address in future research.

Building on these results, we formally derived the macroscopic limits of the novel kinetic equation under different scaling regimes. We considered both the case where the frequencies of the jump processes for speed and direction are of the same order, and the case where one process is faster than the other one. The derived macroscopic equations were analyzed and we provided an interpretation in some simplified scenarios, such as when $\tmu=\hmu=1$ or when $\psi=\psi(\tv)$. We then focused on the case where the directional distribution has zero mean, i.e., $u_q=0$, comparing the different macroscopic equations. This scenario may be particularly relevant for biological applications involving cell migration in fibrous environments. In such settings, fibers are typically assumed to be non-polarized, and their distribution naturally satisfies the condition $u_q=0$. Cells tend to migrate along these fibers, while their speed may be modulated by other factors operating on different time scales. This behavior is effectively captured by the proposed model, making it well-suited to describe cell motility in such structured, anisotropic media.

Finally, the numerical results presented in section \ref{sec:num_tests} provided a visual illustration of the key differences between the standard linear transport equation \eqref{kin.eq:M}, with a single scattering operator for speed and direction, and the newly introduced formulation. Our simulations showed that, with the new model, particles exhibit greater flexibility in exploring regions of the domain where either speed or direction is favorable, rather than being predominantly guided by a single combined influence of both. This highlights the richer dynamics captured by decoupling the scattering processes for speed and direction.

We remark that, although the model and the analysis were developed in a general framework of particle movement influenced by an external medium, the approach is applicable to a wide range of scenarios. One notable application is in the study of traffic dynamics, where vehicles (modeled as particles) can independently adjust their direction and speed for different reasons. For example, a change in direction may result from the driver's intention to reach a specific destination, while speed variations may be caused by traffic conditions or imposed speed limits. Furthermore, as previously discussed, the model is well-suited to describe various experimental settings involving cell migration in heterogeneous environments, where distinct chemical or mechanical factors may influence directional changes and speed modulation on different time scales.

Although the results presented here offer a rich analysis of the characteristic features of the newly introduced operators and their differences with respect to the standard case, several open questions remain, paving the way for future research. Future investigations include the well-posedness, existence and uniqueness of the model, macroscopic limits in other, possibly fractional, regimes, but also hypocoercivity properties of the model in presence of spatially heterogeneous scattering kernels. 




\appendix
\section{Appendix} 
\subsection{Formal derivation of the linear transport equation}\label{appendix1}
Le us consider a test function $\phi$, which is an observable quantity depending on the microscopic variables, i.e., $\phi=\phi(x,\tv,\hv)$ defined on $\R^d\times \mathcal{V}$, being $\mathcal{V}= \mathbb{S}^{d-1}\times \R_+$. We assume that ${\phi \in \mathcal{C}_c^{\infty}(\R^d\times \mathcal{V})}$ with compact support. Our purpose is to derive the evolution equation for the expected value of $\phi$ along the stochastic trajectories defined by~\eqref{def:micro.X}-\eqref{def:micro.V}-\eqref{def:micro.hV}-\eqref{def:Berny}-\eqref{ass:indep1}-\eqref{ass:indep2}-\eqref{ass:indep3}-\eqref{def:psi}-\eqref{def:q}. To this aim, we calculate the mean variation rate of the quantity $\phi$ in the time interval $\Delta{t}$, namely
\begin{equation*}\label{def:incr.ratio}
\frac{\ave{\phi\left(X_{t+\Delta t},\tV_{t+\Delta t}, \hV_{t+\Delta t}\right)}-\ave{\phi\left(X_t,\tV_t,\hV_t\right)}}{\Delta{t}}\,,
\end{equation*}
where, here and henceforth, $\ave{\cdot}$ denotes the expectation with respect to all random variables appearing in the brackets. Specifically, if $\eta\in\mathcal{B}\subseteq\R$ is a random variable with law $h=h(\eta):\mathcal{B}\to\R_+$ then

$$ \ave{\cdot}:=\int_\mathcal{B}(\cdot)h(\eta)\,d\eta. $$
Considering~\eqref{def:micro.X}-\eqref{def:micro.V}-\eqref{def:micro.hV}, we have that
\begin{align*}
	&\frac{\ave{\phi\left(X_{t+\Delta t},\tV_{t+\Delta t}, \hV_{t+\Delta t}\right)}}{\Delta{t}}= \\[0.2cm]
	&\frac{\ave{\phi\left(X_{t}+ \tV_t\hV_t\Delta t,(1-\tilde{\Theta}_{t+\Delta t})\tV_{t} + \tilde{\Theta}_{t+\Delta t} \tV_{t+\Delta t}',(1-\hat{\Theta}_{t+\Delta t})\hV_{t} + \hat{\Theta}_{t+\Delta t} \hV_{t+\Delta t}'\right)}}{\Delta{t}}=\\[0.2cm]
&(1- \tmu \Delta t) (1-\hmu\Delta t)\frac{\ave{\phi\left(X_{t}+ \tV_t\hV_t\Delta t,\tV_{t} ,\hV_{t} \right)}}{\Delta{t}}+ \tmu  \hmu\Delta t^2 \frac{\ave{\phi\left(X_{t}+ \tV_t\hV_t\Delta t,\tV_{t}' ,\hV_{t}' \right)}}{\Delta{t}} \\[0.2cm]
&+ \tmu \Delta t (1-\hmu\Delta t)\frac{\ave{\phi\left(X_{t}+ \tV_t\hV_t\Delta t,\tV_{t}' ,\hV_{t} \right)}}{\Delta{t}}+(1- \tmu \Delta t) \hmu\Delta t \frac{\ave{\phi\left(X_{t}+ \tV_t\hV_t\Delta t,\tV_{t} ,\hV_{t}' \right)}}{\Delta{t}},
	\end{align*}
	where the second equality follows from~\eqref{def:Berny} and from the assumed independence~\eqref{ass:indep1}.
As a consequence, we have that
\begin{equation}\label{eq:pass1}
\begin{aligned}[b]
	&\frac{\ave{\phi\left(X_{t+\Delta t},\tV_{t+\Delta t}, \hV_{t+\Delta t}\right)}-\ave{\phi\left(X_t, \tV_t,\hV_t\right)}}{\Delta{t}}= \frac{\ave{\phi\left(X_{t}+ \tV_t\hV_t\Delta t,\tV_{t} ,\hV_{t} \right)}-\ave{\phi\left(X_t, \tV_t,\hV_t\right)}}{\Delta{t}} \\[0.4cm]
&+ \tmu \ave{\phi\left(X_{t}+ \tV_t\hV_t\Delta t,\tV_{t}' ,\hV_{t} \right)}+ \hmu\ave{\phi\left(X_{t}+ \tV_t\hV_t\Delta t,\tV_{t} ,\hV_{t}' \right)}-( \tmu+\hmu)\ave{\phi\left(X_{t}+ \tV_t\hV_t\Delta t,\tV_{t} ,\hV_{t} \right)}\\[0.4cm]
&+ \tmu  \hmu\Delta t \Big(\ave{\phi\left(X_{t}+ \tV_t\hV_t\Delta t,\tV_{t}' ,\hV_{t}' \right)} -\ave{\phi\left(X_{t}+ \tV_t\hV_t\Delta t,\tV_{t}' ,\hV_{t} \right)}-\ave{\phi\left(X_{t}+ \tV_t\hV_t\Delta t,\tV_{t} ,\hV_{t}' \right)}\Big)\,.
\end{aligned}
\end{equation}
Considering the limit of $\Delta{t}\to 0^+$, the left-hand side gives the instantaneous time variation of the expected value of $\phi$, i.e.,
\begin{equation*}
\frac{\ave{\phi\left(X_{t+\Delta t},\tV_{t+\Delta t}, \hV_{t+\Delta t}\right)}-\ave{\phi\left(X_t,\tV_t,\hV_t\right)}}{\Delta{t}}\underset{\Delta{t}\to 0^+}{\longrightarrow}\frac{d}{dt}\ave{\phi\left(X_t, \tV_t,\hV_t\right)}\,.
\end{equation*}
The first term on the right-hand side in~\eqref{eq:pass1} gives
\[
\frac{\ave{\phi\left(X_{t}+ \tV_t\hV_t\Delta t,\tV_{t},\hV_t\right)-\phi\left(X_t, \tV_t,\hV_t\right)}}{\Delta{t}}\underset{\Delta{t}\to 0^+}{\longrightarrow}\ave{\nabla_{x}\cdot \left( \tV_t\hV_t\phi(X_t, \tV_t,\hV_t)\right)}\,,
\]
while the second line in~\eqref{eq:pass1} tends to
\[
 \tmu \ave{\phi\left(X_{t},\tV_{t}' ,\hV_{t} \right)}+ \hmu\ave{\phi\left(X_{t},\tV_{t} ,\hV_{t}' \right)}-( \tmu+\hmu)\ave{\phi\left(X_{t},\tV_{t} ,\hV_{t} \right)}\,.
\]
In the third line in~\eqref{eq:pass1}, the first term in brackets is a gain term regarding simultaneous reorientations and speed-jumps, while the second and third terms are loss terms in which either the direction or the speed changes. These ones are higher order terms in $\Delta t$ and vanish as $\Delta t \to 0^+$.
We now write explicitly the expected values of interest 
\[
\ave{\phi\left(X_{t},\tV_{t} ,\hV_{t} \right)} = \int_{\R^d}\int_\mathcal{V} \phi(x,\tv,\hv) f(t,x,\tv,\hv) \, {\rm d}\tv \, {\rm d}\hv \, {\rm d}x\,,
\]

\[
\ave{\phi\left(X_{t},\tV_{t}' ,\hV_{t} \right)} = \int_{\R^d}\int_\mathcal{V} \intR\phi(x,\tv',\hv) \psi(\tv'|\hv) \, {\rm d}v'f(t,x,\tv,\hv) \, {\rm d}\tv \, {\rm d}\hv \, {\rm d}x\,,
\]

\[
\ave{\phi\left(X_{t},\tV_{t} ,\hV_{t}' \right)} = \int_{\R^d}\int_\mathcal{V} \intS\phi(x,\tv,\hv') q(\hv') {\rm d}\hv' f(t,x,\tv,\hv) \, {\rm d}\tv \, {\rm d}\hv \, {\rm d}x\,,
\]
and we obtain
\begin{equation*}
\begin{aligned}[b]
\langle\nabla_{x}\cdot \left( \tV_t\hV_t\phi(X_t, \tV_t,\hV_t)\right)\rangle&=\int_{\R^d}\intV \nabla_x \cdot \left( \tv\hv \phi(x,\tv,\hv)\right)\, f(t,x,\tv,\hv) \, {\rm d}\hv {\rm d}\tv  dx\\[0.2cm]
&=\int_{\R^d}\intV\nabla_x \cdot \left( \tv\hv \phi(x,\tv,\hv)\, f(t,x,\tv,\hv)\right) \, {\rm d}\hv {\rm d}\tv dx\\[0.2cm]
&\phantom{=}-\int_{\R^d} \intV\phi(x,\tv,\hv) \tv \hv \cdot \nabla_x  f(t,x,\tv,\hv) \, {\rm d}\hv {\rm d}\tv dx\\[0.2cm]
&=-\int_{\R^d}\intV\phi(x,\tv,\hv) \tv \hv \cdot \nabla_x  f(t,x,\tv,\hv) \, {\rm d}\hv {\rm d}\tv dx
\end{aligned}
\end{equation*}
where the disappearance of the term in the last passage is due to the Divergence Theorem.


Next, choosing the test function such that it factorizes the dependency on $x$ and the couple $(\tv,\hv)$, i.e. $\phi(x,\tv,\hv)=\xi(x) \varphi(\tv,\hv)$, it is possible to eliminate the integration in $x$ and of the $x$-test function $\xi(x)$ and write the weak form of the equation expressing the evolution of the expectation of $\varphi(\tv,\hv)$ as follow:
\begin{equation*}
\begin{aligned}[b]
&\dfrac{d}{dt}\intV\varphi(\tv,\hv) f(t,x,\tv,\hv) {\rm d}\hv {\rm d}\tv+\nabla_x \cdot \intV\varphi(\tv,\hv) \hv\tv f(t,x,\tv,\hv) \, {\rm d}\hv \, {\rm d}\tv=\\[0.2cm]
& \tmu\intV\left[\intR\psi(\tv'|\hv)\varphi(\tv',\hv)\, {\rm d}\tv'-\varphi(\tv,\hv)\right]f(t,x,\tv,\hv) \,{\rm d}\hv \, {\rm d}\tv\\
&+\hmu\intV\left[\intS q(\hv')\varphi(\tv,\hv')\, {\rm d}\hv'-\varphi(\tv,\hv)\right]f(t,x,\tv,\hv) \,{\rm d}\hv \, {\rm d}\tv,
\end{aligned}
\end{equation*}
which is ~\eqref{eq:kin_weak}.

\subsection{Additional details of the derivation of the macroscopic models}\label{appendix2}

Taking into account the derivation of the macroscopic limit and, particularly, the case of jump processes where the frequencies of speed and direction changes are of the same order, i.e., $\tmu \to \tmu/\varepsilon$ and $\hmu \to \hmu\varepsilon$, we report here additional calculations related to the correction term $f^\perp$ in the Chapman-Enskog expansion of $f^\varepsilon$. In particular, after deriving the expression for $f^\perp$ as given in~\eqref{f_perp.2}, we proceed to compute the following integral term:
\begin{allowdisplaybreaks}
    \begin{align*}
        &\intV\nabla_x\cdot\left(v f^{\bot}\right){\rm d}v=\\[0.3cm]
        &-\dfrac{1}{\tmu+\hmu}\nabla_x\cdot\nabla_x\cdot\left[\mathbb{D}_T\rho\right]+\dfrac{1}{\hmu}\nabla_x\cdot\left[u_T\nabla_x\cdot(\rho u_T)\right]+\dfrac{\hmu}{\tmu(\tmu+\hmu)}\nabla_x\cdot\left[u_{\psi_q}u_q\nabla_x\cdot(\rho u_T)\right]\\[0.3cm]
        &-\dfrac{\tmu^2}{\hmu(\hmu+\tmu)^2}\nabla_x\cdot\left[\intS u_\psi^2(\hv)q(\hv)\hv\otimes\hv{\rm d}\hv\,\nabla_x\rho\right]\\[0.3cm]
        &-\dfrac{\tmu}{(\hmu+\tmu)^2}\nabla_x\cdot\left[u_{\psi_q}\intS u_\psi(\hv)q(\hv)\hv\otimes\hv{\rm d}\hv\,\nabla_x\rho\right]\\[0.3cm]
        &-\dfrac{\hmu}{(\hmu+\tmu)^2}\nabla_x\cdot\left[u_q\otimes\intS V_\psi(\hv)\hv q(\hv) {\rm d}\hv\,\nabla_x\rho\right]-\dfrac{\hmu^2}{\tmu(\hmu+\tmu)^2}\nabla_x\cdot[V_{\psi_q} u_q\otimes u_q\nabla_x\rho]\\[0.3cm]
        &-\dfrac{\tmu}{(\hmu+\tmu)^2}\nabla_x\cdot\left[u_q\otimes\intS u_\psi^2(\hv)\hv q(\hv){\rm d}\hv\,\nabla_x\rho\right]\\[0.3cm]
        &-\dfrac{\hmu}{(\hmu+\tmu)^2}\nabla_x\cdot\left[u_{\psi_q}u_q\otimes\intS u_\psi(\hv)\hv q(\hv){\rm d}\hv\,\nabla_x\rho\right]
    \end{align*}
\end{allowdisplaybreaks}
where $\mathbb{D}_T$, $V_\psi(\hv)$, and $V_{\psi_q}$ are defined in \eqref{D_T}, \eqref{V_psi}, and \eqref{V_psiq}, respectively. Then, using the identity 
\begin{equation*}\label{baru_q}
\intS u_\psi(\hv)\hv q(\hv){\rm d}\hv=\dfrac{\tmu+\hmu}{\tmu}u_T-\dfrac{\hmu}{\tmu}u_qu_{\psi_q}\,
\end{equation*}
derived from \eqref{u_T}, and observing that the integral terms is exactly $u_M$ defined in \eqref{u_M}, the final correction term simplifies to: 
\begin{allowdisplaybreaks}
\begin{equation*}\label{f_corr2}
    \begin{split}
        &\intV\nabla_x\cdot\left(v f^{\bot}\right){\rm d}v=\\[0.3cm]
        &-\dfrac{1}{\tmu+\hmu}\nabla_x\cdot\left[V_{\psi_q}\mathbb{V}_q\nabla_x\rho\right]+\dfrac{2\hmu+\tmu}{\hmu(\tmu+\hmu)}\nabla_x\cdot\left[u_T\otimes u_T\nabla_x\rho\right]\\[0.3cm]
        &-\dfrac{\hmu^2}{\tmu(\hmu+\tmu)^2}\nabla_x\cdot\left[(V_{\psi_q}-u_{\psi_q}^2) u_q\otimes u_q\nabla_x\rho\right]\\[0.3cm]
        &-\dfrac{1}{(\hmu+\tmu)^2}\nabla_x\cdot\left[u_q\otimes\intS (\tmu u_\psi^2(\hv)+\hmu V_{\psi_q})\,\hv\, q(\hv){\rm d}\hv\,\nabla_x\rho\right]\\[0.3cm]
        &-\dfrac{\tmu}{\hmu(\hmu+\tmu)^2}\nabla_x\cdot\left[\intS \left(\hmu u_{\psi_q} u_\psi(\hv)+\tmu u_\psi^2(\hv)\right) q(\hv)\hv\otimes\hv\,{\rm d}\hv\,\nabla_x\rho\right]\,.
    \end{split}
\end{equation*}
\end{allowdisplaybreaks}
Moreover, under the assumption that the transition probability for the speed does not depend on the current direction, i.e., $\psi=\psi(\tv)$, the correction term simplifies to
\begin{allowdisplaybreaks}
\begin{equation*}\label{f_corr_uncond}
    \begin{split}\
        &\intV\nabla_x\cdot\left(v f^{\bot}\right){\rm d}v=\\[0.3cm]
        &-\dfrac{1}{\tmu+\hmu}\nabla_x\cdot\left[(V_{\psi_q}\mathbb{V}_q-u_{\psi_q}^2u_q\otimes u_q)\nabla_x\rho\right]-\dfrac{\hmu}{\tmu(\hmu+\tmu)}\nabla_x\cdot\left[V_{\psi_q} u_q\otimes u_q\nabla_x\rho\right]\\[0.3cm]
        &-\dfrac{\tmu}{\hmu(\hmu+\tmu)}\nabla_x\cdot\left[u_{\psi_q}^2\mathbb{V}_q\nabla_x\rho\right]+\dfrac{\tmu^2+\tmu^2}{\hmu\tmu(\hmu+\tmu)}\nabla_x\cdot\left[u_{\psi_q}^2u_q\otimes u_q\nabla_x\rho\right]\,,
    \end{split}
\end{equation*}
\end{allowdisplaybreaks}
which is, then, used into~\eqref{macro1_corr_simpl}.

\subsection*{Acknowledgments}
MC acknowledges support from the PNRR project Young Researchers 2024—SOE {\it ‘Integrated Mathematical Approach to Tumor Interface Dynamics’} (CUP: E13C24002380006), funded by the European Union and the Italian Ministry of University and Research. NL gratefully acknowledges support from the Italian Ministry of University and Research through the Grant PRIN2022-PNRR Project (No. P2022Z7ZAJ) {\it ‘A Unitary Mathematical Framework for Modelling Muscular Dystrophies’} (CUP: E53D23018070001). Both authors acknowledge support by the National Group of Mathematical Physics (GNFM-INdAM).

\subsection*{Data availability statement}

The data that support the findings of this study are available upon reasonable request from the authors.

\subsection*{Competing interests}
The authors have no competing interests to declare that are relevant to the content of this article.

\bibliographystyle{plain}
\bibliography{bibliography}

@book{engel_nagel,
  title={One-parameter semigroups for linear evolution equations},
  author={Engel, K.-J. and Nagel, R.},
  year={2000},
  publisher = {Springer New York, NY}
}

@inproceedings{Doucet1990,
  title={Distinction between kinesis and taxis in terms of system theory},
  author={Doucet, P. G. and Dunn, G. A.},
  booktitle={Biological Motion: Proceedings of a Workshop held in K{\"o}nigswinter, Germany, March 16--19, 1989},
  pages={498--525},
  year={1990},
  organization={Springer}
}

@article{conte2022multi,
  title={Multi-cue kinetic model with non-local sensing for cell migration on a fiber network with chemotaxis},
  author={Conte, M. and Loy, N.},
  journal={Bull. Math. Biol.},
  volume={84},
  number={3},
  pages={42},
  year={2022},
  publisher={Springer}
}

@Book{pareschi2013BOOK,
  title     = {Interacting {M}ultiagent {S}ystems: {K}inetic equations and {M}onte {C}arlo methods},
  publisher = {Oxford University Press},
  year      = {2013},
  author    = {Pareschi, L. and Toscani, G.},
}

@Book{Cercignani,
   author={Cercignani, C.},
   title={The {B}oltzmann {E}quation and its {A}pplications}, 
   year={1987},
   publisher={Springer},
   address={New York}
}

@book{dautray1990,
  title={Mathematical analysis and numerical methods for science and technology: volume 1 physical origins and classical methods},
  author={Dautray, R. and Lions, J.L.},
  year={2012},
  publisher={Springer Science \& Business Media}
}

@article{Bisi2014EntropyDE,
  title={Entropy dissipation estimates for the linear {B}oltzmann operator},
  author={Bisi, M. and Canizo, J. A and Lods, B.},
  journal={J. Funct. Anal.},
  volume={269},
  number={4},
  pages={1028--1069},
  year={2015},
  publisher={Elsevier}
}

@article{brooks2021white,
  title={The white matter is a pro-differentiative niche for glioblastoma},
  author={Brooks, L. J. and Clements, M. P. and Burden, J. J. and Kocher, D. and Richards, L. and Devesa, S.C. and Zakka, L. and Woodberry, M. and Ellis, M. and Jaunmuktane, Z. and others},
  journal={Nat. Commun.},
  volume={12},
  number={1},
  pages={2184},
  year={2021},
  publisher={Nature Publishing Group UK London}
}

@article{conte2020glioma,
  title={Glioma invasion and its interplay with nervous tissue and therapy: A multiscale model},
  author={Conte, M. and Gerardo-Giorda, L. and Groppi, M.},
  journal={J. Theor. Biol.},
  volume={486},
  pages={110088},
  year={2020},
  publisher={Elsevier}
}

@Article{lods2004JSP,
title={Dissipative linear {B}oltzmann equation for hard spheres},
  author={Lods, B. and Toscani, G.},
  journal={J. Stat. Phys.},
volume={117},
  number={3},
  pages={635--664},
  year={2004},
  publisher={Springer}
}

@Article{puppo2017KRM,
  author  = {Puppo, G. and Semplice, M. and Tosin, A. and Visconti, G.},
  title   = {Kinetic models for traffic flow resulting in a reduced space of microscopic velocities},
  journal = {Kinet. Relat. Models},
  year    = {2017},
  volume  = {10},
  number  = {3},
  pages   = {823-854},
}

@article{visconti2017MMS,
author = {Visconti, G. and Herty, M. and Puppo, G. and Tosin, A.},
title = {Multivalued Fundamental Diagrams of Traffic Flow in the Kinetic {F}okker--{P}lanck Limit},
journal = {Multiscale Model. Sim.},
volume = {15},
number = {3},
pages = {1267-1293},
year = {2017},
doi = {10.1137/16M1087035}
}

@Article{alt1980JMB,
  author  = {Alt, W.},
  title   = {Biased random walk models for chemotaxis and related diffusion approximations},
  journal = {J. Math. Biol.},
  year    = {1980},
  volume  = {9},
  number  = {2},
  pages   = {147--177},
}

@Article{hillen2006JMB,
  author  = {Hillen, T.},
  title   = {{$M^5$} mesoscopic and macroscopic models for mesenchymal motion},
  journal = {J. Math. Biol.},
  year    = {2006},
  volume  = {53},
  number  = {4},
  pages   = {585-616},
}

@Article{hillen2000SIAP,
  author  = {Hillen, T. and Othmer, H. G.},
  title   = {The diffusion limit of transport equations derived from velocity-jump processes},
  journal = {SIAM J. Appl. Math.},
  year    = {2000},
  volume  = {61},
  pages   = {751-775},
}

@Article{stroock1974ZWVG,
  author  = {Stroock, D. W.},
  title   = {Some stochastic processes which arise from a model of the motion of a bacterium},
  journal = {Z. Wahrsch. Verw. Gebiete},
  year    = {1974},
  volume  = {28},
  number  = {4},
  pages   = {305--315},
}

@article{bouin2015ARMA,
  title={Propagation in a kinetic reaction-transport equation: travelling waves and accelerating fronts},
  author={Bouin, E. and Calvez, V. and Nadin, G.},
  journal={Arch. Ration. Mech. Anal. },
  volume={217},
  number={2},
  pages={571--617},
  year={2015},
  publisher={Springer}
}

@article{calvez2015KRM,
  title={Confinement by biased velocity jumps: aggregation of Escherichia coli},
  author={Calvez, V. and Raoul, G. and Schmeiser, C.},
  journal={Kinet. Relat. Mod.},
  year={2015},
volume = {8},
number = {4},
pages = {651-666}
}

@article{pettersson1993weak,
  title={On weak and strong convergence to equilibrium for solutions to the linear {B}oltzmann equation},
  author={Pettersson, R.},
  journal={J. Stat. Phys.},
  volume={72},
  pages={355--380},
  year={1993},
  publisher={Springer}
}

@article{pettersson1983existence,
  title={Existence theorems for the linear, space-inhomogeneous transport equation},
  author={Pettersson, R.},
  journal={IMA J. Appl. Math.},
  volume={30},
  number={1},
  pages={81--105},
  year={1983},
  publisher={Oxford University Press}
}

@article{pettersson1992convergence,
  title={On convergence to equilibrium for solutions to the linear, space-inhomogeneous {B}oltzmann equation},
  author={Pettersson, R.},
  journal={Nucl. Sc. Eng.},
  volume={112},
  number={4},
  pages={375--382},
  year={1992},
  publisher={Taylor \& Francis}
}

@article{degond2000IUMJ,
author = {Degond, P. and Goudon, T. and Poupaud, F.},
pages={1175--1198},
year={2000},
publisher={JSTOR},
title = {Diffusion Limit For Non Homogeneous And Non-Micro-Reversible Processes},
volume = {49},
journal = {Indiana Univ. Math. J.},
doi = {10.1512/iumj.2000.49.1936}
}

@article{mellet2008ARMA,
author = {Mellet, A. and Mischler, S. and Mouhot, C.},
volume={199},
  number={2},
  pages={493--525},
  year={2011},
title = {Fractional Diffusion Limit for Collisional Kinetic Equations},
journal = {Arch. Ration. Mech. Anal.},
doi = {10.1007/s00205-010-0354-2}
}

@article{canizo2020KRM,
  title={Hypocoercivity of linear kinetic equations via {H}arris's {T}heorem},
  author={Ca{\~n}izo, J. A and Cao, C. and Evans, J. and Yolda{\c{s}}, H.},
  journal={Kinet. Relat. Models},
  volume={13},
  number={1},
  pages={97--128},
  year={2020},
  publisher={Kinetic and Related Models}
}

@article{conte2023SIAP,
author = {Conte, M. and Loy, N.},
title = {A Non-Local Kinetic Model for Cell Migration: A Study of the Interplay Between Contact Guidance and Steric Hindrance},
journal = {SIAM J. Appl. Math.},
volume={84},
  number={3},
  pages={S429--S451},
  year={2023},
  publisher={SIAM}
}

@Article{loy2019JMB,
author="Loy, N. and Preziosi, L.",
title="Kinetic models with non-local sensing determining cell polarization and speed according to independent cues",
journal="J. Math. Biol.",
volume = "80",
pages = "373-421",
year="2020"
}

@article{desvillettes2006,
author = {Desvillettes, L.},
title = {Hypocoercivity: The example of linear transport},
journal = {Contemp. Math.}, 
volume = {409},
year = {2006}, 
pages = {33-53}
}

@article{charras2014nat,
	author = {Charras, G. and Sahai, E.},
	title = {Physical influences of the extracellular
environment on cell migration},
	journal = {Nat. Rev.},
	volume = {15},
	year = {2014},
	pages = {813-824}
}

@article{Taufalele2019PLOSONE,
author = {P. Taufalele and J. Vanderburgh and A. Munoz and M. Zanotelli and C. Reinhart-King},
year = {2019},
pages = {e0216537},
title = {Fiber alignment drives changes in architectural and mechanical features in collagen matrices},
volume = {14},
journal = {PLoS One},
}

@article{Yamada2019IJEP,
  title={Extracellular matrix dynamics in cell migration, invasion and tissue morphogenesis},
  author={Yamada, K. M. and Collins, J. W. and Cruz Walma, D. A. and Doyle, A. D. and Morales, S.G. and Lu, J. and Matsumoto, K. and Nazari, S. S. and Sekiguchi, R. and Shinsato, Y. and others},
  journal={Int. J. Exp. Pathol.},
  volume={100},
  number={3},
  pages={144--152},
  year={2019},
  publisher={Wiley Online Library}
}

@article{Lang,
	author = {Lang, N.R. and Skodzek, K. and Hurst, S. and Mainka, A. and Steinwachs, J. and Schneider, J. and Aifantis, K.E. and Fabry, B.},
	title = {Biphasic response of cell invasion to matrix stiffness in three-dimensional biopolymer networks.},
	journal = {Acta Biomater.},
	year = {2015}, 
	volume = {13},
	pages = {61-67}
	}

@article{Condor,
  title={Breast cancer cells adapt contractile forces to overcome steric hindrance},
  author={C{\'o}ndor, N. and Mark, C. and Gerum, R. C. and Grummel, N. C. and Bauer, A. and Garc{\'\i}a-Aznar, J. and Fabry, B.},
  journal={Biophys. J.},
  volume={116},
  number={7},
  pages={1305--1312},
  year={2019},
  publisher={Elsevier}
}

@article{BC,
  title={A kinetic eikonal equation},
  author={Bouin, E. and Calvez, V.},
  journal={C. R. Math.},
  volume={350},
  number={5-6},
  pages={243--248},
  year={2012},
  publisher={Elsevier}
}

@book{gardiner,
    author = {Gardiner, C.},
    title = {Stochastic methods (Vol. 4).},
    publisher = { Springer Berlin Heidelberg},
    year = {2009}
}

@book{Risken1996,
author="Risken, H.",
title="{F}okker-{P}lanck {E}quation",
bookTitle="The Fokker-Planck Equation: Methods of Solution and Applications",
year="1996",
publisher="Springer Berlin Heidelberg",
address="Berlin, Heidelberg",
abstract="As shown in Sects. 3.1, 2 we can immediately obtain expectation values for processes described by the linear Langevin equations (3.1, 31). For nonlinear Langevin equations (3.67, 110) expectation values are much more difficult to obtain, so here we first try to derive an equation for the distribution function. As mentioned already in the introduction, a differential equation for the distribution function describing Brownian motion was first derived by Fokker [1.1] and Planck [1.2]: many review articles and books on the Fokker-Planck equation now exist [1.5 -- 15].",
isbn="978-3-642-61544-3",
doi="10.1007/978-3-642-61544-3_4",
url="https://doi.org/10.1007/978-3-642-61544-3_4"
}

@misc{lorenzi2025,
      title={Phenotype-structuring of non-local kinetic models of cell migration driven by environmental sensing}, 
author={Lorenzi, T. and Loy, N. and Villa, C.},
      year={2025},
      eprint={2412.16258},
      archivePrefix={arXiv},
      primaryClass={q-bio.CB},
      url={https://arxiv.org/abs/2412.16258}, 
}

@article{provenzano2006collagen,
  title={Collagen reorganization at the tumor-stromal interface facilitates local invasion},
  author={P.P. Provenzano and K.W. Eliceir and J.M. Campbell and D.R. Inman and J.G. White and P.J. Keely},
  journal={BMC medicine},
  volume={4},
  number={1},
  pages={1--15},
  year={2006},
  publisher={BioMed Central}
}

@article{provenzano2008collagen,
  title={Collagen density promotes mammary tumor initiation and progression},
  author={P.P. Provenzano and D.R. Inman and K.W. Eliceiri and J.G. Knittel and L. Yan and  C.T. Rueden and J.G. White and P.J. Keely},
  journal={BMC medicine},
  volume={6},
  number={1},
  pages={1--15},
  year={2008},
  publisher={BioMed Central}
}

@Article{Filbet_Perthame,
    author = {Filbet, F. and Laurencot, P. and Perthame, B.},
    year = {2005},
    pages = {189-207},
    title = {Derivation of hyperbolic models for chemosensitive movement}, 
    journal = {J. Math. Biol.},
    volume = {50},
}

\end{document}